\newcommand{\TT}{\mathcal{T}}
\newcommand{\GG}{\mathcal{G}}
\newcommand{\EE}{\mathcal{E}}
\newcommand{\MM}{\mathcal{M}}
\newcommand{\PP}{\mathcal{P}}
\newcommand{\II}{\mathcal{I}}
 \newcommand{\NP}{\ensuremath{\mathsf{NP}}\xspace}
\newcommand{\FPT}{\ensuremath{\mathsf{FPT}}\xspace}
\newcommand{\TempMatch}{\textsc{Temporal Matching}\xspace}
\newcommand{\TM}{\textsc{TM}\xspace}
\newcommand{\EditTempClique}{\textsc{Editing to Temporal Cliques}\xspace}
\newcommand{\ETC}{\textsc{ETC}\xspace}
\newcommand{\CompTempClique}{\textsc{Completion to Temporal Cliques}\xspace}
\newcommand{\CTC}{\textsc{CTC}\xspace}
 \newtheorem{theorem}{Theorem}[section]
 \newtheorem{lemma}[theorem]{Lemma}
 \newtheorem{clm}[theorem]{Claim}
 \newtheorem{remark}[theorem]{Remark}
\newenvironment{poc}[1][\unskip]{\begin{proof}[Proof of Claim~#1]}{\end{proof}}
\newcommand{\probname}{\textsc{Editing to Temporal Cliques\xspace}}
\begin{document}

\title{\textbf{A New Temporal Interpretation of Cluster Editing}\footnote{A preliminary version of this paper was presented at IWOCA 2022 \cite{BocciCMS22}.}}

 \author[1]{Cristiano Bocci}
 \author[1]{Chiara Capresi}
 \author[2]{Kitty Meeks}
 
 \author[2,3]{John Sylvester}

 \affil[1]{Dipartimento di Ingegneria dell'Informazione e Scienze Matematiche, Universit\`a degli Studi di Siena\\ 
 	
 	 \texttt{cristiano.bocci@unisi.it}, \texttt{capresi3@student.unisi.it}\medskip
}

  \affil[2]{School of Computing Science, University of Glasgow\\
  	
  	 \texttt{kitty.meeks@glasgow.ac.uk}}
     \affil[3]{Department of Computer Science, University of Liverpool\\ \texttt{john.sylvester@liverpool.ac.uk}}
\date{\vspace{-1em}}

\maketitle

\begin{abstract}
The \NP-complete graph problem \textsc{Cluster Editing} seeks to transform a static graph into a disjoint union of cliques by making the fewest possible edits to the edges.  We introduce a natural interpretation of this problem in temporal graphs, whose edge sets change over time. This problem is \NP-complete even when restricted to temporal graphs whose underlying graph is a path, but we obtain two polynomial-time algorithms for restricted cases. In the static setting, it is well-known that a graph is a disjoint union of cliques if and only if it contains no induced copy of $P_3$; we demonstrate that no general characterisation involving sets of at most four vertices can exist in the temporal setting, but obtain a complete characterisation involving forbidden configurations on at most five vertices.  This characterisation gives rise to an \FPT~algorithm parameterised simultaneously by the permitted number of modifications and the lifetime of the temporal graph.
\end{abstract}

\textbf{Keywords:} Temporal graphs, cluster editing, graph clustering, parameterised complexity
\section{Introduction}

The \textsc{Cluster Editing} problem \cite{Ben-DorSY99} encapsulates one of the simplest and best-studied notions of graph clustering: given a graph $G$, the goal is to decide whether it is possible to transform $G$ into a disjoint union of cliques -- a \emph{cluster graph} -- by adding or deleting a total of at most $k$ edges.  While this problem is known to be \NP-complete in general \cite{CE_NP,CE_NP2,CE_NP3,CE_NP_krivanek}, it has been investigated extensively through the framework of parameterised complexity, and admits efficient parameterised algorithms with respect to several natural parameters \cite{BlasiusFGHHSWW22,cao_kernel,ChenM12,FroeseKN22,golden_ratio,above_modification,few_clusters,large_clusters,Abu-Khzam17}. The \textsc{Cluster Editing} problem was the subject of the PACE 2021 challenge \cite{Pace}. {A lot of work has been done on approximation algorithms for the problem, in this community the problem is known as correlation clustering \cite{CE_NP,AilonCN08,ChawlaMSY15,Cohen-AddadLN22}.}   

Motivated by the fact that many real-world networks of interest are subject to discrete changes over time, there has been much research in recent years into the complexity of graph problems on \emph{temporal graphs} \cite{AkridaMSR21,CasteigtsPS21,CasteigtsPS19,Erlebach0K21,MertziosMCS13}, which provide a natural model for networks exhibiting these kinds of changes in their edge-sets. One direction in particular has been to generalise static graph properties to the temporal setting \cite{AkridaCGKS19,AkridaMSZ20,HaagMNR22,MarinoS22,MertziosMZ21,ZschocheFMN20}. A first attempt to generalise \textsc{Cluster Editing} to the temporal setting was made by Chen, Molter, Sorge and Such\'y \cite{multistage}, who recently introduced the problem \textsc{Temporal Cluster Editing}: here the goal is to ensure that graph appearing at each timestep is a cluster graph, subject to restrictions on both the number of modifications that can be made at each timestep and the differences between the cluster graphs created at consecutive timesteps.  A dynamic version of the problem, \textsc{Dynamic Cluster Editing}, has also recently been studied by Luo, Molter, Nichterlein and Niedermeier \cite{dynamic}: here we are given a solution to a particular instance, together with a second instance (which will be encountered at the next timestep) and are asked to find a solution for the second instance that does not differ too much from the first.

We take a different approach, using a notion of temporal clique that already exists in the literature \cite{B-K_algorithm,maximal_cliques}.  Under this notion, a temporal clique is specified by a vertex-set and a time-interval, and we require that each pair of vertices is connected by an edge frequently, but not necessarily continuously, during the time-interval.  We say that a temporal graph is a \emph{cluster temporal graph} if it is a union of temporal cliques that are pairwise \emph{independent}: here we say that two temporal cliques are independent if either their vertex sets are disjoint, or their time intervals are sufficiently far apart (this is similar to the notion of independence used to define temporal matchings \cite{maximum_matchings}).  Equipped with these definitions, we introduce a new temporal interpretation of cluster editing, which we call \textsc{Editing to Temporal Cliques} (ETC): the goal is to add/delete a collection of at most $k$ edge appearances so that the resulting graph is a cluster temporal graph.

We prove that \ETC is \NP-hard, even when the underlying graph is a path; this reduction, however, relies on edges appearing at many distinct timesteps, and we show that, when restricted to paths, \ETC is solvable in polynomial time when the maximum number of timesteps at which any one edge appears in the graph is bounded.  It follows immediately from our hardness reduction that the variant of the problem in which we are only allowed to delete, but not add, edge appearances, is also \NP-hard in the same setting.  On the other hand, the corresponding variant in which we can only add edges, which we call \textsc{Completion to Temporal Cliques} (CTC), admits a polynomial-time algorithm on arbitrary inputs.

In the static setting, a key observation -- which gives rise to a simple \FPT~search-tree algorithm for \textsc{Cluster Editing} parameterised by the number of modifications -- is the fact that a graph is a cluster graph if and only if it contains no induced copy of the three-vertex path $P_3$ (sometimes referred to as a \emph{conflict triple} \cite{CE_survey}).  We demonstrate that cluster temporal graphs cannot be characterised by any local condition that involves only sets of at most four vertices; however, in the most significant technical contribution of this paper, we prove that a temporal graph is a cluster temporal graph if and only if every subset of at most five vertices induces a cluster temporal graph.  Using this characterisation, we obtain an \FPT~algorithm for \ETC parameterised simultaneously by the number of modifications and the lifetime (number of timesteps) of the input temporal graph.

\subsection{Related Work}

\textsc{Cluster Editing} is known to be \NP-complete \cite{CE_NP,CE_NP2,CE_NP3,CE_NP_krivanek}, even for graphs with maximum degree six and when at most four edge modifications incident to each vertex are allowed \cite{loc_bound}.  On the positive side, the problem can be solved in polynomial time if the input graph has maximum degree two \cite{golden_ratio} or is a unit interval graph \cite{unit_int}.  Further complexity results and heuristic approaches are discussed in a survey article \cite{CE_survey}.

Variations of the problem in which only deletions or additions of edges respectively are allowed have also been studied.  The version in which edges can only be added is trivially solvable in polynomial time, since an edge must be added between vertices  $u$ and $v$ if and only if $u$ and $v$ belong to the same connected component of the input graph but are not already adjacent.  The deletion version, on the other hand, remains \NP-complete even on $4$-regular graphs, but is solvable in polynomial time on graphs with maximum degree three \cite{loc_bound}.

\textsc{Cluster Editing} has received substantial attention from the parameterised complexity community, with many results focusing on the natural parameterisation by the number $k$ of permitted modifications.  Fixed-parameter tractability with respect to this parameter can easily be deduced from the fact that a graph is a cluster graph if and only if it contains no induced copy of the three-vertex path $P_3$, via a search tree argument; this approach has been refined repeatedly in non-trivial ways, culminating in an algorithm running in time $\mathcal{O}(1.62^k + m + n)$ for graphs with $n$ vertices and $m$ edges \cite{golden_ratio}.  More recent work has considered as a parameter the number of modifications permitted above the lower bound implied by the number of modification-disjoint copies of $P_3$ (copies of $P_3$ such that no two share either an edge or a non-edge) \cite{above_modification}.  Other parameters that have been considered include the number of clusters \cite{few_clusters} and a lower bound on the permitted size of each cluster \cite{large_clusters}.

\subsection{Organisation of the Paper}
We begin in Section \ref{sec:preliminaries} by introducing some notation and definitions, and formally defining the \ETC problem. In Section \ref{sec:generallems} we collect several results and fundamental lemmas which are either used in several other sections or may be of independent interest. Section \ref{sec:paths} concerns restricting the problem to temporal graphs which have a path as the underlying graph: in Section \ref{sec:pathhard} we show that \ETC is \NP-hard even in this setting, however in Section \ref{sec:pathspoly} we then show that if we further restrict temporal graphs on paths to only have a bounded number of appearances of each edge then \ETC is solvable in polynomial time. In Section \ref{sec:additiononly} we consider a variant of the \ETC problem where edges can only be added, and show that this can be solved in polynomial time on any temporal graph. Finally, in Section \ref{sec:characterisation} we present our main result which gives a characterisation of cluster temporal graphs by induced temporal subgraphs on five vertices. We prove this result in Section \ref{sec:charproof} before applying it in Section \ref{sec:charalg} to show that \ETC is in \FPT~when parameterised by the lifetime of the temporal graph and number of permitted modifications.

\section{Preliminaries}\label{sec:preliminaries}

In this section we first give some basic definitions and introduce many new notions that are key to the paper, before formally specifying the \ETC problem.

\subsection{Notation and Definitions}\label{sec:notation}

\paragraph{Elementary Definitions.} Let $\mathbb{N}$ denote the natural numbers (with zero) and $\mathbb{Z}^{+}$ denote the positive integers. 
We refer to a set of consecutive natural numbers $[i,j]=\{i,i+1,\dots ,j\}$ for some $i,j\in \mathbb{N}$ with $i\leq j$ as an \textit{interval}, and to the number~$j-i+1$ as the \textit{length} of the interval. Given an undirected (static) graph $G=(V,E)$ we denote its vertex-set by $V=V(G)$ and edge-set by $E=E(G)\subseteq \binom{V}{2}$. For two {edges} $ e=xy $ and $e'= wz $ we slightly abuse notation by taking $e\cap e'$ to mean $\{x,y\} \cap \{w,z\}$ as this is more convenient. We say that $b_n=\mathcal{O}(a_n)$ if there exist constants $C$ and $n_0$ such that $|b_n| \leq C\cdot a_n$ for $n\geq n_0$.  We work in the word RAM model of computation, so that arithmetic operations on integers represented using a number of bits logarithmic in the total input size can be carried out in time $\mathcal{O}(1)$.

\paragraph{Temporal Graphs.} A {\it temporal graph} $\mathcal{G}=(G,\TT)$ is a pair consisting of a static (undirected) underlying graph $G=(V,E)$ and a labeling function $\TT:E \to 2^{\mathbb{N}}\setminus \{\emptyset\}$. For a static edge $e\in E$, we think of $\mathcal{T}(e)$ as the set of time appearances of $e$ in $\mathcal{G}$ and let $\mathcal E(\mathcal G) := \{ (e,t) :  e \in E$ and $t \in \TT(e) \}$ denote the set of edge appearances, or \textit{time-edges}, in a temporal graph $\mathcal G$.  We consider temporal graphs $\mathcal{G}$ with \emph{finite
lifetime} given by $T(\mathcal G):=\max \{t\in \TT(e):  e\in E\}$, that is, there is a maximum label assigned by~$\TT$ to an
edge of~$G$. {Unless otherwise stated,} we assume w.l.o.g.\ that $\min \{t\in \TT(e):  e\in E\}=1$. We denote the lifetime of $\mathcal G$ by $T$ when $\mathcal{G}$ is clear from the context. The \emph{snapshot}  of $\mathcal{G}$ \emph{at time}~$t$ is the static graph on $V$ with edge set $\{e\in E:  t\in \TT (e)\}$. Given temporal graphs $\mathcal{G}_1$ and $\mathcal{G}_2$, let $\mathcal{G}_1 \triangle \mathcal{G}_2$ be the set of time-edges appearing in exactly one of $\mathcal{G}_1$  or $\mathcal{G}_2$.  For the purposes of computation, we assume that $\mathcal{G}$ is given as a list of (static) edges together with the list of times $\mathcal{T}(e)$ at which each static edge appears, so the size of $\mathcal{G}$ is $\mathcal{O}(\max\{|\mathcal{E}|,|V|\}) = \mathcal{O}(|V|^2T)$. 

\paragraph{Templates and Cliques.} For an edge $e\in E(G)$ in the underlying graph of a temporal graph $\mathcal{G}=(G,\TT)$, an interval $[a,b]$ {with $a\leq  b$}, and $\Delta_1 \in \mathbb{Z}^+$, we say that $e$ is \textit{$\Delta_1$-dense} in $[a,b]$ {if $\mathcal{T}(e) \cap [\tau,\tau + \Delta_1-1]\neq \emptyset $ for all $\tau \in [a,\max\{a,b-\Delta_1+1\}]$}. We define a {\it template} to be a pair $C=(X,[a,b])$  where $X$ is a set of vertices and $[a,b]$ is an interval. {We restrict templates to be non-empty by requiring $a\leq b$ and $|X|\geq 2$, this is a technical restriction but nothing is lost when we come to define temporal cliques and this avoids the existence of empty templates appearing in some of our definitions.}  For a set $S$ of time-edges we let $V(S)$ denote the set of all endpoint of time-edges in $S$, and the \textit{lifetime} $L(S)=[s,t]$, where $s=\min\{s: (e,s)\in S\}$ and $t=\max\{t: (e,t)\in S\}$. Furthermore, we say that $S$ {\it generates} the template $(V(S),L(S))$. A set $S\subset \mathcal{E}(\mathcal{G})$ forms a \textit{$\Delta_1$-temporal clique} if for every pair $x,y\in V(S)$ of vertices in the template $(V(S),L(S))$ generated by $S$, the edge $xy$ is $\Delta_1$-dense in $L(S)$. 

{The idea is to capture the notion of vertices being associated but without the strong restriction that a temporal clique must be a clique in every time step. For an example, one may think of a `clique' of school friends where each vertex is a child and a time-edge represents an interaction between two children on that day. If we take one term/semester as the lifetime of the clique then the children may not see each other at the weekends or the half-term break, but since these breaks from interaction are not so long (say at most a week) then if we would set $\Delta_1$ to be a week then the children would be considered as a $\Delta_1$-temporal clique.}

\paragraph{Independence and Cluster Temporal Graphs.} For $\Delta_2 \in \mathbb{Z}^+$ we say that two templates $(X,[a,b])$ and $(Y,[c,d])$ are \textit{$\Delta_2$-independent} if \[X\cap Y=\emptyset\quad \text{ or } \quad\min_{s \in [a,b], t\in [c,d]}|s-t|\geq \Delta_2.\] Let $\mathfrak{T}(\mathcal{G},\Delta_2)$ denote the class of all collections of pairwise $\Delta_2$-independent templates on $V(\GG)$ where templates in the collection have lifetimes $[a,b]$ satisfying $1\leq a\leq b\leq T(\mathcal{G})$. Two sets $S_1,S_2$ of time-edges are \textit{$\Delta_2$-independent} if the templates they generate are $\Delta_2$-independent. As a special case of this, two time-edges $(e,t)$, $(e',t')$ are \textit{$\Delta_2$-independent} 
if $e \cap e' = \emptyset$ or~$|t - t'| \geq \Delta_2$. {This definition of $\Delta_2$-independence was introduced (for time-edges) in \cite{maximum_matchings} and, in our opinion, is the most natural way to relax total independence of time-edges (not sharing any endpoints).} 

A temporal graph $\mathcal{G}$ \textit{$\Delta_1$-realises a collection} $\{(X_i,[a_i,b_i])\}_{i\in[k]}\in \mathfrak{T}(\mathcal{G},\Delta_2)$ of pairwise $\Delta_2$-independent templates if 
\begin{itemize}
\item for each $(xy, t)\in \mathcal{E}$ there exists $i\in [k]$ such that $x,y\in X_i$ and $t\in[a_i,b_i]$,
    \item for each $i\in [k]$ and $x,y\in X_i$, the edge $xy$ is $\Delta_1$-dense in $[a_i,b_i]$. 
\end{itemize}
If there exists some $\mathcal{C}\in \mathfrak{T}(\mathcal{G},\Delta_2)$ such that $\mathcal{G}$ $\Delta_1$-realises $\mathcal{C}$ then we call $\mathcal{G}$ a $(\Delta_1,\Delta_2)$-\textit{cluster temporal graph}. \begin{remark}\label{rem:deltas}Throughout we assume that $\Delta_2> \Delta_1$, since if $\Delta_2\leq \Delta_1$ then one $\Delta_1$-temporal clique can realise many different sets of $\Delta_2$-independent templates. For example, if $\Delta_2=\Delta_1$ then the two time-edges $(e,t)$ and $(e,t+\Delta_1)$ are $\Delta_2$-independent but also $e$ is $\Delta_1$-dense in the interval $[t,t+\Delta_1]$. \end{remark}

\paragraph{Induced, Indivisible, and Saturated Sets.}  Let $S$ be a set of time-edges and $A$ be a set of vertices, then we let $S[A]=\{(xy,t) \in S : x,y\in A \}$ be the set of all the time-edges in $S$ \textit{induced} by $A$. Similarly, given a temporal graph $\mathcal{G}$ and $A \subset V$, we let $\mathcal{G}[A]$ be the temporal graph with vertex set $A$ and temporal edges $\mathcal{E}[A]$. For an interval $[a,b]$ we let $\mathcal{G}|_{[a,b]}$ be the temporal graph on $V$ with the set of time-edges $\{(e,t)\in \mathcal{E}(\mathcal{G}) : t \in [a,b]\}$. We will say that a set $S$ of time-edges is {\it $\Delta_2$-indivisible} if there does not exist a pairwise $\Delta_2$-independent collection $\{S_1,\dots, S_k\}$ of time-edge sets satisfying $\cup_{i\in[k]}S_i =S$. A $\Delta_2$-indivisible set $S$ is said to be $\Delta_2$-\textit{saturated in $\mathcal{G}$} if it is {inclusion-maximal, that is,} after including any other time-edge of $\mathcal{E}(\mathcal G)$ it would cease to be $\Delta_2$-indivisible.

\subsection{Problem Specification}\label{sec:probspec}

\paragraph{Editing to Temporal Cliques.} We can now introduce the \ETC problem which, given as input a temporal graph $\mathcal G$ on a vertex set $V$ and natural numbers $k, \Delta_1, \Delta_2 \in \mathbb{Z}^+$, asks whether it is possible to transform $\mathcal G$ into a $(\Delta_1,\Delta_2)$-cluster temporal graph by applying at most $k$ modifications (addition or deletion) of time-edges {within the lifetime of $\mathcal{G}$}. We now give a formal definition of this problem.

\begin{framed}
	\noindent
 \EditTempClique (\ETC):\\ 
\textit{Input:} A temporal graph $\mathcal{G}=(G,\mathcal{T})$ and positive integers $k,\Delta_1,\Delta_2 \in \mathbb{Z}^+$.\\
\textit{Question:} { Does there exist a set $\Pi \subseteq \binom{V}{2}\times [T(\mathcal{G})]$ with $ |\Pi|\leq k$ such that the temporal graph on $V$ with time-edges $\mathcal{E}(\mathcal{G}) \triangle \Pi$ is a $(\Delta_1,\Delta_2)$-cluster temporal graph?} 
\end{framed} 

We begin with a simple observation on the hardness of \ETC which shows {that} we can only hope to gain tractability in settings where the static version is tractable. However, we shall see in Section \ref{sec:pathhard} that \ETC is hard on temporal graphs with the path as the underlying graph, and thus the converse does not hold.

\begin{restatable}{proposition}{prop:NPhard}\label{prop:NPhard}
Let $\mathcal{C}$ be a class of graphs on which \textsc{Cluster Editing} is \NP-complete.  Then \ETC is \NP-complete on the class of temporal graphs $\{(G,\mathcal{T}): G \in \mathcal{C}\}$.
\end{restatable}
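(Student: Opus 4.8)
The plan is to establish membership in \NP\ and then prove hardness by a reduction from \textsc{Cluster Editing} on the class $\mathcal{C}$. Membership in \NP\ is routine: a certificate consists of a modification set $\Pi$ with $|\Pi|\le k$ together with a realising collection of templates for $\mathcal{G}\triangle\Pi$; since such a collection can be taken to have polynomial size, one checks in polynomial time that $\mathcal{G}\triangle\Pi$ is realised by it, verifying $\Delta_1$-density of each relevant edge and pairwise $\Delta_2$-independence of the templates. The real substance of the proof is the hardness reduction, and in particular its correctness.

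For the reduction, given an instance $(G,k)$ of \textsc{Cluster Editing} with $G\in\mathcal{C}$, I would build the temporal graph $\mathcal{G}=(G,\mathcal{T})$ in which every edge of $G$ appears only at time $1$ (so $\mathcal{T}(e)=\{1\}$ and the lifetime is $T=1$), and set $\Delta_1=1$ and $\Delta_2=2$; this is clearly computable in polynomial time. The choice $\Delta_1=1$ is important, because with $\Delta_1=1$ an edge is $\Delta_1$-dense in an interval $[a,b]$ precisely when it appears at \emph{every} timestep of $[a,b]$, which is exactly the rigidity I want to exploit below.

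The easy direction is to turn a size-$k$ solution for $(G,k)$ into one for the temporal instance: if $k$ edits make $G$ into a cluster graph $H$ with clique components $C_1,\dots,C_m$, apply the same edits at time $1$; the resulting single-snapshot temporal graph is realised by the templates $(C_j,[1,1])$, which are pairwise vertex-disjoint (hence $\Delta_2$-independent) and within each of which every pair is trivially $1$-dense in $[1,1]$. The converse is where the only genuine obstacle lies: since \ETC\ permits \emph{adding} time-edges at arbitrary timesteps, one must rule out that this extra temporal freedom makes the temporal instance strictly cheaper than the static one. I would overcome this by projecting any temporal solution onto the snapshot at time $1$. Given $\Pi$ with $|\Pi|\le k$ such that $\mathcal{G}'=\mathcal{G}\triangle\Pi$ is a $(1,2)$-cluster temporal graph realised by templates $\{(X_i,[a_i,b_i])\}$, consider those templates whose interval contains time $1$. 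Any two such intervals meet at $1$, so their minimum time-distance is $0<\Delta_2$, which forces the corresponding vertex sets to be disjoint; and since $\Delta_1=1$, every pair inside such an $X_i$ must have its edge present at time $1$. Hence the snapshot $H$ of $\mathcal{G}'$ at time $1$ is a disjoint union of cliques, i.e.\ a cluster graph.

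It then remains to bound the static edit distance from $G$ to $H$. Because all edges of $\mathcal{G}$ lie at time $1$, taking the snapshot at time $1$ commutes with $\triangle$, so the edge set of $H$ equals $E(G)\triangle\{e:(e,1)\in\Pi\}$; thus $G$ can be turned into the cluster graph $H$ using at most $|\{e:(e,1)\in\Pi\}|\le|\Pi|\le k$ edits, yielding a solution to $(G,k)$ and closing the equivalence. The step I expect to require the most care is this converse direction: arguing that modifications made at timesteps other than $1$ can never be exploited to beat the static optimum. The snapshot projection, together with the choice $\Delta_1=1$, is precisely what makes that argument transparent.
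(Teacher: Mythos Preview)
Your proposal is correct and follows essentially the same reduction as the paper: encode a static instance as a single-snapshot temporal graph with every edge appearing at time $1$, and observe that edits at time $1$ correspond bijectively to static edits. Your choice $\Delta_2=2$ is in fact more consistent with the paper's standing convention $\Delta_2>\Delta_1$ (the paper's own proof uses $\Delta_1=\Delta_2=1$), and your explicit snapshot-projection argument for the converse direction spells out what the paper dismisses as ``straightforward to verify''; otherwise the two arguments coincide.
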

 
\begin{proof}[Proof of Proposition \ref{prop:NPhard}]
We obtain a reduction from \textsc{Cluster Editing} to \ETC as follows.  Given an instance $(G,k)$ of \textsc{Cluster Editing}, where $G \in \mathcal{C}$, we construct an instance $(\mathcal{G},1,1,k)$ of \ETC by setting $\mathcal{G} = (G,\mathcal{T})$ where $\mathcal{T}(e) = \{1\}$ for all $e \in E(G)$.  It is straightforward to verify that a set $\Pi$ of time-edge modifications makes $\mathcal{G}$ a $(1,1)$-cluster temporal graph if and only if the set of edge modifications $\{e: (e,1) \in \Pi\}$ modifies $G$ into a cluster graph.
  \end{proof}

\section{General Observations about \ETC}\label{sec:generallems}
In this section we collect many fundamental results on the structure of temporal graphs and the cluster editing problem which we will need later in the paper. {Many of these results concern $\Delta_2$-saturated sets, the results in this section (and Section \ref{sec:additiononly}) indicate that $\Delta_2$-saturated take something of the role of connected components in the static setting.} {We begin with three elementary lemmas.}  

\begin{lemma}\label{lem:IndivisibleIntersection}
If two $\Delta_2$-indivisible sets $S_1$ and $S_2$ of time-edges satisfy $S_1\cap S_2\neq \emptyset$, then $S_1 \cup S_2$ is $\Delta_2$-indivisible.
\end{lemma}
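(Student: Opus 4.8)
The plan is to argue by contradiction. Suppose $S_1 \cup S_2$ were $\Delta_2$-divisible, witnessed by a collection $\{T_1,\dots,T_k\}$ of nonempty, pairwise $\Delta_2$-independent time-edge sets with $\bigcup_{i\in[k]}T_i = S_1\cup S_2$ and $k\ge 2$. The goal is then to derive a contradiction with the assumption that both $S_1$ and $S_2$ are $\Delta_2$-indivisible.

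First I would record two elementary observations about $\Delta_2$-independence. (i) Pairwise $\Delta_2$-independent sets are pairwise disjoint: if a time-edge $(xy,t)$ lay in both $T_i$ and $T_j$, then $x,y$ would be common to $V(T_i)$ and $V(T_j)$ and $t$ would lie in both lifetimes, so the two generated templates would share a vertex yet have lifetimes at distance $0<\Delta_2$, contradicting independence. Hence $\{T_1,\dots,T_k\}$ is in fact a partition of $S_1\cup S_2$. (ii) $\Delta_2$-independence is monotone under taking subsets: if $T_i'\subseteq T_i$ and $T_j'\subseteq T_j$, then $V(T_i')\subseteq V(T_i)$, and the lifetime $L(T_i')$ is an interval contained in $L(T_i)$ (its endpoints $\min$/$\max$ lie between those of $T_i$); thus both the vertex-disjointness condition and the interval-distance condition are preserved or strengthened, so $T_i'$ and $T_j'$ remain $\Delta_2$-independent.

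Using these, I would restrict the partition to $S_1$: the sets $\{T_i\cap S_1 : i\in[k]\}$ are pairwise $\Delta_2$-independent by (ii) and cover $S_1$. Because $S_1$ is $\Delta_2$-indivisible, at most one of them is nonempty, so all of $S_1$ lies in a single block, say $S_1\subseteq T_a$; symmetrically $S_2\subseteq T_b$ for a single block $T_b$. Now pick any time-edge in $S_1\cap S_2$, which is nonempty by hypothesis: it lies in $T_a\cap T_b$, and since the blocks are pairwise disjoint by (i) this forces $a=b$. Hence $S_1\cup S_2\subseteq T_a$, but $\{T_1,\dots,T_k\}$ partitions $S_1\cup S_2$ into $k\ge 2$ nonempty blocks, so some block $T_i$ with $i\neq a$ would be a nonempty subset of $T_a$ that is also disjoint from $T_a$, which is impossible. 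This contradiction shows $S_1\cup S_2$ is $\Delta_2$-indivisible.

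The bulk of the argument is bookkeeping once observations (i) and (ii) are in hand; the only point genuinely needing care is (ii), and specifically verifying that shrinking the two intervals cannot decrease their separation below $\Delta_2$ --- this is where I would be most careful to get the direction of the inequality right. Everything else follows routinely from the fact that independence descends to subsets and that independent blocks are disjoint.
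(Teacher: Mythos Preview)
Your proof is correct and follows essentially the same contradiction strategy as the paper's own proof: assume a nontrivial $\Delta_2$-independent decomposition of $S_1\cup S_2$, observe that each $S_\ell$ must lie in a single block by indivisibility, and use $S_1\cap S_2\neq\emptyset$ to force both into the same block. The only difference is that you make explicit the two facts the paper leaves implicit---that independent sets are disjoint (your~(i)) and that independence is inherited by subsets (your~(ii))---which is exactly what is needed to justify the paper's phrase ``each of them must belong entirely to a same subset of the partition.''
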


\begin{proof}
Let us assume by contradiction that $S_1 \cup S_2$ is not $\Delta_2$-indivisible. Thus, there exist a partition of $S_1 \cup S_2$ into at least two {non-empty} $\Delta_2$-independent subsets. Now, because $S_1$ and $S_2$ are respectively $\Delta_2$-indivisible, then each of them must belong entirely to a same subset of the partition. However, since $S_1\cap S_2\neq \emptyset$, then $S_1$ and $S_2$ must belong both to the same set of the partition. Thus, this contradicts our assumption that the partition contains more than one set.
  \end{proof}

\begin{lemma} \label{sat_set}
Let $\mathcal G$ be a temporal graph, $S \subseteq \mathcal E(\mathcal G)$ be a $\Delta_2$-saturated set of time-edges, and $\mathcal K$ a $\Delta_1$-temporal clique such that $\mathcal K \subseteq \mathcal E(\mathcal G)$ and $\mathcal K \cap S \not = \emptyset$. Then, $\mathcal K \subseteq S$.
\end{lemma}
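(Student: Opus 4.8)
The plan is to fix a time-edge $(e_0,t_0)\in \mathcal K\cap S$ (which exists since $\mathcal K\cap S\neq\emptyset$) and show that \emph{every} $(e',t')\in\mathcal K$ lies in $S$. I would reduce this to two ingredients. First, a \emph{closure property} of saturated sets: if $(f,\tau)\in S$ and $(g,\sigma)\in\mathcal E(\mathcal G)$ satisfy $f\cap g\neq\emptyset$ and $|\tau-\sigma|<\Delta_2$ (call such a pair \emph{linked}), then $(g,\sigma)\in S$. Second, a \emph{connectivity property} of cliques: any two time-edges of $\mathcal K$ are joined by a chain of time-edges of $\mathcal E(\mathcal G)$ in which consecutive entries are linked. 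Granting both, I start from $(e_0,t_0)\in S$ and walk along the chain to $(e',t')$; each step stays inside $S$ by the closure property, so $(e',t')\in S$, and since $(e',t')$ was arbitrary this gives $\mathcal K\subseteq S$.

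For the closure property, suppose $(f,\tau)\in S$ and $(g,\sigma)\in\mathcal E(\mathcal G)$ are linked; I claim $S\cup\{(g,\sigma)\}$ is $\Delta_2$-indivisible, whence $(g,\sigma)\in S$ by saturation of $S$ (otherwise adjoining it would preserve indivisibility, contradicting the definition of saturation). Indeed, were $S\cup\{(g,\sigma)\}=T_1\sqcup\cdots\sqcup T_m$ a partition into $m\geq 2$ pairwise $\Delta_2$-independent sets, then, since $S$ is $\Delta_2$-indivisible, it would lie wholly in one part, say $S\subseteq T_1$, forcing $(g,\sigma)$ into a different part $T_2$ (else the other parts are empty). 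But $(f,\tau)\in T_1$ and $(g,\sigma)\in T_2$ share a vertex (as $f\cap g\neq\emptyset$), so $V(T_1)\cap V(T_2)\neq\emptyset$, while $\tau\in L(T_1)$ and $\sigma\in L(T_2)$ with $|\tau-\sigma|<\Delta_2$, so $L(T_1)$ and $L(T_2)$ are not $\Delta_2$-separated. Thus $T_1$ and $T_2$ are not $\Delta_2$-independent, a contradiction.

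For the connectivity property I would exploit $\Delta_1$-density together with the standing assumption $\Delta_2>\Delta_1$. Write $L(\mathcal K)=[s,b]$ and $X=V(\mathcal K)$; since $\mathcal K$ is a clique, every pair of vertices of $X$ spans an edge of $G$ that is $\Delta_1$-dense in $[s,b]$. Two elementary moves generate the chains. \emph{Time-moves}: for a fixed clique-edge $h$, $\Delta_1$-density forces consecutive appearances of $h$ in $[s,b]$ to differ by at most $\Delta_1<\Delta_2$, so all appearances of $h$ in this interval form a single linked chain. \emph{Edge-switches}: for two clique-edges $h,h'$ sharing a vertex, density produces an appearance of each inside any common length-$\Delta_1$ window (e.g.\ inside $[s,s+\Delta_1-1]$), and two such appearances lie within $\Delta_1-1<\Delta_2$ of one another and share a vertex, hence are linked. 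Since $X$ is a complete vertex set, its edges are connected under the ``share-a-vertex'' relation, so I can route from $e_0$ to $e'$ by edge-switches and then use time-moves along $e'$ to reach $t'$, assembling a linked chain from $(e_0,t_0)$ to $(e',t')$ through genuine time-edges of $\mathcal E(\mathcal G)$.

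I expect the connectivity property to be the main obstacle, precisely because $\Delta_1$-density is a statement about the \emph{global} appearance set $\mathcal T(h)$ rather than about the time-edges actually contained in $\mathcal K$: a clique $\mathcal K$ may itself be $\Delta_2$-divisible (its own time-edges can have large gaps even though the underlying edges are dense), so one cannot simply apply Lemma~\ref{lem:IndivisibleIntersection} to $\mathcal K$ and $S$. The chains must therefore be built from the dense global appearances, and the closure property is exactly what drags those intermediate appearances---and ultimately all of $\mathcal K$---into $S$. The only remaining care is the bookkeeping at the ends of $[s,b]$ (in particular short lifetimes, where $b-s<\Delta_1$ already makes all appearances pairwise linked), which is routine.
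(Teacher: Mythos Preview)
Your argument is correct and takes a genuinely different route from the paper's. The paper's proof is three lines: it asserts that any $\Delta_1$-temporal clique $\mathcal K$ is $\Delta_2$-indivisible, invokes Lemma~\ref{lem:IndivisibleIntersection} to conclude that $\mathcal K\cup S$ is $\Delta_2$-indivisible, and then appeals to the maximality built into saturation. You anticipate exactly this in your final paragraph and correctly flag that the first step is not justified under the paper's own definition: $\Delta_1$-density is measured in $\mathcal T(e)$, not in $\mathcal K$, so $\mathcal K$ may itself be $\Delta_2$-divisible (for instance $\mathcal K=\{(e,1),(e,5)\}$ with $\mathcal T(e)=\{1,\dots,5\}$, $\Delta_1=2$, $\Delta_2=3$ is a $\Delta_1$-temporal clique that splits into two $\Delta_2$-independent singletons). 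Your caution here is warranted; the paper's argument, as written, has a gap that yours avoids.

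What your chain argument buys is precisely the ability to route through intermediate appearances of clique edges lying in $\mathcal E(\mathcal G)\setminus\mathcal K$: the closure property absorbs each link into $S$ one step at a time, and the connectivity property (time-moves plus edge-switches, using $\Delta_1<\Delta_2$) guarantees such a route always exists between any two time-edges of $\mathcal K$. One might try to repair the paper's line by replacing $\mathcal K$ with some $\Delta_2$-indivisible superset inside $\mathcal E(\mathcal G)$ and then applying Lemma~\ref{lem:IndivisibleIntersection} once; but establishing indivisibility of any such superset ultimately requires the same chain reasoning you give, so there is no real shortcut.
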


\begin{proof} Because $\mathcal K$ is a $\Delta_1$-temporal clique it forms a $\Delta_2$-indivisible set. Then as $S$ is $\Delta_2$-indivisible and has non-empty intersection with $\mathcal{K}$ we see that $\mathcal{K}\cup S$ is $\Delta_2$-indivisible by Lemma \ref{lem:IndivisibleIntersection}. If $\mathcal{K}\not\subseteq S$ then this contradicts the assumption that $S$ is an {inclusion-maximal} $\Delta_2$-indivisible set {of time-edges}.
  \end{proof}

\begin{lemma}\label{lem:indivisibility}
Let $\mathcal{G}$ be any $(\Delta_1,\Delta_2)$-cluster temporal graph.  Then, any $\Delta_2$-indivisible set $S\subseteq \mathcal{E}(\mathcal{G})$ must be contained within a single $\Delta_1$-temporal clique.
\end{lemma}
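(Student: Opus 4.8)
The plan is to use the defining realisation of $\mathcal{G}$ to exhibit an explicit partition of $\mathcal{E}(\mathcal{G})$ into $\Delta_1$-temporal cliques, and then argue that a $\Delta_2$-indivisible set cannot be spread across more than one part of this partition. Since $\mathcal{G}$ is a $(\Delta_1,\Delta_2)$-cluster temporal graph, I would fix a collection $\{(X_i,[a_i,b_i])\}_{i\in[k]}\in\mathfrak{T}(\mathcal{G},\Delta_2)$ of pairwise $\Delta_2$-independent templates that $\mathcal{G}$ realises, and for each $i$ set $\mathcal{K}_i:=\{(xy,t)\in\mathcal{E}(\mathcal{G}):x,y\in X_i,\ t\in[a_i,b_i]\}$.

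First I would establish that the nonempty sets $\mathcal{K}_i$ form a partition of $\mathcal{E}(\mathcal{G})$ into pairwise $\Delta_2$-independent sets. Coverage is immediate from the first realisation condition. Disjointness follows because a time-edge $(xy,t)\in\mathcal{K}_i\cap\mathcal{K}_j$ with $i\neq j$ would force $X_i\cap X_j\neq\emptyset$ while $t$ lies in both $[a_i,b_i]$ and $[a_j,b_j]$, contradicting $\Delta_2$-independence (as $\Delta_2\geq 1$). Pairwise $\Delta_2$-independence of the $\mathcal{K}_i$ is then inherited from the templates, since the template generated by $\mathcal{K}_i$ has vertex set contained in $X_i$ and lifetime contained in $[a_i,b_i]$.

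The crux of the argument, and the step I expect to be the main obstacle, is showing that each $\mathcal{K}_i$ is genuinely a $\Delta_1$-temporal clique. A $\Delta_1$-temporal clique must be $\Delta_1$-dense with respect to the lifetime $L(\mathcal{K}_i)=[s_i,t_i]$ that it generates, whereas the realisation only guarantees that each pair $x,y\in X_i$ is $\Delta_1$-dense in the possibly larger interval $[a_i,b_i]$. For a window $[\tau,\tau+\Delta_1-1]$ with $\tau$ in the range $[s_i,\max\{s_i,t_i-\Delta_1+1\}]$ that also falls within the valid range of density for $[a_i,b_i]$, density transfers directly, since $\Delta_1$-density only requires an appearance in $\mathcal{T}(xy)$ and not one confined to the interval. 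The delicate case is when $L(\mathcal{K}_i)$ is short (length less than $\Delta_1$) and sits near the right end of $[a_i,b_i]$, so that the single relevant value $\tau=s_i$ falls outside the density range for $[a_i,b_i]$. Here I would use $\Delta_2$-independence to close the gap: density of $xy$ in $[a_i,b_i]$ at $\tau=a_i$ yields an appearance in $\mathcal{T}(xy)\cap[a_i,a_i+\Delta_1-1]$, and if every such appearance exceeded $b_i$ it would have to be covered by a different template $j$ containing both $x$ and $y$; but then the distance $|a_i-t^*|\leq\Delta_1-1<\Delta_2$ for such an appearance $t^*$ would contradict the $\Delta_2$-independence of templates $i$ and $j$. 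Hence some appearance $t^*$ satisfies $t^*\leq b_i$, so $(xy,t^*)\in\mathcal{K}_i$ and thus $s_i\leq t^*\leq a_i+\Delta_1-1\leq s_i+\Delta_1-1$, giving the required appearance in $[s_i,s_i+\Delta_1-1]$.

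Finally I would combine these facts. Writing $S=\bigcup_i(S\cap\mathcal{K}_i)$, the nonempty parts $S\cap\mathcal{K}_i$ are pairwise $\Delta_2$-independent (subsets of pairwise $\Delta_2$-independent sets remain so, by the same generated-template argument). If $S$ met two or more of the $\mathcal{K}_i$, this would exhibit a partition of $S$ into at least two pairwise $\Delta_2$-independent sets, contradicting that $S$ is $\Delta_2$-indivisible. Therefore $S\subseteq\mathcal{K}_{i_0}$ for a single index $i_0$, and since $\mathcal{K}_{i_0}$ is a $\Delta_1$-temporal clique the lemma follows. I expect the short-lifetime case above to be the only genuinely subtle point; everything else is bookkeeping with the definitions of independence and density.
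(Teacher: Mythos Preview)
Your argument is correct and follows essentially the same route as the paper: partition $\mathcal{E}(\mathcal{G})$ into the pieces $\mathcal{K}_i$ coming from the realised templates, intersect with $S$, and use pairwise $\Delta_2$-independence of the pieces to contradict indivisibility unless $S$ lies in a single piece. The paper's proof simply asserts the existence of ``the decomposition of $\mathcal{G}$ into $\Delta_1$-temporal cliques'' and proceeds; your verification that each $\mathcal{K}_i$ really is a $\Delta_1$-temporal clique with respect to its own generated lifetime (including the short-lifetime case, where you invoke $\Delta_2>\Delta_1$ to rule out appearances escaping into another template) is a detail the paper leaves implicit, so your write-up is in fact more careful than the original on this point.
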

\begin{proof}Let $C_1, \dots, C_q$ be the $\Delta_1$-temporal cliques in the $(\Delta_1,\Delta_2)$-cluster temporal graph $\mathcal{G}$ in {an arbitrary} decomposition of $\mathcal{G}$ into $\Delta_1$-temporal cliques such that $C_i \cap S \neq \emptyset$, { for all $i\in[q]$}. Now, for each $i\in[q]$ we let $S_i'= S\cap C_i $. Observe that since $S_i'\subseteq C_i$ then the sets $\{S_i'\}_{i\in[q]}$ are pairwise $\Delta_2$-independent and $\cup_{i\in [q]} S_i' =S$, {if $q>1$ then }this contradicts the indivisibility of $S$.
  \end{proof}

Our {next} result shows that there is a way to uniquely partition any temporal graph.

\begin{lemma} \label{saturated}
{Let $\mathcal{G}$ be any temporal graph and $\Delta_2\in \mathbb{Z}^+$. Then, $\mathcal{G}$ }has a unique decomposition of its time-edges into $\Delta_2$-saturated subsets.
\end{lemma}

\begin{proof}
Let us consider any temporal graph $\mathcal G$. We will first argue that it is always possible to find a partition of $\mathcal E(\mathcal G)$ into sets $\Delta_2$-saturated in $\mathcal G$.  Note that, given any time-edge, we can greedily build a $\Delta_2$-saturated set that contains it. We now claim that any time-edge $(e,t) \in \mathcal E(\mathcal G)$ can belong to exactly one $\Delta_2$-saturated subset of $\mathcal E(\mathcal G)$. To see this, suppose that $P$ and $P'$ are two distinct $\Delta_2$-saturated sets with $(e,t) \in P \cap P'$.  By Lemma \ref{lem:IndivisibleIntersection}, we then have $P \cup P'$ is also $\Delta_2$-indivisible, contradicting {the assumption that $P$ and $P'$ are $\Delta_2$-saturated}. Thus, if $\{P_1,\dots,P_h\}$ is any collection of $\Delta_2$-saturated subsets such that $\bigcup_{i=1}^h P_i = \mathcal{E}(\mathcal{G})$, we know that for $P_i \neq P_j$ we have $P_i \cap P_j = \emptyset$ and hence $\mathcal{P} = \{P_1,\dots,P_h\}$ is a partition of the time-edges into $\Delta_2$-saturated sets.

We now prove uniqueness of this decomposition.  In fact, if there was a different partition $\mathcal{P}'$ of the time-edges of $\mathcal{G}$ into $\Delta_2$-saturated sets, it would imply that there exists at least one time-edge $(e,t) \in \mathcal E(\mathcal G)$ which belongs to different $\Delta_2$-saturated sets in $\mathcal{P}$ and $\mathcal{P'}$.  But in this case, $(e,t)$ belongs to two different $\Delta_2$-saturated subsets of $\mathcal E(\mathcal G)$ which we already proved to be impossible.  
  \end{proof}

{We now show that} this partition can be found in polynomial time. 
\begin{lemma}\label{lem:find-saturated} 
Let $\mathcal{G} = (G = (V,E), \mathcal{T})$ be a temporal graph, and let $\mathcal{E} = \{(e,t): e \in E, t \in \mathcal{T}(e)\}$ be the set of time-edges of $\mathcal{G}$. Then, there is an algorithm which finds {and outputs} the unique partition of $\mathcal{E}$ into $\Delta_2$-saturated subsets in time $\mathcal{O}(|\mathcal{E}|^3|V|)$.
\end{lemma}
\begin{proof}
Our algorithm proceeds as follows.  We maintain a list of disjoint subsets of time-edges, and make multiple passes through the list until it contains precisely the $\Delta_2$-saturated subsets of $\mathcal{E}$.  Suppose that $\mathcal{E} = \{e_1,\dots,e_r\}$; we initialise $\mathcal{L}_0$ as the list {$(E_i)_{i=1}^r$, where $E_i=\{e_i\}$ for each $i\in [r]$.} For convenience we will also store the template generated by each element of the list.  In a single pass through the list $\mathcal{L}_i$, we remove the first element $E_1$ of $\mathcal{L}_i$ and consider each remaining element of the list in turn: as soon as we find a set $E_j$ such that the templates $(X_1,[a_1,b_1])$ and $(X_j,[a_j,b_j])$ generated by $E_1$ and $E_j$ respectively are not $\Delta_2$-independent, we remove $E_j$ from $\mathcal{L}_i$, add the set $E_1 \cup E_j$ to $\mathcal{L}_{i+1}$ (noting that we also compute and store the template generated by this set of time-edges), and move on to $E_2$; if we find no such set $E_j$, we add $E_1$ to $\mathcal{L}_{i+1}$.  Now proceed in the same way with the first element of the modified list $\mathcal{L}_i$; we continue until $\mathcal{L}_i$ is empty.  Assuming we combined at least one pair of sets when moving from $\mathcal{L}_i$ to $\mathcal{L}_{i+1}$, we now repeat the entire procedure with $\mathcal{L}_{i+1}$.  When we reach an index $\ell$ such that we complete a pass through $\mathcal{L}_{\ell}$ without carrying out any merging operations, we halt and claim that $\mathcal{L}_{\ell}$ {is a partition of $\mathcal{E}$ into $\Delta_2$-saturated sets}.

To see that this claim holds, we first argue that the following invariant holds throughout the execution of the algorithm: for each $i$, every set of time-edges in $\mathcal{L}_i$ is $\Delta_2$-indivisible.  

This invariant clearly holds for $\mathcal{L}_0$, since each element of $\mathcal{L}$ contains only a single time-edge.  Suppose now that the invariant holds for $\mathcal{L}_i$; we will argue that it must also hold for $\mathcal{L}_{i+1}$.  Let $E_r$ and $E_s$ be two sets that are merged during the pass through $\mathcal{L}_i$.  It suffices to demonstrate that $E_r \cup E_s$ is $\Delta_2$-indivisible.  We already know, from the fact that the invariant holds for $\mathcal{L}_i$, that there is no way to partition $E_1$ (respectively $E_2$) into two or more pairwise $\Delta_2$-independent sets; thus, the only possible way to partition $E_1 \cup E_2$ into pairwise $\Delta_2$-independent sets is with the partition $(E_1,E_2)$.  However, by construction, since the algorithm merged $E_1$ and $E_2$, we know that $E_1$ and $E_2$ cannot be $\Delta_2$-independent.  We conclude that the invariant also holds for $\mathcal{L}_{i+1}$.

By the termination condition for the algorithm, we know that $\mathcal{L}_{\ell}$ gives a partition of $\mathcal{E}$ into pairwise $\Delta_2$-independent sets.  Moreover, the invariant tells us that each of these sets is $\Delta_2$-indivisible.  Maximality and hence saturation of the sets follows immediately from the fact that the edges are partitioned into pairwise $\Delta_2$-independent sets, so we conclude that $\mathcal{L}_{\ell}$ does indeed give a partition of $\mathcal{E}$ into pairwise $\Delta_2$-independent $\Delta_2$-saturated sets.  Finally, by Lemma \ref{saturated}, we know that this partition is unique.

It remains only to bound the running time of the algorithm.  Note that a single pass (in which we create $\mathcal{L}_{i+1}$ from $\mathcal{L}_i$ requires us to consider $\mathcal{O}(|\mathcal{L}_i|^2)$ pairs of sets; since we store the templates generated by each set, each such comparison can be carried out in time $\mathcal{O}(V)$ as this will be dominated by the time required to verify disjointness or otherwise of the vertex sets.  (Note that we can compute the template for each new merged set in constant time.)  Since $|\mathcal{L}_0| = |\mathcal{E}|$ and the length of the list decreases by at least one with each pass, we see that the overall running time is bounded by $\mathcal{O}(|\mathcal{E}|^3|V|)$, as required. 
\end{proof}

Since any temporal graph has a unique decomposition into $\Delta_2$-saturated sets by Lemma \ref{saturated} and any pair of $\Delta_2$-saturated sets is $\Delta_2$-independent by definition,  we obtain the following result.  

\begin{lemma} \label{sat_cliques}
A temporal graph $\mathcal G$ is a $(\Delta_1,\Delta_2)$-cluster temporal graph if and only if every $\Delta_2$-saturated set of time-edges forms a $\Delta_1$-temporal clique.
\end{lemma}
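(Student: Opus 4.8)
The plan is to prove the two directions of the biconditional separately, treating the statement essentially as a corollary that assembles the machinery developed in Lemmas~\ref{saturated}, \ref{lem:IndivisibleIntersection}, \ref{sat_set} and \ref{lem:indivisibility}. Throughout I would use the fact (noted just before the statement) that the unique decomposition of $\mathcal{E}(\mathcal{G})$ into $\Delta_2$-saturated sets is a collection of \emph{pairwise} $\Delta_2$-independent sets, so that any such family of saturated sets automatically generates a collection of pairwise $\Delta_2$-independent templates lying in $\mathfrak{T}(\mathcal{G},\Delta_2)$.

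For the forward direction, I would start from a $(\Delta_1,\Delta_2)$-cluster temporal graph $\mathcal{G}$ and an arbitrary $\Delta_2$-saturated set $S$. Since a saturated set is in particular $\Delta_2$-indivisible, Lemma~\ref{lem:indivisibility} gives a single $\Delta_1$-temporal clique $C$ of $\mathcal{G}$ with $S \subseteq C$. The remaining task is to upgrade this containment to equality: because $C$ is a $\Delta_1$-temporal clique contained in $\mathcal{E}(\mathcal{G})$ and $C \cap S = S \neq \emptyset$, Lemma~\ref{sat_set} yields $C \subseteq S$, whence $S = C$ and $S$ itself forms a $\Delta_1$-temporal clique.

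For the converse, I would take the unique partition $\{P_1,\dots,P_h\}$ of $\mathcal{E}(\mathcal{G})$ into $\Delta_2$-saturated sets supplied by Lemma~\ref{saturated}, and assume each $P_i$ forms a $\Delta_1$-temporal clique. I would then check directly that $\mathcal{G}$ realises the collection of generated templates $\{(V(P_i),L(P_i))\}_{i\in[h]}$: the density requirement in the definition of realisation is exactly the hypothesis that each $P_i$ is a $\Delta_1$-temporal clique, while the covering requirement is immediate, since the $P_i$ partition the time-edges and each time-edge of $P_i$ has its endpoints in $V(P_i)$ and its timestamp inside $L(P_i)$. Combined with the pairwise $\Delta_2$-independence of the saturated decomposition, this exhibits $\mathcal{G}$ as a $(\Delta_1,\Delta_2)$-cluster temporal graph.

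The only genuinely non-routine point is the equality $S = C$ in the forward direction: containment alone does not make $S$ a clique, and it is Lemma~\ref{sat_set} (itself a consequence of maximality together with Lemma~\ref{lem:IndivisibleIntersection}) that closes this gap. Everything else is bookkeeping against the definitions of $\Delta_2$-saturation and of realising a collection of templates, so I expect the argument to be short.
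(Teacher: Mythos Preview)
Your proposal is correct and follows essentially the same approach as the paper: both use Lemma~\ref{lem:indivisibility} followed by Lemma~\ref{sat_set} to get $S=C$ in the forward direction, and both appeal to the pairwise $\Delta_2$-independence of the saturated decomposition for the converse. The only cosmetic difference is that you cite this independence as already noted before the lemma, whereas the paper's proof re-derives it inline (via the observation that two non-$\Delta_2$-independent saturated sets would have an indivisible union, contradicting maximality); either presentation is fine.
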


\begin{proof}

[$\implies$] Let us assume that $\mathcal G$ is a $(\Delta_1,\Delta_2)$-cluster temporal graph and let $\mathcal P_{\mathcal G}$ be the decomposition of its time-edges into $\Delta_2$-saturated subsets. Because any $\Delta_2$-saturated set is also $\Delta_2$-indivisible by definition, then by Lemma \ref{lem:indivisibility} it follows that for any of these $\Delta_2$-saturated sets $S$ there exists a $\Delta_1$-temporal clique $\mathcal K$ in $\mathcal G$ which contains $S$ in its time-edge set. This means that $\mathcal K \cap S \not =\emptyset$, so by Lemma \ref{sat_set}, {$\mathcal K \subseteq S$}. So, since  $S$ is $\Delta_2$-saturated, $S$ is in fact the $\Delta_1$-temporal clique $\mathcal{K}$.

[$\impliedby$] By Lemma \ref{saturated} there exists a unique decomposition $\mathcal P_{\mathcal G}$ of $\mathcal G$ into $\Delta_2$-saturated sets. {By hypothesis each of these $\Delta_2$-saturated sets forms a $\Delta_1$-temporal clique and so it remains to show that they are pairwise $\Delta_2$-independent.} To see this, suppose, for a contradiction, that two $\Delta_2$-saturated sets $X$ and $Y$ are not $\Delta_2$-independent.  By indivisibility of $X$ and $Y$, the only possible way in which $X \cup Y$ could be partitioned into pairwise $\Delta_2$-independent subsets is with the partition $(X,Y)$, but by assumption $X$ and $Y$ are not $\Delta_2$-independent, so we can conclude that $X \cup Y$ is in fact $\Delta_2$-indivisible.  This contradicts maximality of $X$ and $Y$, giving the contradiction to the assumption that these are $\Delta_2$-saturated sets.  Hence, if all of these sets are $\Delta_1$-temporal cliques, trivially follows that $\mathcal G$ is a $(\Delta_1,\Delta_2)$-cluster temporal graph.
  \end{proof}

We can now prove what may be considered as the main standalone result in this section, this shows that $(\Delta_1,\Delta_2)$-cluster temporal graphs can be recognised in polynomial time.

\begin{lemma}\label{prop:verify-poly}
Let $\mathcal{G} = {(G,\mathcal{T})}$ be a temporal graph, and  $\mathcal{E} = \{(e,t): e \in E(G), \; t \in \mathcal{T}(e)\}$ be the set of time-edges of $\mathcal{G}$.  Then, we can determine in time $\mathcal{O}(|\mathcal{E}|^3|V|)$ whether $\mathcal{G}$ is a $(\Delta_1,\Delta_2)$-cluster temporal graph.
\end{lemma}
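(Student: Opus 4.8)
The plan is to combine the structural characterisation of Lemma~\ref{sat_cliques} with the algorithmic result of Lemma~\ref{lem:find-saturated}. By Lemma~\ref{sat_cliques}, $\mathcal{G}$ is a $(\Delta_1,\Delta_2)$-cluster temporal graph if and only if every $\Delta_2$-saturated set in its unique decomposition forms a $\Delta_1$-temporal clique. So the algorithm I would give has two phases: (i) compute the unique partition $\mathcal{P}$ of $\mathcal{E}$ into $\Delta_2$-saturated sets using Lemma~\ref{lem:find-saturated}, which already costs $\mathcal{O}(|\mathcal{E}|^3|V|)$; and (ii) test each set $S \in \mathcal{P}$ to decide whether it is a $\Delta_1$-temporal clique, returning ``yes'' precisely when all tests pass. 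The correctness is then immediate from Lemma~\ref{sat_cliques}, and it remains only to show that phase (ii) fits within the same time budget.

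For phase (ii), fix a set $S \in \mathcal{P}$ with generated template $(V(S),L(S))$, where $L(S)=[s,t]$. By the definition of a $\Delta_1$-temporal clique, deciding whether $S$ is one amounts to verifying, for every pair $x,y \in V(S)$, that the edge $xy$ is $\Delta_1$-dense in $[s,t]$. For a single pair I would extract the sorted list of appearance times $\mathcal{T}(xy)\cap[s,t]$ (after a one-off sort of all the lists $\mathcal{T}(e)$) and check the sliding-window condition directly via its equivalent gap formulation: the edge is $\Delta_1$-dense exactly when the first appearance lies in $[s,s+\Delta_1-1]$, the last lies in $[t-\Delta_1+1,t]$, and every gap between consecutive appearances has length at most $\Delta_1$. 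This runs in time linear in $|\mathcal{T}(xy)|=\mathcal{O}(|\mathcal{E}|)$, and a pair $x,y$ for which $xy$ is not an edge (or has no appearance in $[s,t]$) immediately witnesses a failure in $\mathcal{O}(1)$ time.

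Finally I would bound the total cost of phase (ii). Since the sets of $\mathcal{P}$ partition $\mathcal{E}$, each set satisfies $|V(S)|\le 2|S|$, so the number of vertex-pairs examined across all sets is $\sum_{S\in\mathcal{P}}\binom{|V(S)|}{2}=\mathcal{O}\big(\sum_{S}|S|^2\big)=\mathcal{O}(|\mathcal{E}|^2)$, using $\sum_S|S|=|\mathcal{E}|$ and $|S|\le|\mathcal{E}|$. Multiplying by the $\mathcal{O}(|\mathcal{E}|)$ cost per pair gives $\mathcal{O}(|\mathcal{E}|^3)$ for phase (ii), which is dominated by the $\mathcal{O}(|\mathcal{E}|^3|V|)$ cost of phase (i). Hence the whole procedure runs in $\mathcal{O}(|\mathcal{E}|^3|V|)$, as claimed.

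The main obstacle I anticipate is the bookkeeping in phase (ii) rather than any conceptual difficulty: correctly translating the $\Delta_1$-density definition into the gap condition, including the two boundary windows at the start and end of $L(S)$, and then obtaining a clean enough pair count that the density checks provably do not dominate the running time. The inequality $|V(S)|\le 2|S|$ together with $\sum_S|S|=|\mathcal{E}|$ is exactly what keeps phase (ii) within budget; everything about the structural correctness is inherited directly from Lemma~\ref{sat_cliques}.
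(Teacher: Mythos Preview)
Your proposal is correct and follows essentially the same approach as the paper: invoke Lemma~\ref{sat_cliques}, compute the $\Delta_2$-saturated partition via Lemma~\ref{lem:find-saturated}, then check $\Delta_1$-density of every pair within each template. Your analysis of phase~(ii) is in fact more careful than the paper's (which asserts an $\mathcal{O}(|\mathcal{E}|)$ bound for the density checks without justification), but since both bounds are dominated by the $\mathcal{O}(|\mathcal{E}|^3|V|)$ cost of phase~(i) this makes no difference to the final result.
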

\begin{proof}
We know by Lemma \ref{sat_cliques} that $\mathcal{G}$ is a $(\Delta_1,\Delta_2)$-cluster temporal graph if and only if every \mbox{$\Delta_2$-saturated} set of edges forms a $\Delta_1$-temporal clique.  By Lemma \ref{lem:find-saturated} we can find the unique partition of $\mathcal{E}$ into \mbox{$\Delta_2$-saturated} subsets in time $\mathcal{O}(|\mathcal{E}|^3|V|)$; our algorithm to do this can easily be adapted so that it also outputs the template generated by each of the edge-sets.  To determine whether each edge-set forms a \mbox{$\Delta_1$-temporal} clique, it suffices to consider each template $(X,[a,b])$ in turn, and for each pair $x,y \in X$ to verify whether the edge $xy$ is $\Delta_1$-dense in the interval $[a,b]$.  It is clear that we can do this for all templates in time $\mathcal{O}(|\mathcal{E}|)$.
  \end{proof}

The following lemma gives the `easy' direction of the Characterisation proved in Section \ref{sec:charproof}. 

\begin{lemma} \label{easychar}
Let $\mathcal G$ be a $(\Delta_1,\Delta_2)$-cluster temporal graph and $S \subseteq V(\mathcal G)$. Then, $\mathcal G[S]$ is also a $(\Delta_1,\Delta_2)$-cluster temporal graph.
\end{lemma}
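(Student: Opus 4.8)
The plan is to exhibit, for $\mathcal{G}[S]$, an explicit collection of pairwise $\Delta_2$-independent templates that $\mathcal{G}[S]$ realises, obtained by restricting a realising collection for $\mathcal{G}$. Since $\mathcal{G}$ is a $(\Delta_1,\Delta_2)$-cluster temporal graph, fix a collection $\{(X_i,[a_i,b_i])\}_{i\in[k]}\in\mathfrak{T}(\mathcal{G},\Delta_2)$ that $\mathcal{G}$ realises. I would set $T'=T(\mathcal{G}[S])$ and define the restricted collection $\mathcal{C}'=\{(X_i\cap S,\,[a_i,\min\{b_i,T'\}])\}$, keeping only those indices $i$ for which $\mathcal{E}(\mathcal{G}[S])$ contains a time-edge with both endpoints in $X_i\cap S$ (for such $i$ one has $a_i\le T'$, so the interval is well-defined). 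Note that every edge with both endpoints in $S$ retains exactly the same time appearances in $\mathcal{G}[S]$ as in $\mathcal{G}$, so the notion of $\Delta_1$-density for such edges is unaffected by passing to $\mathcal{G}[S]$.

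First I would check that $\mathcal{C}'\in\mathfrak{T}(\mathcal{G}[S],\Delta_2)$. The intervals clearly lie in $[1,T']$. For pairwise $\Delta_2$-independence, take two templates coming from indices $i\neq j$: if $X_i\cap X_j=\emptyset$ then also $(X_i\cap S)\cap(X_j\cap S)=\emptyset$; otherwise $[a_i,b_i]$ and $[a_j,b_j]$ are at distance at least $\Delta_2$, and since $[a_i,\min\{b_i,T'\}]$ and $[a_j,\min\{b_j,T'\}]$ are sub-intervals of these, their distance is still at least $\Delta_2$. Thus $\Delta_2$-independence is inherited. Next I would verify that $\mathcal{G}[S]$ realises $\mathcal{C}'$. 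For coverage, any $(xy,t)\in\mathcal{E}(\mathcal{G}[S])$ is a time-edge of $\mathcal{G}$ with $x,y\in S$, hence is covered by some original template, giving $x,y\in X_i\cap S$ and $t\in[a_i,b_i]$; as $t\le T'$ we get $t\in[a_i,\min\{b_i,T'\}]$. For the density condition, any pair $x,y\in X_i\cap S$ satisfies $x,y\in X_i$, so $xy$ is $\Delta_1$-dense in $[a_i,b_i]$, and it remains to transfer this to the shorter interval $[a_i,\min\{b_i,T'\}]$.

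The only real point to be careful about is this last density transfer, and this is exactly why I would shrink only the right endpoint of each interval while keeping the left endpoint $a_i$ fixed. With this choice the set of starting points $\tau$ for which the definition of $\Delta_1$-density must be checked over $[a_i,\min\{b_i,T'\}]$, namely $[a_i,\max\{a_i,\min\{b_i,T'\}-\Delta_1+1\}]$, is a subset of the corresponding set $[a_i,\max\{a_i,b_i-\Delta_1+1\}]$ for $[a_i,b_i]$; hence density over the larger interval immediately yields density over the smaller one, window by window, with no case analysis. By contrast, shrinking the left endpoint to the true lifetime of the restricted clique would force one to handle separately the case where the restricted interval has length less than $\Delta_1$, using the fact that every appearance of an intra-clique edge inside the clique's lifetime must belong to that clique; keeping $a_i$ fixed sidesteps this entirely. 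Having verified both conditions, $\mathcal{G}[S]$ realises $\mathcal{C}'\in\mathfrak{T}(\mathcal{G}[S],\Delta_2)$ and is therefore a $(\Delta_1,\Delta_2)$-cluster temporal graph. One could alternatively phrase the argument through Lemmas~\ref{lem:indivisibility} and~\ref{sat_cliques}, showing each $\Delta_2$-saturated set of $\mathcal{G}[S]$ lies in a single clique of $\mathcal{G}$ and is itself a clique, but the restriction of a realising collection is the most direct route.
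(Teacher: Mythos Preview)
Your proof is correct and follows essentially the same approach as the paper: both restrict the clique decomposition of $\mathcal{G}$ to $S$ and observe that $\Delta_2$-independence and $\Delta_1$-density are preserved. The only cosmetic difference is that the paper works with the time-edge sets $\mathcal{K}[S]$ and their tight generated templates (arguing density by contradiction), whereas you work directly with the realising templates and truncate only the right endpoint to fit inside $[1,T(\mathcal{G}[S])]$; your handling of the lifetime constraint in $\mathfrak{T}(\mathcal{G}[S],\Delta_2)$ and the window-by-window density transfer is in fact slightly more explicit than the paper's.
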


\begin{proof} Let $\mathcal{K}$ be an arbitrary $\Delta_1$-temporal clique of $\mathcal{G}$. We shall show that $\mathcal{K}[S]$ is a $\Delta_1$-temporal clique. The result will then follow since $\mathcal{G}$ consists of a union of $\Delta_2$-independent $\Delta_1$-temporal cliques and two such $\Delta_1$-temporal cliques will still be $\Delta_2$-independent after restricting to $S$ since $\mathcal{K}[S]\subseteq \mathcal{K}$. 

Recall that $\mathcal{K}[S] = \{(xy, t) \in \mathcal{K}: x,y\in S \} $ and let $C=(X,[a,b])$ be the (tight) template realised by the set of time-edges $\mathcal{K}[S]$. It remains to show that for all $x,y\in X$, the edge $xy$ is $\Delta_1$-dense in $[a,b]$.  Assume otherwise, then we can find some $x,y \in X=V(\mathcal{K})\cap S$ and an interval $[c,d]\subseteq [a,b]$ of length at least $\Delta_1$ with no appearance of $xy$. However, since $x,y \in V(\mathcal{K})$ and $[a,b]\subseteq L(\mathcal{K})$ this contradicts the fact that $\mathcal{K}$ is a $\Delta_1$-temporal clique.  
  \end{proof}

Recall $\mathcal{G}_1 \triangle \mathcal{G}_2$ denotes the set of time-edges appearing in exactly one of the temporal graphs $\mathcal{G}_1$  or $\mathcal{G}_2$.

\begin{lemma} \label{induced_connectivity}
Let $\mathcal G=(G,\mathcal T)$ be a temporal graph, and $\mathcal C  \in \mathfrak{T}(\mathcal{G},\Delta_2)$ minimise \[\min_{\mathcal{G}_{\mathcal C} \text{ $\Delta_1$-realises } \mathcal C} |\mathcal{G} \triangle \mathcal{G}_{\mathcal C}|.\] Then, for any template $C=(X,[a,b]) \in \mathcal C$, the static underlying graph of $\mathcal{G}[X]|_{[a,b]}$ is connected.
\end{lemma}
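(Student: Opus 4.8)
The plan is to argue by contradiction using an exchange argument: if the underlying graph of $\mathcal{G}[X]|_{[a,b]}$ were disconnected, I would replace the template $C=(X,[a,b])$ by one template per connected component, and show that the resulting collection lies in $\mathfrak{T}(\mathcal{G},\Delta_2)$ but can be realised with strictly fewer modifications, contradicting the minimality of $\mathcal{C}$.

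Concretely, suppose the underlying graph of $\mathcal{G}[X]|_{[a,b]}$ has connected components with vertex sets $X_1,\dots,X_m$ where $m\geq 2$, and let $\mathcal{C}'$ be obtained from $\mathcal{C}$ by deleting $C$ and inserting the templates $(X_1,[a,b]),\dots,(X_m,[a,b])$. First I would check that $\mathcal{C}'\in\mathfrak{T}(\mathcal{G},\Delta_2)$: the new templates are pairwise $\Delta_2$-independent since the $X_i$ are disjoint, and each remains $\Delta_2$-independent from every other template of $\mathcal{C}$ because it inherits both a vertex set (a subset of $X$) and the time interval $[a,b]$ from $C$.

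Next I would fix an optimal realisation $\mathcal{G}_{\mathcal{C}}$ of $\mathcal{C}$ and compare costs. The crucial structural observation is that, by definition of the components, $\mathcal{G}$ has no time-edge $(uv,t)$ with $u\in X_i$, $v\in X_j$, $i\neq j$ and $t\in[a,b]$; hence every time-edge of $\mathcal{G}$ covered by $C$ has both endpoints in a common component, so the set of $\mathcal{G}$-time-edges covered is identical under $\mathcal{C}$ and $\mathcal{C}'$ and the deletion cost is unchanged. Since the density requirements imposed by $\mathcal{C}'$ are exactly those imposed by $\mathcal{C}$ restricted to within-component pairs, I would obtain a realisation $\mathcal{G}_{\mathcal{C}'}$ of $\mathcal{C}'$ from $\mathcal{G}_{\mathcal{C}}$ by deleting every time-edge $(xy,t)$ with $t\in[a,b]$ and $x,y$ in distinct components; this destroys no within-component density and leaves every surviving time-edge covered by $\mathcal{C}'$. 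As $m\geq 2$, there is at least one cross pair $\{x,y\}$, and its $\Delta_1$-density in $[a,b]$ forces $\mathcal{G}_{\mathcal{C}}$ to contain some appearance $(xy,t)$ with $t\in[a,b]$; since $\mathcal{G}$ has no such appearance, this is a genuine addition that gets removed, so $|\mathcal{G}\triangle\mathcal{G}_{\mathcal{C}'}|<|\mathcal{G}\triangle\mathcal{G}_{\mathcal{C}}|$, the desired contradiction.

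The step I expect to be the main obstacle is verifying that each cross pair really forces an added appearance inside $[a,b]$, rather than having its density witnessed by appearances lying just outside $[a,b]$. I would resolve this using $\Delta_2>\Delta_1$ (Remark~\ref{rem:deltas}): taking $\tau=a$ in the density definition yields an appearance $(xy,t)$ of $\mathcal{G}_{\mathcal{C}}$ with $t\in[a,a+\Delta_1-1]$, and if $t>b$ this appearance would have to be covered in $\mathcal{G}_{\mathcal{C}}$ by another template $C''$ containing both $x$ and $y$; but then $C$ and $C''$ share the vertices $x,y$ while their intervals lie within distance $t-b\leq\Delta_1-1<\Delta_2$ of one another, contradicting the $\Delta_2$-independence of $\mathcal{C}$. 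Hence $t\in[a,b]$, and since $x,y$ lie in distinct components this is indeed an addition, so the exchange strictly decreases the cost.
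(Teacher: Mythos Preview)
Your proof is correct and follows essentially the same exchange argument as the paper: assume disconnectedness, split the template along the cut (the paper bipartitions into $Y$ and $X\setminus Y$, you take all components at once), and observe that the cross time-edges were all additions that can now be dropped, giving a strictly cheaper realisation of a still $\Delta_2$-independent collection. You are in fact more careful than the paper on the short-interval edge case, using $\Delta_2>\Delta_1$ to rule out the density witness lying just outside $[a,b]$; the paper simply asserts $|\Pi|\geq 1$ without making this check explicit.
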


\begin{proof}
	Let us assume that for some $C=(X,[a,b]) \in \mathcal C$, the graph $G[X]|_{[a,b]}$ is disconnected. This implies that there exists a set $Y \subset V(G)$ such that $Y \neq \emptyset$, $Y \subset X$ and for any $x\in X\backslash Y $ and $y\in Y$, the static edge $xy$ does not belong to the underlying graph $G[X]|_{[a,b]}$. Because $X$ is the vertex set of the  template $C=(X,[a,b])$, which by definition is induced by a $\Delta_1$-temporal clique in $\mathcal G_{\mathcal C}$, the set \[\Pi=\big\{(xy,t): x\in X\backslash Y, \; y\in Y, \; t \in [a,b] \big\}\subseteq \mathcal E(\mathcal G) \triangle \mathcal E(\mathcal G_{\mathcal C}),\] satisfies $|\Pi| \geq \max \{1, \lfloor \frac{b-a}{\Delta_1} \rfloor\} \geq 1$. 
	
	We now construct two new templates $C^{ Y}=(Y, [a,b])$ and $C^{X \backslash Y}=(X \setminus Y, [a,b])$ and define the collection $\mathcal C'  = (\mathcal C \backslash C)\cup C^{ Y} \cup C^{ X\backslash Y} $. Observe that \[\mathcal E \left(\mathcal G_{\mathcal C}[X]|_{[a,b]}\right)=\mathcal E\left(\mathcal G_{\mathcal C'}\left[Y\right]|_{[a,b]} \right) \cup \mathcal E\left(\mathcal G_{\mathcal C'}\left[  X\backslash Y \right]|_{[a,b]}\right) \cup \Pi ,\] where the union is disjoint and all other edges in $\mathcal{G}_{\mathcal C'} $ are also contained in $\mathcal{G}_{\mathcal C}$. Thus, $|\mathcal E(\mathcal G) \triangle \mathcal E(\mathcal G_{\mathcal C'}) |< |\mathcal E(\mathcal G) \triangle \mathcal E(\mathcal G_{\mathcal C})|$, contradicting the minimality of $ \mathcal C$.
  \end{proof}

Finally, we show that there always exists a solution to the \EditTempClique problem with lifetime no greater than that of the input temporal graph.

\begin{lemma}\label{prop:within-lifetime}
For any temporal graph $\mathcal{G}$, there exists a $(\Delta_1,\Delta_2)$-cluster temporal graph $\mathcal{G}'$, minimising the edit distance $|\mathcal{G} \triangle \mathcal{G}'|$ between $\mathcal{G}$ and $\mathcal{G}'$, such that the lifetime of $\mathcal{G}'$ is a subset of the lifetime of $\mathcal{G}$.
\end{lemma}
\begin{proof}
Let $\widetilde{\mathcal{G}}$ be any $(\Delta_1,\Delta_2)$-cluster temporal graph such that $|\widetilde{\mathcal{G}} \triangle \mathcal{G}|$ is minimum.  {Let $I=[s,t]$ be the smallest interval containing all edges of $\mathcal{G}$}.  It suffices to demonstrate that we can transform $\widetilde{\mathcal{G}}$ into a $(\Delta_1,\Delta_2)$-cluster temporal graph $\mathcal{G}'$ whose lifetime is contained in $I$, without increasing the edit distance.  Specifically, if $\mathcal{C} = \{C_1,\dots,C_m\}$ is the set of clique templates generated by $\widetilde{\mathcal{G}}$, we will demonstrate that, for any template $C_i = (X_i,[a_i,b_i])$, we can modify $\widetilde{\mathcal{G}}$ so that its restriction to $X_i$ and $[a_i,b_i]$ is a clique generating the template $(X_i,[a_i,b_i] \cap I)$, without increasing the edit distance.  Notice that modifying each clique in this way cannot violate the $\Delta_2$-independence of the cliques, since we only decrease the lifetime of each clique.

Suppose now that $(X_i,[a_i,b_i]) \in \mathcal{C}$, and $[a_i,b_i] \not \subseteq I$.  Set $\mathcal{D}_1$ to be the set of time-edges in $\widetilde{\mathcal{G}}[X_i]|_{[a_i,s-1]}$, and $\mathcal{D}_2$ to be the set of time-edges in $\widetilde{\mathcal{G}}[X_i]|_{[t+1,b_i]}$; by assumption at least one of $\mathcal{D}_1$ and $\mathcal{D}_2$ is non-empty.  Note that no time-edges in $\mathcal{D}_1 \cup \mathcal{D}_2$ appear in $\mathcal{G}$.  Now set $\mathcal{G}'$ to be the temporal graph obtained from $\widetilde{\mathcal{G}}$ by deleting all time-edges in $\mathcal{D}_1 \cup \mathcal{D}_2$ and adding the set of time-edges 
\[\{(e,s): \exists r \text{ with } (e,r) \in \mathcal{D}_1\} \cup \{(e,t): \exists r \text{ with } (e,r) \in \mathcal{D}_2\}.\]  
Note that we have $|\mathcal{G}' \triangle \mathcal{G}| \le |\widetilde{\mathcal{G}} \triangle \mathcal{G}|$, so the edit distance does not increase, and moreover  $\mathcal{G}'[X_i]|_{[a_i,b_i]}$ does not contain any edge-appearances at times not in $I$.  It remains only to demonstrate that the edges of $\mathcal{G}'[X_i]|_{[a_i,b_i] \cap [T]}$ form a $\Delta_1$-temporal clique, that is, that for every pair of vertices $x,y \in X_i$ the edge $xy$ is $\Delta_1$-dense in $[a_i,b_i] \cap I$.  Since we did not remove any edge-appearances in this interval, the only way this could happen is if $xy$ did not appear at all in this interval in $\widetilde{\mathcal{G}}$; in this case, we must have $\min\{b_i,t\} - \max\{s,a_i\} < \Delta_1 - 1$, and so a single appearance of the edge $xy$ in this interval -- which is guaranteed by construction -- is enough to give the required $\Delta_1$-density.
  \end{proof}

\section{ETC on Paths}\label{sec:paths}

Throughout $P_n$ will denote the path on $V(P_n)=\{v_1,\dots , v_n\}$ with $E(P_n)= \{v_{i}v_{i+1} : 1 \leq i < n \}$. {Define \[\mathfrak{F}_n=\left\{\mathcal{P}_n : \mathcal{P}_n = (P_n,\mathcal{T}) , \,\TT:E \to 2^{\mathbb{N}}\setminus \{\emptyset\},\, \min \{t\in \TT(e):  e\in E\}=1 \right\}\] to be the class of all temporal graphs which have the path $P_n$ as the underlying static graph}. 

\subsection{NP-Completeness} \label{sec:pathhard}

Observe that \ETC is in \NP because, for any input instance $(\GG,\Delta_1,\Delta_2,k)$, a non-deterministic algorithm can guess (if one exists) {a set $\Pi \subseteq \binom{V}{2}\times [T(\mathcal{G})]$ with $|\Pi|\leq k$} and, using Lemma~\ref{prop:verify-poly}, verify that {the temporal graph on $V$ with time-edge set $\mathcal{E}(\mathcal{G})\triangle \Pi$}  is a $(\Delta_1,\Delta_2)$-cluster temporal graph in time polynomial in $k$ and the size of $\mathcal{G}$. We show that \ETC is \NP-hard even for temporal graphs with a path as underlying graph.

\begin{theorem} \label{main_thm}
\EditTempClique is \NP-complete, even if the underlying graph $G$ of the input temporal graph $\GG$ is a path.
\end{theorem}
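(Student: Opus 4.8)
Since membership in \NP~is already established just above, the plan is to prove \NP-hardness by a polynomial-time reduction from a suitable \NP-hard problem; I would aim for a satisfiability variant such as \textsc{3-SAT} (or a restricted CSP), although any strongly \NP-hard problem with a clean ``make a binary choice subject to consistency constraints'' structure could serve. Before designing gadgets, I would first pin down what cluster temporal graphs look like on a path. The key observation is that on $P_n$ every $\Delta_1$-temporal clique has at most two vertices: any three vertices of a path contain a non-adjacent pair, whose (nonexistent) static edge can never be $\Delta_1$-dense. Hence every clique lives on a single path-edge $e_i = v_iv_{i+1}$, and a temporal graph on a path is a $(\Delta_1,\Delta_2)$-cluster temporal graph precisely when the appearances of each edge partition into $\Delta_1$-dense intervals (``blocks'') such that (i) two blocks on the same edge have time-intervals at least $\Delta_2$ apart, and (ii) two blocks on edges sharing a vertex (i.e.\ $e_i$ and $e_{i+1}$) also have time-intervals at least $\Delta_2$ apart. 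Combined with the fact that $\Delta_1$-density is equivalent to consecutive appearances within a block differing by at most $\Delta_1$, this turns \ETC on paths into the combinatorial task: modify the fewest time-edges so that each edge's appearances decompose into such blocks. This reformulation is exactly what the gadgets will encode.

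The central difficulty is that a path is one-dimensional, whereas a \textsc{3-SAT} instance (variable--clause incidence) is inherently two-dimensional. My plan is to use the time axis as the second dimension: the pair (edge index, timestep) forms a grid of candidate time-edges, so a single path can carry a genuinely two-dimensional gadget layout. A variable gadget would be a local configuration on one or a few consecutive edges containing a \emph{bad gap} --- a gap of length strictly between $\Delta_1$ and $\Delta_2$, which can neither remain inside a block (too long for $\Delta_1$-density) nor serve as a block boundary (too short for $\Delta_2$-independence), and so must be repaired in exactly one of two cheapest ways: fill the gap with added appearances, or widen it by deleting appearances. These two minimum-cost repairs encode the two truth values. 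A clause/consistency gadget would then exploit constraint (ii): by placing blocks on adjacent edges at carefully chosen times, I would arrange that certain combinations of choices on nearby variable gadgets cannot all be made without incurring an extra edit, because two blocks on adjacent edges would otherwise fall within $\Delta_2$ of each other. Setting the budget $k$ equal to the total cost of the intended per-gadget repairs then makes a solution of cost at most $k$ exist if and only if the \textsc{3-SAT} instance is satisfiable.

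I would verify correctness in two directions. For completeness (satisfiable $\Rightarrow$ editable within budget), I would read off from a satisfying assignment the fill/widen choice at each variable gadget and check, via the structural reformulation above, that the resulting appearances decompose into valid blocks obeying (i) and (ii), with total edits exactly $k$. The soundness direction (editable within budget $\Rightarrow$ satisfiable) is where I expect the main obstacle to lie: I must establish a tight lower bound showing that any editing of cost at most $k$ must, gadget by gadget, coincide with one of the two intended local repairs, and that the global budget leaves no slack to ``cheat'' --- for instance by only partially filling a gap, by deleting across gadget boundaries, or by merging blocks across a shared vertex. Proving this rigidity, and thereby ruling out all unintended edits so that any within-budget solution induces a consistent satisfying assignment, is the technical heart of the argument. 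I would handle it with an exchange/normalisation argument that rewrites any optimal edit set into a canonical per-gadget form without increasing its cost, after which the correspondence with assignments becomes transparent.
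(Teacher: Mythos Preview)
Your key structural observation---that in a cluster temporal graph whose underlying graph is a path every $\Delta_1$-temporal clique sits on a single edge---is correct as stated, but it does not yield the reformulation you claim. In \ETC the modification set $\Pi$ may contain time-edges $(uw,t)$ with $uw\notin E(P_n)$, so the edited graph need not have the path as its underlying graph; nothing prevents an optimal solution from adding chord appearances and thereby forming a clique on three or more vertices. Your sentence ``turns \ETC on paths into the combinatorial task: modify the fewest time-edges so that each edge's appearances decompose into such blocks'' therefore builds in an unjustified assumption, and the list of unintended edits you plan to rule out in the soundness direction (partial fills, cross-boundary deletions, block merges) conspicuously omits off-path additions. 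Whether such additions can undercut the intended per-gadget cost depends entirely on the gadget design, and you have not addressed it.

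The paper avoids this by reducing not from \textsc{3-SAT} but from \TempMatch, already known to be \NP-hard on temporal paths with $\Delta=2$. The construction simply stretches time by a factor of four (each $(e,t)$ becomes $(e,4t-3)$), sets $\Delta_1=1$, $\Delta_2=5$, and takes $k'=|\mathcal{E}(\mathcal{P}_n)|-k$. The substantive work is a short chain of exchange arguments showing that on this stretched instance some optimal \ETC solution uses deletions only and leaves each clique a single time-edge at a non-filler snapshot; this is precisely the step that legitimises restricting attention to path edges, and it relies on the empty filler snapshots inserted by the stretching. The surviving time-edges then form a $5$-temporal matching that pulls back to a $2$-temporal matching of size at least $k$. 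Your normalisation instinct is the right one, but the paper's choice of source problem makes that normalisation short and concrete, whereas carrying it out for a bespoke \textsc{3-SAT} encoding---including the off-path case you have not accounted for---would be substantially harder.
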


 For the remainder of this section we will consider only temporal graphs $\PP_n \in \mathfrak{F}_n$ which have a path $P_n$ as the underlying graph unless it is specified otherwise.

To prove this result we are going to construct a reduction to \ETC from the \NP-complete problem \TempMatch \cite{maximum_matchings}. For a fixed  $\Delta\in \mathbb{Z}^+$, a {\it $\Delta$-temporal matching} $\mathcal M$ of a temporal graph $\mathcal{G}$ is a set of time-edges {$\mathcal M \subseteq \mathcal E(\mathcal G)$} which are pairwise $\Delta$-independent. It is easy to note that if {$\mathcal E(\mathcal G)=\mathcal M$}, then $\mathcal G$ is a $(\Delta_1,\Delta)$-cluster temporal graph for any value of $\Delta_1 \ge 1$, because then each time-edge in $\mathcal G$ can be considered as a $\Delta_1$-temporal clique with unit time interval, and these cliques are by definition $\Delta$-independent. We can now state the \TempMatch problem formally. 
\begin{framed}
	\noindent
	\TempMatch (\TM):\\ 
 	\textit{Input:} A temporal graph $\mathcal G=(G,\TT)$ and two positive integers $k, \Delta \in \mathbb{Z}^+$.\\
\textit{Question:} Does $\mathcal G$ admit a $\Delta$-temporal matching $\mathcal M$ of size {at least} $k$?
\end{framed} 

{The main idea of the reduction in Theorem \ref{main_thm} is to add empty $3$ empty `filler' time slots between the time slots of our input graph (instance of \TM) and query the edit distance to a $(1,5)$-temporal cluster graph. The point of adding these time slots is to make it very costly to form cliques by adding new time-edges between non-independent edges of the input. Thus, an optimal solution to the \ETC problem will be to only remove time-edges, this will then recover a temporal matching of the desired size (should one exist). }

\begin{proof}[Proof of Theorem \ref{main_thm}] 
It was shown in \cite{maximum_matchings} that \TempMatch is \NP-complete for temporal graphs with a path as underlying graph and $\Delta=2$, thus we can assume our input has this form. Thus, let $\II=(\PP_n,2,k)$ denote an input instance of \TM, where $\mathcal P_n=(P_n,\TT) \in \mathfrak{F}_n$  and $k \in \mathbb{N}$.
From this, we are going to define an instance $\II'=(\PP'_n,1,5,k')$ of \ETC, which is a yes-instance if and only if $(\PP_n,2,k)$ is a yes-instance for \TM.  Here $\PP'_n=(P_n,\TT')\in \mathfrak{F}_n$ is a new temporal graph which has the same path $P_n$ as $\PP_n$ as the underlying graph, and we set $k':=|\mathcal{E}(\PP_n)| - k$.  A key property of our construction is that any pair of consecutive snapshots of $\PP_n$ is separated in $\PP'_{n}$ by three empty snapshots which we refer to as \textit{filler} snapshots; we will refer to all the other snapshots in $\mathcal P'_n$, which occur at the times $4t-3$ where $t\in \mathbb{Z^+}$, as \textit{non-filler} snapshots. Formally, our new temporal graph $\PP_n'$ has the same vertex set as $\PP_n$, and its set of time-edges is
\[\mathcal{E}(\mathcal{P}_n') = \left\{ (e, 4t - 3) : (e,t) \in \mathcal{E}(\PP_n)  \right\}.\]
We can assume that the first time-edge of $\PP_n$ occurs at time $1$ and so this also holds for $\PP_n'$. The construction of $\PP_n'$ from $\PP_n$ is illustrated in Figure~\ref{fig:resultslinething}. {We begin with the reverse direction of the reduction as this is simpler.} 
 
\begin{figure}
	\begin{center}
		\begin{tikzpicture}[xscale=.75,yscale=.75]
		
		 	\draw (-3.5,4.2) node[anchor=south]{{\large $\mathcal{P}_n$}};
		\draw[-] (-5,0) to node[pos=1.02,above]{} (-2,0);
		\foreach \x in {-5,...,-2}
		\draw (\x,0.1) to node[pos=1,below]{}   (\x,-0.1);
				\draw (-4.5,0) node[anchor=south]{\begin{tcolorbox}[ width=.75cm,height=3cm, halign=center,valign=center]
			\mbox{} 
			\end{tcolorbox}};
		
		\draw (-3.5,0) node[anchor=south]{\begin{tcolorbox}[ width=.75cm,height=3cm, halign=center,valign=center]
			\mbox{} 
			\end{tcolorbox}};
				
		\draw (-2.5,0) node[anchor=south]{\begin{tcolorbox}[ width=.75cm,height=3cm, halign=center,valign=center]
			\mbox{} 
			\end{tcolorbox}};
		
			\draw[fill] (-4.5,.5) circle (.1);
		\draw[fill] (-4.5,1.5) circle (.1);
		\draw[fill] (-4.5,2.5) circle (.1);
		\draw[fill] (-4.5,3.5) circle (.1);
		\draw (-4.5,2.5)[thick] -- (-4.5,3.5);
		
		\draw[fill] (-3.5,.5) circle (.1);
		\draw[fill] (-3.5,1.5) circle (.1);
		\draw[fill] (-3.5,2.5) circle (.1);
		\draw[fill] (-3.5,3.5) circle (.1);
		\draw (-3.5,0.5)[thick] -- (-3.5,1.5);
		\draw (-3.5,1.5)[thick] -- (-3.5,2.5);
		\draw (-3.5,2.5)[thick] -- (-3.5,3.5);
		
		\draw[fill] (-2.5,.5) circle (.1);
		\draw[fill] (-2.5,1.5) circle (.1);
		\draw[fill] (-2.5,2.5) circle (.1);
		\draw[fill] (-2.5,3.5) circle (.1);
		\draw (-2.5,2.5)[thick] -- (-2.5,3.5);

            	\draw (-1.5,2) [-To, line width=.6mm]-- (0.5,2);

			 	\draw (5.5,4.2) node[anchor=south]{{\large $\mathcal{P}_n'$}};
		
		\draw[-] (1,0) to node[pos=1.02,above]{} (10,0);
		\foreach \x in {1,...,10}
		\draw (\x,0.1) to node[pos=1,below]{}   (\x,-0.1);
		
		\foreach \x in {2.5,3.5,4.5,6.5,7.5,8.5}
		\draw (\x,0) node[anchor=south]{\begin{tcolorbox}[ width=.75cm,height=3cm, halign=center,valign=center,colback=black!25]
			\mbox{} 
			\end{tcolorbox}};

			\draw (1.5,0) node[anchor=south]{\begin{tcolorbox}[ width=.75cm,height=3cm, halign=center,valign=center]
			\mbox{} 
			\end{tcolorbox}};
		
			\draw (5.5,0) node[anchor=south]{\begin{tcolorbox}[ width=.75cm,height=3cm, halign=center,valign=center]
			\mbox{} 
			\end{tcolorbox}};

		\draw (9.5,0) node[anchor=south]{\begin{tcolorbox}[ width=.75cm,height=3cm, halign=center,valign=center]
			\mbox{} 
			\end{tcolorbox}};
	
	\draw[fill] (1.5,.5) circle (.1);
		\draw[fill] (1.5,1.5) circle (.1);
			\draw[fill] (1.5,2.5) circle (.1);
				\draw[fill] (1.5,3.5) circle (.1);
				 	\draw (1.5,2.5)[thick] -- (1.5,3.5);
				
					\draw[fill] (5.5,.5) circle (.1);
				\draw[fill] (5.5,1.5) circle (.1);
				\draw[fill] (5.5,2.5) circle (.1);
				\draw[fill] (5.5,3.5) circle (.1);
				\draw (5.5,0.5)[thick] -- (5.5,1.5);
				\draw (5.5,1.5)[thick,dashed] -- (5.5,2.5);
				\draw (5.5,2.5)[thick,dashed] -- (5.5,3.5);
				
					\draw[fill] (9.5,.5) circle (.1);
				\draw[fill] (9.5,1.5) circle (.1);
				\draw[fill] (9.5,2.5) circle (.1);
				\draw[fill] (9.5,3.5) circle (.1);
	\draw (9.5,2.5)[thick] -- (9.5,3.5);
\end{tikzpicture}
\end{center}	
\caption{\label{fig:resultslinething}The instance $\mathcal{P}$ to the temporal matching problem is shown on the left and the stretched graph $\mathcal{P}_n'$ on which we solve \probname $\,$ is on the right. Non-filler snapshots are shown in white and filler snapshots are grey.  Dotted edges show edges that were removed to leave a $(\Delta_1,\Delta_2)$-cluster temporal graph (which is also a temporal matching $\mathcal{M}'$).}
\end{figure}
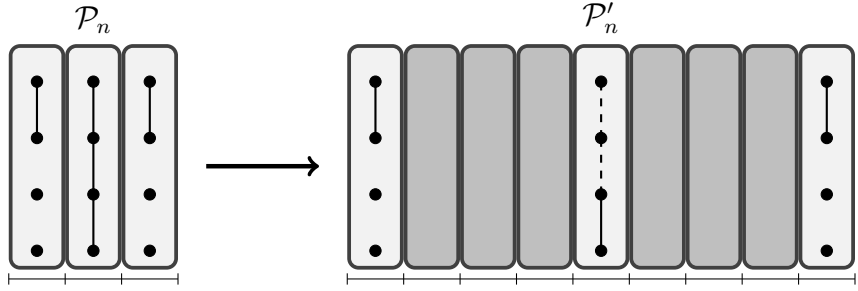

\bigskip

\noindent$[{\impliedby}]$ Let us suppose $(\mathcal{P}_n,2,k)$ is a YES-instance for \TM. This means that there exists at least one $2$-temporal matching $\mathcal{M}$ of size $k$ in $\mathcal{P}_n$. As we noticed before, $\mathcal M$ can be seen as a set of  pairwise $2$-independent $1$-temporal cliques each of which consists of a single time-edge.

\begin{clm}\label{clm:match} For any pair $(e,t),(e',t') \in \mathcal M$ such that $e \cap e' \not = \emptyset$, we have that $(e,4t - 3)$ and $(e',4t'-3)$ are $5$-independent.
\end{clm}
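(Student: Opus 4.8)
The plan is to unwind the two relevant definitions of $\Delta$-independence and then observe that the time-stretching transformation $t \mapsto 4t-3$ magnifies temporal gaps by a factor of $4$, which is more than enough to promote $2$-independence in $\mathcal{P}_n$ to $5$-independence in $\mathcal{P}'_n$.

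First I would recall that, since $\mathcal{M}$ is a $2$-temporal matching, any two of its time-edges are pairwise $2$-independent. Applying the definition of $\Delta_2$-independence for time-edges (with $\Delta_2 = 2$) to the pair $(e,t),(e',t') \in \mathcal{M}$, we have that either $e \cap e' = \emptyset$ or $|t - t'| \geq 2$. By the hypothesis of the claim we are in the case $e \cap e' \neq \emptyset$, so the first disjunct fails and we are forced to conclude $|t - t'| \geq 2$. This is the only property of the matching we need.

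Next I would compute the time-gap between the two corresponding time-edges in the stretched graph. Since the non-filler snapshots sit at times $4t-3$, the relevant quantity is
\[
\bigl|(4t-3) - (4t'-3)\bigr| = 4\,|t - t'| \geq 4 \cdot 2 = 8 \geq 5.
\]
Finally, to conclude $5$-independence of $(e,4t-3)$ and $(e',4t'-3)$, I would again invoke the definition of $\Delta_2$-independence (now with $\Delta_2 = 5$): it requires $e \cap e' = \emptyset$ or $|(4t-3)-(4t'-3)| \geq 5$, and the displayed inequality establishes the second disjunct. Hence the two time-edges are $5$-independent, as required.

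There is essentially no obstacle here: the statement is an immediate arithmetic consequence of the scaling factor $4$ built into the construction, combined with bookkeeping of the independence definitions. The only point requiring any care is to be sure we are reading the correct $\Delta$ in each instance (the matching is $2$-independent, whereas the \ETC target uses $\Delta_2 = 5$), and to note that the factor $4$ was chosen precisely so that a gap of $2$ in $\mathcal{P}_n$ comfortably exceeds the threshold $5$ after stretching; any factor at least $3$ would suffice for this particular claim, but the value $4$ (three filler snapshots) will be needed elsewhere in the reduction.
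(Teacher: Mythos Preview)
Your proposal is correct and follows essentially the same approach as the paper's proof: both use the $2$-independence of $\mathcal{M}$ together with the hypothesis $e \cap e' \neq \emptyset$ to deduce $|t-t'| \geq 2$, then compute $|(4t-3)-(4t'-3)| = 4|t-t'| \geq 8$ to conclude $5$-independence.
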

\begin{poc}[\ref{clm:match}]Because $(e,t)$ and $(e',t')$ both belong to  $\mathcal M$, they must be $2$-independent; since $e$ and ${e'}$ share at least one endpoint, it follows that $|t-t'| \geq 2$.  We therefore deduce that
\begin{equation*}
|(4t'-3) - (4t-3)| = 4|t'-t| \geq 8,
\end{equation*}
thus, the time-edges $(e,4t - 3)$ and $(e',4t'-3)$ are $5$-independent.\end{poc}

Set $\Pi:=\{(e,4t-3) \in \mathcal E(\PP'_n) \colon (e,t) \in \EE(\PP_n) \setminus \MM\}$ to be the set of time-edges to be removed from $\mathcal{P}_n'$. By construction $|\Pi|=k'$.  Thus, after deleting at most $k'$ time-edges from the temporal graph $\mathcal P'_n$ we obtain the matching $\mathcal{M}' = \{(e,4t-3) \colon (e,t) \in \mathcal{M}\}$, which by Claim \ref{clm:match} is a $(1,5)$-cluster temporal graph.  
 
\bigskip

\noindent$[ {\implies} ]$
For the {forward} direction, suppose $(\mathcal{P}'_n, 1, 5, k')$ with $\mathcal{P}'_n=(P_n,\TT') \in \mathfrak{F}_n$  is a YES-instance of \ETC. To begin we prove three claims about the structure of cliques in optimal solutions to \ETC on $\mathcal{P}_n'$.

\begin{clm}\label{clm:nonfillend}Let $\mathcal{K}$ be any $1$-temporal clique in an optimal solution to \ETC on $\mathcal{P}'_n$. Then the {endpoint}s of $L(\mathcal{K})$ must be non-filler snapshots of $\mathcal{P}_n'$.
\end{clm}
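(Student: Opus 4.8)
The plan is to exploit two features of the construction simultaneously: the input parameter $\Delta_1 = 1$, which makes $1$-temporal cliques extremely rigid, and the fact that $\mathcal{P}'_n$ has no edge appearances at filler times, so any time-edge occurring at a filler time in a solution must have been \emph{added} and hence lies in the modification set.

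First I would record what $\Delta_1 = 1$ forces. Unpacking the definition of $\Delta_1$-density with $\Delta_1 = 1$, the interval $[a,\max\{a,b-\Delta_1+1\}]$ is just $[a,b]$ and each window $[\tau,\tau+\Delta_1-1]$ is the single point $\{\tau\}$; thus an edge $e$ is $1$-dense in $[a,b]$ precisely when $t \in \mathcal{T}(e)$ for \emph{every} $t \in [a,b]$. Consequently, if $\mathcal{K}$ is a $1$-temporal clique with lifetime $L(\mathcal{K}) = [s,t]$, then every edge of its template appears at every time of $[s,t]$; in particular $\mathcal{K}$ contains time-edges at both endpoints $s$ and $t$, and its vertex set is witnessed at each interior time as well.

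The heart of the argument is a shrinking/exchange step. Suppose, towards a contradiction, that the left endpoint $s$ of $L(\mathcal{K})$ is a filler snapshot (the case where $t$ is filler is symmetric). By the previous paragraph $\mathcal{K}$ contains at least one time-edge at time $s$, and since $\mathcal{P}'_n$ has no edge appearances at filler times, \emph{every} such time-edge is an addition, i.e.\ it belongs to the modification set of the solution. I would then delete from the solution all time-edges of $\mathcal{K}$ occurring at time $s$, obtaining $\mathcal{K}' = \mathcal{K} \setminus \{(e,s) : (e,s)\in \mathcal{K}\}$. Because $\Delta_1 = 1$ guarantees that every relevant edge still appears at every time of $[s+1,t]$, the set $\mathcal{K}'$ is again a $1$-temporal clique, on the same vertex set and with lifetime $[s+1,t]$ (or is empty when $s=t$, in which case the clique simply disappears). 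Replacing $\mathcal{K}$ by $\mathcal{K}'$ while leaving every other clique untouched yields a temporal graph that still realises a collection of pairwise $\Delta_2$-independent templates: independence is preserved because the lifetime of $\mathcal{K}$ only shrank and its vertex set did not grow (this is the same observation used in Lemma~\ref{prop:within-lifetime}). Hence the modified graph is still a $(1,5)$-cluster temporal graph, yet we have undone at least one addition without introducing any new modification, so it uses strictly fewer modifications than the original solution, contradicting optimality.

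I expect the only point needing genuine care is the verification that this local surgery is legitimate on both counts: that removing the time-$s$ appearances deletes no original time-edge (guaranteed since $s$ is a filler snapshot) and destroys neither the $\Delta_1$-density of $\mathcal{K}'$ nor the $\Delta_2$-independence of the whole collection (guaranteed by $\Delta_1 = 1$ together with the fact that we only shrink intervals). Everything else is a direct unpacking of the definitions of density, lifetime, and the modification set.
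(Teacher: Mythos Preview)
Your proposal is correct and follows essentially the same argument as the paper: assume an endpoint is a filler snapshot, observe that every time-edge there is an addition, drop those time-edges to obtain a clique with strictly smaller lifetime, and note that shrinking the interval preserves $\Delta_2$-independence while strictly reducing the edit cost. Your write-up is in fact more explicit than the paper's on why $\Delta_1=1$ forces every edge to appear at the endpoint (and hence why the removed edges are genuine additions), and you correctly handle the degenerate $s=t$ case.
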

\begin{poc}[\ref{clm:nonfillend}] Assume for a contradiction that at least one endpoint of $L(\mathcal{K})$ is a filler snapshot. Filler snapshots do not contain any edges of $\mathcal{P}'_n$. Thus, when constructing $\mathcal{K}$, not adding the time-edges at this {endpoint} filler snapshot results in a $\Delta_1$-temporal clique $\mathcal{K}'$ with strictly fewer time-edges than $\mathcal{K}$, whose lifetime is strictly contained within the lifetime of $\mathcal{K}$.  Replacing $\mathcal{K}$ with $\mathcal{K}'$ clearly gives a temporal graph at lesser edit distance from $\PP_n'$ than $\mathcal{K}$, so it remains only to verify that this new temporal graph is in fact a $(1,5)$-cluster temporal graph.  To see this, note that if $\mathcal{K}$ is $5$-independent from all other cliques then so is any subset of $\mathcal{K}$, so we cannot violate the independence condition. We therefore obtain a contradiction to the hypothesis that $\mathcal{K}$ is part of an optimal solution. 
\end{poc}

\begin{clm}\label{clm:splitclique} Let $\mathcal{K}$ be any $1$-temporal clique on vertex set $X$ in a solution to \ETC on $\mathcal{P}'_n$ that contains two time-edges in distinct non-filler snapshots. Then there exists a non-empty set $\mathcal{D}$ of pairwise $5$-independent $1$-temporal cliques on the vertex set $X$ such that replacing $\mathcal{K}$ with $\mathcal{D}$ results in a $(1,5)$-cluster temporal graph at strictly lower total edit cost from $\mathcal{P}'_n$. Furthermore, for all $\mathcal{D}' \in \mathcal{D}$, we have  $L(\mathcal{D}') \subset L(\mathcal{K})$, and there exists a $\mathcal{D}' \in \mathcal{D}$ such that $L(\mathcal{D}')=\{t\}$, where $t$ is a non-filler snapshot of $\mathcal{P}'_n$.
 \end{clm}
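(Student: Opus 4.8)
The plan is to exploit the fact that, with $\Delta_1 = 1$, a $1$-temporal clique on $X$ with lifetime $L(\mathcal{K}) = [a,b]$ must contain \emph{every} pair of vertices of $X$ as an edge at \emph{every} time in $[a,b]$. In particular it contains an appearance at each filler snapshot inside $[a,b]$, and since filler snapshots of $\mathcal{P}'_n$ carry no edges, every such appearance is an added time-edge. Write $t_1 < \cdots < t_m$ for the non-filler snapshots lying in $[a,b]$; by hypothesis $m \ge 2$, and since consecutive non-filler snapshots are exactly $4$ apart we have $\ell := b - a + 1 \ge 4(m-1) + 1$. I will also use that, because any other clique of the solution sharing a vertex with $\mathcal{K}$ must by $5$-independence lie at temporal distance at least $5$ from $[a,b]$, the clique $\mathcal{K}$ is the unique clique accounting for all time-edges on pairs from $X$ within $[a,b]$; this lets me analyse the edit cost charged to this region in isolation.

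The replacement I propose is the single clique $\mathcal{D} = \{(X, \{t^*\})\}$, where $t^*$ is chosen among $t_1, \ldots, t_m$ to maximise the number of path-edges of $\mathcal{P}'_n$ on $X$ present at that snapshot. The required structural properties are then immediate: $\mathcal{D}$ is non-empty and vacuously pairwise $5$-independent, its unique element has singleton lifetime $\{t^*\}$ at a non-filler snapshot, and $\{t^*\} \subsetneq [a,b] = L(\mathcal{K})$ since $\mathcal{K}$ spans at least two snapshots. That the modified graph remains a $(1,5)$-cluster temporal graph follows because the template of $\mathcal{D}$ has vertex set $X$ and an interval contained in $[a,b]$, so any template that was $5$-independent from $(X,[a,b])$ stays $5$-independent from $(X,\{t^*\})$; no independence constraint with the untouched cliques can be broken.

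The substantive step is the cost comparison. Writing $p = |X|$, the cost charged to $\mathcal{K}$ inside this region is purely additive and equals $\binom{p}{2}\ell$ minus the number of path-edges of $\mathcal{P}'_n$ on $X$ appearing in $[a,b]$ (all at non-filler snapshots, as $\mathcal{K}$ introduces no deletions there). After replacement, I delete each $X$-path-edge of $\mathcal{P}'_n$ at the discarded snapshots $t_i \ne t^*$ and add the missing edges at $t^*$; bounding the number of path-edges present at any non-filler snapshot by $p - 1$ (the subgraph a path induces on $X$ is a forest) and using $\ell - 1 \ge 4(m-1)$, the cost decreases by at least $\binom{p}{2}(\ell - 1) - 2(m-1)(p-1) \ge 2(m-1)(p-1)^2 > 0$. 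I expect the main obstacle to be this bookkeeping rather than the final inequality: one must check that $\mathcal{K}$ incurs no deletions inside $[a,b]$, that the only fresh deletions caused by $\mathcal{D}$ are exactly the $X$-path-edges at the snapshots $t_i \ne t^*$, and that none of those time-edges could have been absorbed by a different clique of the solution --- all of which rest on the uniqueness observation from the first paragraph. Once both costs are written down, the comparison reduces to the routine estimate above.
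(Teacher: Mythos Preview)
Your proof is correct. The paper takes a slightly different route: rather than collapsing $\mathcal{K}$ to a single non-filler snapshot in one step, it peels off the leftmost non-filler snapshot $t_\ell$ into a singleton clique $\mathcal{K}_1 = (X,\{t_\ell\})$ and, when $t_r > t_{\ell+1}$, keeps the remainder as $\mathcal{K}_2 = (X,[t_{\ell+2},t_r])$, so that $\mathcal{D} = \{\mathcal{K}_1,\mathcal{K}_2\}$ (or just $\{\mathcal{K}_1\}$). The cost saving there comes from dropping the three filler snapshots between $t_\ell$ and $t_{\ell+1}$, saving $3\binom{p}{2}$ additions against at most $p-1$ new deletions at $t_{\ell+1}$.

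Your one-shot collapse extracts a larger saving, $2(m-1)(p-1)^2$, by discarding all filler snapshots at once, and yields a singleton $\mathcal{D}$; this is arguably cleaner for the claim as stated, while the paper's incremental split makes the ``one member of $\mathcal{D}$ has singleton lifetime'' conclusion fall out automatically from $\mathcal{K}_1$. Your bookkeeping is sound: the key point---that no time-edge with both endpoints in $X$ and time in $[a,b]$ can lie in any other clique of the solution, by $5$-independence---is exactly what justifies isolating the cost comparison to this region, and you identify and use it correctly. (Note also that your choice of $t^*$ as the snapshot maximising path-edges on $X$ is harmless but unnecessary: the bound $E - e_{t^*} \le (m-1)(p-1)$ holds for any non-filler snapshot $t^*$ in $[a,b]$.) One incidental advantage of your argument is that it never assumes the endpoints of $L(\mathcal{K})$ are non-filler snapshots; the paper's proof invokes the preceding claim about optimal solutions for that, whereas your version applies verbatim to an arbitrary solution, matching the claim as stated.
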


\begin{poc}[\ref{clm:splitclique}] Let $\tau=\{t_1, \dots\} $ be the set of non-filler snapshots of $ \mathcal{P}'_n$. Let \begin{align*} \ell
	&= \min_{i\in \mathbb{Z}^+ }\{i: \mathcal{K} \text{ contains an edge at time } t_{i}\in \tau\}\\
	r &=\max_{i\in \mathbb{Z}^+} \{i: \mathcal{K} \text{ contains an edge at time } t_{i}\in \tau\},\end{align*} be the leftmost and rightmost non-filler snapshots containing a time-edge in $\mathcal{K}$, respectively. Note that, because $\Delta_1=1$ and $\mathcal{K}$ contains at least two time-edges which correspond to distinct snapshots in $\mathcal P_n$, it must be that $\mathcal{K}$ consists of a copy of the static clique on $X$ at every snapshot in its lifetime. Let $|X|=m$. Observe that  $\mathcal{K}$ has lifetime $[t_{\ell},t_{r}]$, by Claim \ref{clm:nonfillend}, and so by hypothesis there must be a copy of a clique on $X$ at least in the interval $[t_{\ell},t_{\ell+1}]$. Let $\mathcal{K}_1$ be the $1$-temporal clique on $X$ with lifetime $\{t_{\ell}\}$ and if $t_{r}>t_{\ell +1}$ then let $\mathcal{K}_2$ be the $1$-temporal clique on $X$ with lifetime $[t_{\ell+2},t_{r}]$, otherwise $\mathcal{K}_2$ is not defined. We will now prove that replacing $\mathcal{K}$ with the set of cliques $\mathcal{D}$ where $\mathcal{D} = \{\mathcal{K}_1,\mathcal{K}_2\}$ if $\mathcal{K}_2$ is defined, and $\{\mathcal{K}_1\}$ otherwise, will result in a $(1,5)$-cluster temporal graph with smaller edit distance from $\mathcal{P}'_n$.   
	
	To begin observe that, since $\mathcal{K}$ contains time-edges at least in $t_{\ell}$ and $t_{\ell+1}$, a copy of the clique on $X$ must be added to each snapshot in $[t_{\ell}+1,t_{\ell+1}-1]$ to form $\mathcal{K}$. Since these edges are not present in $\mathcal{K}_1$ or $\mathcal{K}_2$, not adding them saves $3\binom{m}{2}$ edge additions. Note that if $t_{r}>t_{\ell +1}$ then additional edge additions in the interval $[t_{\ell+1}+1,t_{\ell+2}-1]$ will be saved. When forming $\mathcal{K}_1$ we must remove all time-edges with both endpoints in $X$ from the snapshot $t_{\ell+1}$, since the underlying graph is a path there are at most $m-1$ such edges. At all snapshots other than $t_{\ell}+1,\dots, t_{\ell+2}-1$ we make the same changes as when forming $\mathcal{K}$ and thus there is no net change in the number of edits. Thus, at least 
	\[3\cdot \binom{m}{2} -(m-1) >0,\]fewer edits are required to form $\mathcal{K}_1$ (and $\mathcal{K}_2$ if $t_{r}>t_{\ell +1}$) over forming $\mathcal{K}$. 
	
	Finally, we can note that even if $\mathcal{K}_2$ was non-empty, then $\mathcal{K}_1$ and $\mathcal{K}_2$ would be still $5$-independent by construction. Moreover, since the time-edges of both $\mathcal{K}_1$ and (if it is defined) $\mathcal{K}_2$ are subsets of the time-edges of $\mathcal{K}$, both cliques must be $5$-independent from any other temporal clique in the original solution.\end{poc}
	
	Using the two claims above we can establish the following claim.

\begin{clm}\label{clm:oneshot} Each clique $\mathcal{K}$ in any optimal solution to \ETC on $\mathcal{P}'_n$ satisfies $L(\mathcal{K})=\{t\}$, where $t$ is a non-filler snapshot of $\mathcal{P}'_n$.\end{clm}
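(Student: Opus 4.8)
The plan is to combine the two preceding claims directly: Claim~\ref{clm:nonfillend} pins the endpoints of each clique's lifetime to non-filler snapshots, while Claim~\ref{clm:splitclique} shows that any clique spanning two such snapshots can be split at a strict saving. First I would fix an optimal solution to \ETC on $\mathcal{P}'_n$ and let $\mathcal{K}$ be an arbitrary $1$-temporal clique occurring in it, writing $L(\mathcal{K}) = [t_\ell, t_r]$. By Claim~\ref{clm:nonfillend}, both endpoints $t_\ell$ and $t_r$ are non-filler snapshots of $\mathcal{P}'_n$.

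Next I would argue by contradiction, assuming $L(\mathcal{K})$ is not a single snapshot, i.e.\ $t_\ell \neq t_r$. By the definition of lifetime, $\mathcal{K}$ contains time-edges at both $t_\ell$ and $t_r$; since these are two distinct non-filler snapshots, $\mathcal{K}$ contains time-edges in two distinct non-filler snapshots. Here I would note explicitly that, because non-filler snapshots occur only at the times $4t-3$ and are therefore separated by the three intervening filler snapshots, distinct non-filler endpoints really do correspond to distinct times; this is precisely the hypothesis required to invoke Claim~\ref{clm:splitclique}.

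Applying Claim~\ref{clm:splitclique} to $\mathcal{K}$, I would replace it by the set $\mathcal{D}$ of pairwise $5$-independent $1$-temporal cliques the claim provides. The claim guarantees both that the resulting temporal graph is again a $(1,5)$-cluster temporal graph and that its total edit distance from $\mathcal{P}'_n$ is strictly smaller than before the replacement. This contradicts the optimality of the chosen solution. Hence $t_\ell = t_r$, so $L(\mathcal{K}) = \{t_\ell\}$ is a single snapshot; and since its two (now coinciding) endpoints are non-filler by Claim~\ref{clm:nonfillend}, the snapshot $t_\ell$ is non-filler, which is exactly the conclusion sought.

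I do not anticipate any genuine obstacle, since all the substantive work has already been carried out in Claims~\ref{clm:nonfillend} and~\ref{clm:splitclique}; the proof is essentially a bookkeeping argument that chains them together. The only point that warrants a line of care is the elementary observation that two distinct non-filler endpoints force $\mathcal{K}$ to place time-edges in two \emph{distinct} non-filler snapshots, ensuring that Claim~\ref{clm:splitclique} is legitimately applicable rather than vacuously triggered.
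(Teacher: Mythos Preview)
Your proposal is correct and follows essentially the same argument as the paper: assume for contradiction that some clique in an optimal solution has lifetime greater than one, use Claim~\ref{clm:nonfillend} to deduce that its lifetime then spans at least two distinct non-filler snapshots, and invoke Claim~\ref{clm:splitclique} to contradict optimality. Your write-up is slightly more explicit about why distinct non-filler endpoints yield time-edges in distinct non-filler snapshots, but the logical structure is identical to the paper's proof.
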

\begin{poc}[ \ref{clm:oneshot}]Assume for a contradiction that some $1$-temporal clique $\mathcal{K}$ in an optimal solution to \ETC has a lifetime greater than one. By Claim \ref{clm:nonfillend} we can deduce that $L(\mathcal{K})$ contains at least two consecutive non-filler snapshots. It follows immediately from Claim~\ref{clm:splitclique} that our solution is not optimal, since we know that we can obtain another solution with lesser total edit cost by replacing $\mathcal{K}$ with a different set $\mathcal{D}$ of temporal cliques, each of which has a smaller lifetime than $\mathcal{K}$. This gives the required contradiction.\end{poc}

The next claim is key to the reduction.

\begin{clm}\label{clm:oneedge} There exists an optimal solution to \ETC on $\mathcal{P}'_n$ where each $1$-temporal clique consists of a single time-edge $(e,t)$, where $t$ is a non-filler snapshot of $\mathcal{P}'_n$. Furthermore, this solution can be obtained using deletions only.\end{clm}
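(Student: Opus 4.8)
The plan is to begin from an arbitrary optimal solution and rewrite each of its cliques as a single \emph{present} time-edge without increasing the total cost. By Claim~\ref{clm:oneshot}, every $1$-temporal clique $\mathcal{K}$ in an optimal solution has lifetime $L(\mathcal{K})=\{t\}$ for a single non-filler snapshot $t$, so $\mathcal{K}$ is just a static clique on some vertex set $X$ realised at time $t$. Writing $m=|X|$, I would let $H$ denote the static graph on $X$ whose edges are exactly those time-edges of $\mathcal{P}'_n$ that already appear at time $t$ with both endpoints in $X$. Crucially, since the underlying graph is a path, $H$ is a subgraph of $P_n[X]$ and hence a linear forest with $|E(H)|\le m-1$; forming $\mathcal{K}$ therefore requires exactly $\binom{m}{2}-|E(H)|$ edge \emph{additions} and no deletions internal to $X$ at time $t$.

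The replacement step is to fix a maximum matching $M$ of $H$ and substitute $\mathcal{K}$ by the collection of single time-edges $\{(e,t): e\in M\}$. First I would check that the result is still a $(1,5)$-cluster temporal graph: the new time-edges all lie in the snapshot $t$ and have pairwise disjoint endpoints (as $M$ is a matching), hence are mutually $5$-independent, and since each is a subset of $\mathcal{K}$ it inherits $5$-independence from every other clique in the solution. Moreover, forming these edges from $\mathcal{P}'_n$ requires only deleting the $|E(H)|-|M|$ edges of $H$ not used by $M$, with no additions. Since the cliques sharing a common snapshot have disjoint vertex sets by $5$-independence, the sets $H$ for distinct cliques are vertex-disjoint, so these replacements can be carried out simultaneously without interference.

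The heart of the argument is the cost comparison: I must show $|E(H)|-|M|\le \binom{m}{2}-|E(H)|$, i.e.\ $2|E(H)|-|M|\le\binom{m}{2}$. Because $H$ is a linear forest, each component admits a matching of size at least half its number of edges, so $|M|\ge |E(H)|/2$; thus $2|E(H)|-|M|\le \tfrac{3}{2}|E(H)|\le \tfrac{3}{2}(m-1)\le \binom{m}{2}$ for all $m\ge 3$, the cases $m\le 2$ being immediate (then $\mathcal{K}$ is already a single present edge, or consists solely of added edges and can simply be discarded at no cost). Applying this to every clique yields a solution of no greater total cost; as the starting solution was optimal, so is the result, and now every clique is a single time-edge $(e,t)$ with $t$ non-filler. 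Finally, every retained time-edge lies in $\mathcal{P}'_n$ while every other time-edge of $\mathcal{P}'_n$ is deleted, so the solution is obtained by deletions only, as claimed.

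The main obstacle I anticipate is making the cost accounting airtight: one must cleanly isolate the edits internal to $X$ at time $t$ (the only ones that change under the replacement) from the cross-clique and cross-snapshot deletions (which are unaffected, since the outside endpoints are untouched), and then handle the tight case $m=3$ — where the inequality becomes an equality and so the linear-forest matching bound $|M|\ge|E(H)|/2$ must be used in full — with care.
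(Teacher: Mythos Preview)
Your argument is correct and follows the same overall strategy as the paper: start from an optimal solution in which, by Claim~\ref{clm:oneshot}, every clique lives in a single non-filler snapshot, and replace each clique by a matching of edges already present in $\mathcal{P}'_n$ at that snapshot, checking that the edit cost does not increase.

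The one genuine difference is in how you reduce to a tractable picture of $H$. The paper invokes Lemma~\ref{induced_connectivity} to assume that the optimal solution already has $H$ connected, so $H$ is a single subpath $P_\ell$; the comparison is then simply between deleting roughly $\ell/2$ edges and adding $\binom{\ell}{2}-(\ell-1)$ edges. You instead keep $H$ as an arbitrary linear forest and use the bound $|M|\ge |E(H)|/2$ for a maximum matching in a linear forest to obtain $2|E(H)|-|M|\le \tfrac{3}{2}(m-1)\le \binom{m}{2}$ for $m\ge 3$. Your route is a little more self-contained (no appeal to the connectivity lemma) and handles the disconnected case uniformly; the paper's route gives a slightly sharper inequality with less work once connectivity is assumed. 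Your separate treatment of $m\le 2$ and your observation that cliques sharing a snapshot have disjoint vertex sets (so the replacements do not interfere) are both needed and correctly handled.
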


\begin{poc}[ \ref{clm:oneedge}]Let $\{\mathcal{K}_i: 1 \le i \le s\}$ be an optimal solution to our instance of \ETC.  First of all let us note that, by Claim \ref{clm:oneshot}, each $1$-temporal clique $\mathcal{K}_i$ consists of a collection of time-edges that all appear at a single non-filler snapshot $t$ of $\mathcal{P}'_n$. We are going to prove that any of these $1$-temporal cliques can be replaced with a set of pairwise $5$-independent single time-edges, all of which are present in the input temporal graph, without increasing the total edit cost.

Let us fix a clique $\mathcal{K}_i$ in our optimal solution, and suppose that $L(\mathcal{K}_i) = \{t\}$, where $t$ is a non-filler snapshot of $\mathcal{P}'_n$.  Since the underlying graph is a path, the set of edges appearing at time $t$ consists of either a single path or a disjoint collection of paths.  By Lemma \ref{induced_connectivity}, we may assume that the time-edges of $\mathcal{K}_i$ which also belong to $\mathcal{P}_n'$ form a single path.  Suppose that this path has $\ell$ vertices.

First of all we notice that if $\ell =2$, then it is already a $1$-temporal clique with one edge, so there is no modification to apply.

If $\ell=3$, then $P_3$ consists of two time-edges $\{(uv,t),(vw,t)\}$, so either deleting one of these time-edges, or adding the missing time-edge $(uw,t)$ would create a $1$-temporal clique. Since the cost of doing either is the same, we can assume, without loss of generality, that an edge is deleted to obtain a $1$-temporal clique consisting of a single time-edge.

Let us now assume $\ell>3$. Then, deleting $\lfloor \frac{\ell}{2} \rfloor$ time-edges we would obtain a set of $1$-temporal cliques which consists of single time-edges and which are pairwise disjoint in the underlying graph, and so are pairwise $5$-independent. We now consider the cost of constructing a $1$-temporal clique from $P_{\ell}$ that contains all of its vertices. In this case we will need to add exactly $\frac{\ell(\ell-1)}{2} - (\ell-1)=\frac{1}{2}(\ell-1)(\ell-2)$ time-edges in $\mathcal{P}'_n$ at time $t$. Thus since $\ell>3$ the number of edges saved by using deletions only is
\[\frac{(\ell-1)(\ell-2)}{2}- \left\lfloor \frac{\ell}{2}\right\rfloor {\geq  \ell-1  - \frac{\ell }{2}  > 0}.\]   

Hence, for any $\ell\geq 1$ there exists a solution with minimum cost which consists of pairwise $5$-independent time-edges and can be obtained using only deletion.\end{poc}

By Claim \ref{clm:oneedge} above, there exists an optimal solution to \ETC on $\mathcal{P}'_n$ consisting of pairwise $5$-independent time-edges that is obtained using deletions only. Since $(\mathcal{P}_n',1,5,k')$ is a yes-instance, we know that there exists such a solution in which the set $\Pi$ of time-edges deleted satisfies $|\Pi| \leq k'$.  
Let $\mathcal{M}'$ be the set of time-edges in the resulting temporal graph, and note that every element of $\mathcal{M}'$ appears in a non-filler snapshot.

\begin{clm}\label{clm:MisMatch}$\mathcal{M} := \{(e,(t+3)/4) \colon (e,t) \in \mathcal{M}'\}$ is a $2$-temporal matching of $\mathcal{P}_n$.
\end{clm}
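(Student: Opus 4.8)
The plan is to show directly that the pullback map sending a time-edge $(e,s)$ of $\mathcal{P}_n'$ to $(e,(s+3)/4)$ carries the pairwise $5$-independence of $\mathcal{M}'$ to pairwise $2$-independence of $\mathcal{M}$. First I would check that $\mathcal{M}$ is genuinely a set of time-edges of $\mathcal{P}_n$: since the optimal solution guaranteed by Claim~\ref{clm:oneedge} uses deletions only, we have $\mathcal{M}' \subseteq \mathcal{E}(\mathcal{P}_n')$, and by the remark preceding the claim every $(e,s) \in \mathcal{M}'$ appears at a non-filler snapshot, so $s = 4t-3$ for some $t \in \mathbb{Z}^+$ with $(e,t) \in \mathcal{E}(\mathcal{P}_n)$. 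Hence $(e,(s+3)/4) = (e,t)$ is well-defined and lies in $\mathcal{E}(\mathcal{P}_n)$, so $\mathcal{M} \subseteq \mathcal{E}(\mathcal{P}_n)$; moreover this map is injective, which identifies $\mathcal{M}'$ and $\mathcal{M}$ bijectively (a fact I will want afterwards when comparing $|\mathcal{M}|$ with $k$).

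Next I would verify the independence condition. Take two distinct time-edges $(e,t),(e',t') \in \mathcal{M}$, arising from $(e,4t-3),(e',4t'-3) \in \mathcal{M}'$. Since $\mathcal{M}'$ consists of pairwise $5$-independent time-edges, either $e \cap e' = \emptyset$ or $|(4t-3)-(4t'-3)| \geq 5$. In the first case $(e,t)$ and $(e',t')$ are immediately $2$-independent because their underlying edges are disjoint. In the second case the stretching identity $|(4t-3)-(4t'-3)| = 4|t-t'|$ gives $4|t-t'| \geq 5$; as $|t-t'|$ is a non-negative integer and $4 \cdot 1 < 5$, this forces $|t-t'| \geq 2$, so again $(e,t)$ and $(e',t')$ are $2$-independent.

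Since every pair of distinct time-edges of $\mathcal{M}$ is $2$-independent, $\mathcal{M}$ is by definition a $2$-temporal matching, as required. The argument is essentially the converse of Claim~\ref{clm:match}: there a gap $|t-t'|\geq 2$ was inflated to $\geq 8$ by the factor-four stretch, whereas here we only need the weaker direction that a post-stretch gap of at least $5$ cannot have come from a pre-stretch gap of just $1$. I do not expect any genuine obstacle; the only point requiring a moment's care is the integrality step $4|t-t'| \geq 5 \Rightarrow |t-t'| \geq 2$, which is precisely why three filler snapshots (stretch factor four) were inserted rather than a smaller spacing.
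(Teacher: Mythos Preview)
Your proposal is correct and follows essentially the same approach as the paper: both argue that $5$-independence of $(e,4t-3)$ and $(e',4t'-3)$ in $\mathcal{M}'$ together with $e\cap e'\neq\emptyset$ gives $4|t-t'|\geq 5$, and then use integrality to conclude $|t-t'|\geq 2$. Your additional care in verifying that $\mathcal{M}\subseteq\mathcal{E}(\mathcal{P}_n)$ and that the correspondence is bijective is a welcome elaboration but not a departure from the paper's argument.
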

\begin{poc}[ \ref{clm:MisMatch}]
It suffices to demonstrate that, for every pair of time-edges $(e_1,t_1)$ and $(e_2,t_2)$ in $\mathcal{M}$, either $e_1 \cap e_2 = \emptyset$ or $|t_2 - t_1| \ge 2$.  Suppose that $e_1 \cap e_2 \neq \emptyset$.  By construction of $\mathcal{M}$, we know that $(e_1,4t_1 -3)$ and $(e_2,4t_2 - 3)$ belong to $\mathcal{M}'$.  Since $\mathcal{M}'$ is a set of single time-edges which form a $(1,5)$-cluster temporal graph, we know that the temporal cliques consisting of the single time-edges $(e_1,4t_1 - 3)$ and $(e_2,4t_2 - 3)$ are $5$-independent; as we are assuming that $e_1 \cap e_2 \neq \emptyset$, this implies that $|(4t_2 - 3) - (4t_1 - 3)| \ge 5$.  We can therefore deduce that $4|t_2 - t_1| \ge 5$, implying that $|t_2 - t_1| > 1$; since $t_1,t_2 \in \mathbb{Z}^+$, it follows that $|t_2 - t_1| \ge 2$, as required.\end{poc}

 Thus by Claim \ref{clm:MisMatch} if $(\mathcal{P}'_n, 1,5, k')$ is a YES-instance of the \ETC problem with $k'=|\mathcal{E}(\mathcal{P})| -k$ then $(\mathcal{P}'_n, 2, k)$ is a YES-instance of the \TM problem since $|\mathcal{M}| \geq |\mathcal{E}(\mathcal{P})| -k' = k$. \end{proof}

\subsection{Bounding the Number of Edge Appearances}\label{sec:pathspoly}

  We have seen from the previous section that \ETC is \NP-hard, even on temporal graphs $\mathcal{P}_n \in \mathfrak{F}_n$ which have a path as their underlying graph. In this section we show that, if additionally the number of appearances of each edge in $\mathcal{P}_n$ is bounded by a fixed constant, then the problem can be solved in time polynomial in the size of the input temporal graph.   
\begin{theorem}\label{thm:DPonPath}
Let $(\mathcal P_n, \Delta_1, \Delta_2,k)$ be any instance of \ETC where $\mathcal{P}_n \in \mathfrak{F}_n$ and there exists a natural number $\sigma$ such that $|\mathcal T(e)| \leq \sigma$ for {every} $e\in E(P_n)$. Then, \ETC on $(\mathcal P_n, \Delta_1, \Delta_2,k)$ is solvable in time $\mathcal{O}(T^{4\sigma} \sigma^2\cdot n^{2\sigma +1 })$.
\end{theorem}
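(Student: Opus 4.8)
The plan is to compute, by dynamic programming across the vertices of the path, the minimum number of time-edge modifications needed to turn $\mathcal{P}_n$ into a $(\Delta_1,\Delta_2)$-cluster temporal graph, and then to compare this quantity with $k$. Throughout I identify a candidate solution with the collection $\mathcal{C}$ of clique templates it realises, which is legitimate by Lemma~\ref{sat_cliques}. Two preliminary reductions shrink the search space. First, by Lemma~\ref{prop:within-lifetime} I may assume every clique in a cost-minimising solution has its time interval contained in $[1,T]$, so there are only $O(T^2)$ candidate intervals. Second, and crucially, by Lemma~\ref{induced_connectivity} a cost-minimising collection $\mathcal{C}$ satisfies, for each template $(X,[a,b])$, that the underlying graph of $\mathcal{G}[X]|_{[a,b]}$ is connected; since the underlying graph is the path $P_n$, this forces $X$ to be a \emph{contiguous} subpath $\{v_l,\dots,v_r\}$ in which every path edge $v_jv_{j+1}$ ($l\le j<r$) already appears at some time in $[a,b]$.

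The key combinatorial observation is that only a bounded number of cliques can be active at any point of the path. Say a clique \emph{crosses gap $i$} if its vertex set contains both $v_i$ and $v_{i+1}$. Any two cliques crossing gap $i$ share $v_i$ and $v_{i+1}$, so by $\Delta_2$-independence their time intervals are at least $\Delta_2$ apart and in particular disjoint. Moreover, by the contiguity consequence of Lemma~\ref{induced_connectivity}, each clique crossing gap $i$ must contain at least one appearance of the edge $v_iv_{i+1}$ within its time interval. Since these time intervals are pairwise disjoint and $v_iv_{i+1}$ has at most $\sigma$ appearances, at most $\sigma$ cliques cross any single gap; consequently at most $2\sigma$ cliques contain any fixed vertex.

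I would run the dynamic program from left to right across $v_1,\dots,v_n$. The state after processing $v_i$ records the set of \emph{open} cliques, namely those whose vertex set contains $v_i$ and extends further right; by the bound above there are at most $2\sigma$ of these, and I describe each one by its left endpoint ($\le n$ choices) together with its time interval ($O(T^2)$ choices), giving $O((nT^2)^{2\sigma})=O(n^{2\sigma}T^{4\sigma})$ states per position. Transitioning from $v_i$ to $v_{i+1}$ amounts to choosing which open cliques close at $v_i$, which extend to include $v_{i+1}$, and which new cliques are born at $v_{i+1}$ (their left endpoint is forced to be $i+1$, so only their $O(T^2)$ time intervals need to be guessed); I reject any transition in which two cliques sharing $v_{i+1}$ fail to be $\Delta_2$-independent, which is exactly what guarantees global $\Delta_2$-independence. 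Edit cost is charged locally: at the edge $v_iv_{i+1}$ I delete every original appearance of $v_iv_{i+1}$ lying in no crossing clique's interval, and for each crossing clique I add the appearances needed to make $v_iv_{i+1}$ be $\Delta_1$-dense in its interval; and whenever a clique of current width $m$ absorbs a new vertex I pay $(m-2)\lceil (b-a+1)/\Delta_1\rceil$ additions for the newly created non-path pairs, which, having no original appearances, must be filled from scratch, a quantity computable in $O(1)$ from the stored left endpoint and interval. Each transition and cost evaluation touches $O(\sigma)$ cliques and $O(\sigma)$ appearances, taking time $O(\sigma^2)$; multiplying the number of states, the $n$ positions, and the per-step cost yields the claimed bound $\mathcal{O}(T^{4\sigma}\sigma^2 n^{2\sigma+1})$.

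The main obstacle is twofold. First is establishing the $\sigma$ bound on simultaneously active cliques rigorously: this rests on combining the $\Delta_2$-independence of cliques sharing a vertex (forcing disjoint intervals) with the connectivity consequence of Lemma~\ref{induced_connectivity} (forcing each such clique to contain an original appearance of the crossed edge), and it is precisely what keeps the state space polynomial. Second, and more delicate, is verifying that the local cost-charging scheme counts every modification exactly once: each original time-edge is either kept by the unique crossing clique whose interval contains it or else deleted, and each addition is attributed either to a path pair (when an edge is traversed) or to a non-path pair (when the second endpoint of that pair is absorbed into a clique), with no double counting. I would also confirm that the program ranges over all solutions of the structurally restricted form, so that by Lemmas~\ref{induced_connectivity} and~\ref{prop:within-lifetime} its optimum equals the true minimum edit distance; comparing this optimum with $k$ then decides the instance.
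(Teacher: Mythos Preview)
Your proposal is correct and follows essentially the same approach as the paper: a left-to-right dynamic program along the path whose state records the cliques currently containing $v_i$, with the crucial bound on the number of such cliques obtained by combining Lemma~\ref{induced_connectivity} (forcing each active clique to contain an appearance of the crossed path edge) with the $\sigma$ cap on appearances. The paper's cost functions $f_i,g_i,h_i$ correspond exactly to your three local charges (non-path additions upon absorption, deletions of stray appearances of $v_{i-1}v_i$, and additions to make $v_{i-1}v_i$ $\Delta_1$-dense). One small point: your ``open'' cliques all cross the single gap $i$, so there are at most $\sigma$ of them rather than $2\sigma$; the paper uses the tighter $\sigma$ bound, giving $T^{2\sigma}n^{\sigma}$ states per position, and the extra $T^{2\sigma}n^{\sigma}$ factor in the final bound comes from minimising over predecessor states---your arithmetic arrives at the same answer but via a slightly loose state count that absorbs the transition enumeration.
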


We prove this theorem using dynamic programming where we go along the underlying path $P_n$ uncovering one vertex in each step. In particular, if we have reached vertex $i$ then we try to extend the current set of templates on the first $i-1$ vertices of the path to an optimal set of templates also including the $i$\textsuperscript{th} vertex.

Let $\mathcal{C}^i=(C_1, \dots, C_r)$ be a (potentially empty) collection of templates, where each template $C_j$ with $1\leq j \leq r$ has vertex set $X_j\subseteq \{v_1, \dots, v_i\} $ {satisfying} $|X_j| \geq 2$. For any vertex $v_i \in V(P_n)$ with $i \geq 2$, we will say that $\mathcal{C}^i$ is {\it feasible for $v_i$} if it is pairwise $\Delta_2$-independent and respects the following:

\begin{enumerate}[(i)]
    \item\label{itm:feas1} for each $C_j = (X_j,[a_j,b_j])$, we have $X_j = \{v_{\ell_j}, v_{\ell_j+1},\dots,v_i\}$ for some $\ell_j< i$,
   \item\label{itm:feas2} for each $j$, for any $k \in \{\ell_j,\dots,i-1\}$ there exists $t \in [a_j,b_j]$ such that $t \in \mathcal T(v_kv_{k+1})$.
\end{enumerate}
For the case $i=1$ we define the empty collection $\mathcal{C}=\emptyset$ to be the only collection feasible for $v_1$. Let $\Gamma^i$ be the class of collections of templates feasible for  $v_i$. {Feasible templates will be the key to our dynamic program. In particular, they form the interface between the part of the solution that has already been fixed, and what can still be edited in the future steps.}

\begin{lemma}\label{explaincor}
Let $n\in \mathbb{Z}^+$, $\mathcal{P}_n \in \mathfrak{F}_n$, and $\mathcal C \in \mathfrak{T}({\mathcal{P}_n},\Delta_2)$  be a collection minimising \[\min_{\mathcal{G}_{\mathcal C} \text{ $\Delta_1$-realises } \mathcal C} |{\mathcal{P}_n} \triangle \mathcal{G}_{\mathcal C}|.\] Then, for any set $\mathcal{C}^n\subseteq \mathcal C$ of templates that contains $v_n$  we have $\mathcal{C}^n\in\Gamma^n$.   
\end{lemma}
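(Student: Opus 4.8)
The plan is to verify directly that the subcollection $\mathcal{C}^n$ meets every clause in the definition of feasibility for $v_n$, extracting essentially all the content from Lemma~\ref{induced_connectivity}. The pairwise $\Delta_2$-independence requirement is immediate: since $\mathcal{C}^n \subseteq \mathcal{C}$ and $\mathcal{C} \in \mathfrak{T}(\mathcal{G},\Delta_2)$ is by definition a collection of pairwise $\Delta_2$-independent templates, every subcollection inherits this property. So the real work is to establish conditions \ref{itm:feas1} and \ref{itm:feas2} for each template of $\mathcal{C}^n$.

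For this I would invoke Lemma~\ref{induced_connectivity}: since $\mathcal{C}$ minimises $\min_{\mathcal{G}_{\mathcal{C}} \text{ realises } \mathcal{C}}|\mathcal{G} \triangle \mathcal{G}_{\mathcal{C}}|$, the lemma guarantees that for every template $C_j = (X_j,[a_j,b_j]) \in \mathcal{C}$ the static underlying graph of $\mathcal{G}[X_j]|_{[a_j,b_j]}$ is connected. The crucial observation is that, because the underlying graph of $\mathcal{P}_n$ is the path $P_n$, this one connectivity statement encodes both feasibility conditions at once. The static underlying graph of $\mathcal{G}[X_j]|_{[a_j,b_j]}$ has vertex set $X_j$ and contains the edge $v_kv_{k+1}$ exactly when $v_k,v_{k+1}\in X_j$ and $\mathcal{T}(v_kv_{k+1}) \cap [a_j,b_j] \neq \emptyset$; since these are the only candidate edges in a path, connectivity forces $X_j$ to be a block of consecutive vertices $\{v_{\ell_j},\dots,v_{r_j}\}$ (any index gap would disconnect it) and simultaneously forces every internal edge $v_kv_{k+1}$ with $\ell_j \le k < r_j$ to appear at some time in $[a_j,b_j]$. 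The latter is precisely condition~\ref{itm:feas2}, and the former yields condition~\ref{itm:feas1} once the endpoint is fixed.

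To pin down the endpoint I would use that every template of $\mathcal{C}^n$ contains $v_n$ by definition: a block of consecutive vertices containing the last vertex $v_n$ must have $r_j = n$, so $X_j = \{v_{\ell_j},\dots,v_n\}$. It remains to note that templates carry at least two vertices, so that $|X_j|\ge 2$ and hence $\ell_j \le n-1 < n$, giving condition~\ref{itm:feas1} with $i=n$. If one is worried that an optimal $\mathcal{C}$ might contain a single-vertex template, this is harmless: such a template covers no time-edge, so deleting it leaves $\mathcal{G}_{\mathcal{C}}$ and hence the edit distance unchanged while only relaxing the independence constraints on the rest, so we may assume $\mathcal{C}$ has none.

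The one genuinely delicate step is the translation in the second paragraph. I expect the main care to go into arguing cleanly that, on a path, connectivity of the temporal subgraph induced over the window $[a_j,b_j]$ is \emph{equivalent} to the conjunction ``$X_j$ is an interval of the path'' together with ``every internal edge appears within $[a_j,b_j]$'', and in noting that condition~\ref{itm:feas2} is phrased in terms of the appearances $\mathcal{T}$ of the \emph{original} graph $\mathcal{P}_n$ — which is exactly the graph $\mathcal{G}$ to which Lemma~\ref{induced_connectivity} refers, so no mismatch arises. Everything else (independence, the $v_n$ endpoint, and the removal of singleton templates) is routine bookkeeping.
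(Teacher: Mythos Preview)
Your proposal is correct and follows essentially the same approach as the paper: both arguments draw the two feasibility conditions directly out of Lemma~\ref{induced_connectivity}, using that on a path the connectivity of the underlying graph of $\mathcal{G}[X_j]|_{[a_j,b_j]}$ forces $X_j$ to be a block of consecutive vertices and forces every internal path-edge to appear within $[a_j,b_j]$. Your packaging is slightly cleaner (you extract both conditions in one stroke rather than treating them separately), and your remark disposing of potential singleton templates fills in a point the paper simply asserts.
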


\begin{proof}
Given an arbitrary non-empty template $C_j=(X_j,[a_j,b_j])\in \mathcal C^n$ we shall show that it satisfies Properties \eqref{itm:feas1} and \eqref{itm:feas2} in the definition of feasibility for $v_n$.

 \textit{Property \eqref{itm:feas1}}:
 By construction, because $v_n \in X_j$ is an end-vertex of the underlying path $P_n$, and $|X_j| \geq 2$, we can assume that the vertex set $X_j$ of $C_j$ contains a non-empty sequence of consecutive vertices in $P_n$, that is $\{v_{l_j},v_{{l_j}+1}, \dots, v_n\} \subseteq X_j$ and $v_{{l_j}-1} \not \in X_j$  for some $l_j \leq n$. We now claim that $X_j \subseteq \{v_{l_j},v_{{l_j}+1}, \dots, v_n\}$. To see this, note that if there exists a further vertex $v \in X_j \setminus \{v_{l_j},v_{{l_j}+1}, \dots, v_n\}$ such that $v \neq v_{{l_j}-1}$, then $P_n[C_j]$ would be disconnected, which contradicts Lemma \ref{induced_connectivity}.

 \textit{Property \eqref{itm:feas2}}: Now, as we have shown that $C_j$ satisfies Property \eqref{itm:feas2}, we can assume $X_j=\{v_{l_j}, \dots, v_n\}$. If we assume that there exists a $k \in \{l_j,\dots,n-1\}$ such that $[a_j,b_j]\cap \mathcal{T}(v_{k}v_{k+1})= \emptyset$ then $C_j$ does not induce a static  connected sub-graph in $\mathcal{P}_n$. This is in contradiction with Lemma \ref{induced_connectivity}.

The result then holds for any $C_j\in \mathcal C$ since they were arbitrary.
  \end{proof}

Let $\mathcal C^{i-1}=\{C^{i-1}_1,{\dots},C^{i-1}_d\}$ and $\mathcal{C}^{i}=\{C^i_1,{\dots},C^i_r\}$ be collections of templates which are feasible for $v_{i-1}$ and $v_i$ in $V(P_n)$ respectively. Then, we say that $\mathcal C^i$ $extends$ $\mathcal C^{i-1}$ if for all $C^{i}_k\in \mathcal{C}^{i}$ where $|C^{i}_k|>2$, there exists a $j \in \{1,{\dots},d\}$ such that $V(C^{i-1}_j) \cup \{v_i \} = V(C^{i}_k)$ and $L(C^{i-1}_j)  = L(C^{i}_k)$. We denote this by $\mathcal C^{i-1} \prec \mathcal C^i$. The following Lemma is useful in the proof of Theorem \ref{thm:DPonPath}.

\begin{lemma} \label{extention}
 Let $n\in \mathbb{Z}^+$, $\mathcal{P}_n \in \mathfrak{F}_n$, and $\mathcal C^i$ be any collection of feasible  templates for $v_i\in V(P_n)$. Then, $\mathcal C^{i-1} = \left\{ (X\backslash \{v_i\}, [a,b]) : (X, [a,b])\in \mathcal{C}^i, \;|X|>2 \right\},$ is feasible for $v_{i-1}\in V(P_n)$ and $\mathcal C^{i-1} \prec \mathcal C^{i} $.
\end{lemma}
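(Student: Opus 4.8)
The plan is to verify the two assertions separately: first that $\mathcal C^{i-1}$ meets all three requirements in the definition of feasibility for $v_{i-1}$ (pairwise $\Delta_2$-independence, together with properties \eqref{itm:feas1} and \eqref{itm:feas2}), and then that the relation $\mathcal C^{i-1}\prec \mathcal C^i$ holds. The structural fact I would lean on throughout is that, since $\mathcal C^i$ is feasible for $v_i$, property \eqref{itm:feas1} forces every template of $\mathcal C^i$ to have vertex set of the shape $\{v_\ell,\dots,v_i\}$, i.e.\ a run of consecutive path-vertices terminating exactly at $v_i$. Deleting $v_i$ from such a set (when it has more than two vertices) therefore produces a run of consecutive vertices terminating exactly at $v_{i-1}$, and it is this shape-preservation that drives the whole argument.

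Concretely, I would fix a template $(X,[a,b])\in\mathcal C^i$ with $|X|>2$ and write $X=\{v_\ell,\dots,v_i\}$ with $\ell<i$. Since $|X|\geq 3$ we have $\ell\leq i-2$, so $X\setminus\{v_i\}=\{v_\ell,\dots,v_{i-1}\}$ still contains at least two vertices, ends at $v_{i-1}$, and satisfies $\ell<i-1$; this is exactly property \eqref{itm:feas1} for $v_{i-1}$. For property \eqref{itm:feas2}, the interval $[a,b]$ is left unchanged by the construction, while the indices $k\in\{\ell,\dots,i-2\}$ that now need checking form a subset of the indices $k\in\{\ell,\dots,i-1\}$ already guaranteed by property \eqref{itm:feas2} for $\mathcal C^i$; hence \eqref{itm:feas2} is inherited verbatim.

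For pairwise $\Delta_2$-independence I would use the observation that every template of $\mathcal C^i$ contains $v_i$, so any two templates $(X,[a,b]),(Y,[c,d])\in\mathcal C^i$ intersect in $v_i$; thus their $\Delta_2$-independence (furnished by feasibility of $\mathcal C^i$) cannot come from vertex-disjointness and must come from the time-separation bound $\min_{s\in[a,b],\,t\in[c,d]}|s-t|\geq\Delta_2$. Since the construction preserves every time-interval, this separation bound passes directly to the corresponding pair $(X\setminus\{v_i\},[a,b]),(Y\setminus\{v_i\},[c,d])$ in $\mathcal C^{i-1}$, which therefore remain $\Delta_2$-independent; discarding the size-two templates only shrinks the collection and cannot destroy independence. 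This is the one place where a genuine (if small) idea is needed, and is where I would expect the main, though mild, obstacle to lie.

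Finally, for $\mathcal C^{i-1}\prec\mathcal C^i$: given any $C^i_k=(X,[a,b])\in\mathcal C^i$ with $|X|>2$, the construction places $C^{i-1}_j:=(X\setminus\{v_i\},[a,b])$ into $\mathcal C^{i-1}$, and since $v_i\in X$ we get $V(C^{i-1}_j)\cup\{v_i\}=X=V(C^i_k)$, which is precisely the required witness. I would close by noting the boundary case $i=2$: there every template of $\mathcal C^2$ has vertex set $\{v_1,v_2\}$ of size two, so the construction yields $\mathcal C^1=\emptyset$, which is the unique collection feasible for $v_1$ and vacuously satisfies $\mathcal C^1\prec\mathcal C^2$.
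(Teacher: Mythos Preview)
Your proof is correct and follows essentially the same direct verification approach as the paper's, checking feasibility template-by-template via the shape $X=\{v_\ell,\dots,v_i\}$ and then confirming the extension relation. In fact your argument is more complete than the paper's: the paper's proof silently omits the verification of pairwise $\Delta_2$-independence for $\mathcal C^{i-1}$, whereas you supply the (necessary) observation that since every template of $\mathcal C^i$ contains $v_i$, independence in $\mathcal C^i$ must come from time-separation and therefore survives the passage to $\mathcal C^{i-1}$.
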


\begin{proof}
We shall prove this directly by checking feasibility while constructing $\mathcal{C}^{i-1}$ from $\mathcal{C}^{i}$. Let $C_j=(X_j,I_j)$ be any template in $\mathcal C^i$. If $X_j=\{v_{i-1},v_i\}$, then there is nothing to prove and we set $C'_j=\emptyset$. Let us assume that $|X_j| > 2$. By the definition of feasibility $X_j=\{v_{\ell_j}, \dots, v_i \}$ for some $\ell_j < i$. Let us consider the new template $C'_j$ given by $C'_j=(X_j \setminus \{v_i\}, I_j)$. Then,  $C'_j$  satisfies {Property \eqref{itm:feas1}} of feasibility for $v_{i-1} \in V(P_n)$. Furthermore, because $C_j$ satisfies {Property \eqref{itm:feas2}}  for $v_i$, then by construction also $C'_j$ satisfies that property for $v_{i-1}$. Thus, given $\mathcal{C}^i$, if we let $\mathcal{C}^{i-1} = \{C'_j:C_j \in \mathcal C^i\}$, then we have $\mathcal C^{i-1} \prec \mathcal C^{i} $. 
  \end{proof}

 We can now state the dynamic program. The idea is to revel the next vertex $v_i$ in the temporal path at each step and extend a set of templates $\mathcal{C}^{i-1}$ feasible for $v_{i-1}$ to set of templates $\mathcal{C}^{i}$ feasible for $v_i$. The function $\vartheta_i[\mathcal{C}^{i}]$ keeps track of the minimum number of edges that must be added and removed along the way to make the temporal graph realise an optimal set of templates such that the subset of templates containing $v_i$ is equal to $\mathcal{C}^{i}$.

Let $i\geq 2$ and $\mathcal{C}^i\in \Gamma^i$ be any collection feasible for $v_i$, where $C_j^i=(X_j^i,[a_j,b_j])$  for each $C_j^i \in \mathcal{C}^i$. Then, using the functions $f_i,g_i$ and $h_i$ given by \eqref{eq:gfunction}, \eqref{eq:hfunction} and \eqref{eq:ffunction} below, we define the function $\vartheta_i[\mathcal C^i]$ as follows: \begin{equation}\label{eq:pathrecursion}\vartheta_i[\mathcal C^i]=\min\limits_{\mathcal C^{i-1} \prec \mathcal C^i} \left[\vartheta_{i-1} [ \mathcal C^{i-1}] + f_i(\mathcal C^{i}) +  g_i( \mathcal C^i) + h_i(\mathcal{C}^i)\right],\end{equation}  and $\vartheta_{1} [\mathcal C^1]=0$, where $\mathcal{C}^1=\emptyset$ is the only collection feasible for $v_1$. We now state the functions $h_i,f_i,g_i$.
\begin{itemize}
	\item The function $f_i: \Gamma^i \rightarrow \mathbb N$ counts the {minimum} total number of time-edges from $v_i$ to vertices $u\neq v_{i-1}$ which appear in the $\Delta_1$-temporal cliques which realise all the $C_j^{i} \in \mathcal{C}^{i}$ with more than two vertices. This function
	is given by \begin{equation}\label{eq:gfunction}f_i(\mathcal C^{i})=\sum_{j=1}^{|\mathcal {C}^i|} \left(| X^{i}_j|-2\right)\cdot \max\left(\left\lfloor \frac{|b_j-a_j|}{\Delta_1} \right\rfloor, 1\right).\end{equation}
	
	\item The function $g_i: \Gamma^i \rightarrow \mathbb N$ counts the number of appearances of $v_{i-1}v_i$ that do not lie in any $\Delta_1$-temporal clique which $\Delta_1$-realises any template in $\mathcal{C}^{i}$. This function
	is given by   \begin{equation}\label{eq:hfunction}g_i(\mathcal C^i)= \left|\mathcal T(v_{i-1}v_i) \backslash \bigcup_{j=1}^{|\mathcal{C}^i|} [a_j,b_j]\right|.\end{equation}

		\item The function $h_i:   \Gamma^i \rightarrow \mathbb N$ counts the minimum total number of time-edges from $v_i$ to $v_{i-1}$ which must appear in the $\Delta_1$-temporal cliques realising each $C_j^{i} \in \mathcal{C}^{i}$. Firstly, for each  template $C^i_k=(X^i_k,[a_k,b_k])$ we define $\mathcal T^i_k=\mathcal{T}(v_{i-1}v_i)\cap [a_k,b_k]$. Let $t_0=a_k$, $t_{|\mathcal T^i_k|+1}=b_k$, and $(t_j)_{j=1}^{|\mathcal T^i_k|}$ be the elements of $\mathcal T^i_k$ ordered increasingly. Then,
	\begin{equation}\label{eq:ffunction}h_i( \mathcal{C}^i) = \sum_{k=1}^{|\mathcal {C}^i|}  \sum_{j=0}^{|\mathcal T^i_k|+1}   \left\lfloor \frac{t_{j+1}-t_j-1}{\Delta_1} \right\rfloor.\end{equation} 
\end{itemize}

The next lemma shows the recurrence given by \eqref{eq:pathrecursion} counts {the minimum number of edge edits to transform the given temporal graph on a path into a temporal cluster graph realising a given set of templates.}

\begin{lemma} \label{main}
		Let $i,n\in \mathbb{Z}^+$ where $i\leq n$, $\mathcal{P}_n \in \mathfrak{F}_n$, and $\mathcal C^i=\{C^i_1,{\dots},C^i_r\}$ be any collection of templates feasible for the vertex $v_i \in V(P_n)$. Let $\operatorname{Opt}(\mathcal{C}^i)$ be the minimum number of modifications required to transform $\mathcal P_i=\mathcal P_n[v_1,\dots,v_i]$ into a $(\Delta_1,\Delta_2)$-cluster temporal graph which $\Delta_1$-realises a collection  $\mathcal{Q}^i $ of pairwise $\Delta_2$-independent templates such that $\{Q \in \mathcal \mathcal{Q}^i: v_i \in Q\}=\mathcal C^i$. Then,
\[\vartheta_i[\mathcal C^i] = \operatorname{Opt}(\mathcal{C}^i).\]
\end{lemma}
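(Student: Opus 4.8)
The statement $\vartheta_i[\mathcal{C}^i] = \operatorname{Opt}(\mathcal{C}^i)$ is a dynamic-programming correctness claim, so I would prove it by induction on $i$, establishing both inequalities at each step. The base case $i=1$ is immediate: the only feasible collection is $\mathcal{C}^1 = \emptyset$, the graph $\mathcal{P}_1$ is a single vertex with no time-edges, so no modifications are needed and $\vartheta_1[\emptyset] = 0 = \operatorname{Opt}(\emptyset)$. For the inductive step I would fix $i \ge 2$ and a feasible collection $\mathcal{C}^i$ for $v_i$, and argue that the recurrence \eqref{eq:pathrecursion} correctly computes the optimum by separately verifying $\vartheta_i[\mathcal{C}^i] \le \operatorname{Opt}(\mathcal{C}^i)$ and $\vartheta_i[\mathcal{C}^i] \ge \operatorname{Opt}(\mathcal{C}^i)$.

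\textbf{The key bookkeeping step.} The heart of the argument is understanding exactly which modifications are charged when we uncover $v_i$. Given an optimal solution $\mathcal{Q}^i$ realising $\mathcal{C}^i$ on $\mathcal{P}_i$, I would project it to a collection $\mathcal{Q}^{i-1}$ on $\mathcal{P}_{i-1}$ by deleting $v_i$ from every template (exactly the map of Lemma \ref{extention}), obtaining a feasible collection for $v_{i-1}$ with $\mathcal{C}^{i-1} = \{Q \in \mathcal{Q}^{i-1} : v_{i-1} \in Q\} \prec \mathcal{C}^i$. The crucial observation is that every time-edge incident to $v_i$ in $\mathcal{P}_i$ involves either the edge $v_{i-1}v_i$ or an edge $v_i u$ with $u \neq v_{i-1}$; since the underlying graph is a path, the only such edge is $v_{i-1}v_i$, so in fact \emph{all} new time-edges in $\mathcal{P}_i \setminus \mathcal{P}_{i-1}$ lie on $v_{i-1}v_i$. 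I therefore need to check that $f_i + g_i + h_i$ accounts for precisely the modifications not already counted in $\vartheta_{i-1}[\mathcal{C}^{i-1}]$: the term $f_i$ counts the \emph{additions} of the new time-edges $v_i u$ ($u \neq v_{i-1}$) forced inside each enlarged clique (these equal $(|X_j^i|-2)$ edges, each requiring $\max(\lfloor |b_j - a_j|/\Delta_1\rfloor, 1)$ appearances to achieve $\Delta_1$-density); the term $g_i$ counts the \emph{deletions} of appearances of $v_{i-1}v_i$ falling outside every template interval; and $h_i$ counts the \emph{additions} of appearances of $v_{i-1}v_i$ needed to make it $\Delta_1$-dense inside each clique interval, where the gap-counting formula $\lfloor (t_{j+1} - t_j - 1)/\Delta_1\rfloor$ correctly tallies the missing appearances between consecutive existing ones (with the interval endpoints $a_k, b_k$ as sentinels). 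I would verify that these three are mutually exclusive and jointly exhaustive over all modifications newly introduced at step $i$.

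\textbf{The two inequalities.} For $\vartheta_i[\mathcal{C}^i] \le \operatorname{Opt}(\mathcal{C}^i)$, I take the optimal solution for $\mathcal{C}^i$, restrict to $\mathcal{P}_{i-1}$ via the projection above to get a feasible $\mathcal{C}^{i-1} \prec \mathcal{C}^i$, and observe that the restricted solution is a valid (hence at least $\vartheta_{i-1}[\mathcal{C}^{i-1}]$-cost) solution for $\mathcal{C}^{i-1}$ on $\mathcal{P}_{i-1}$, with the remaining cost exactly $f_i + g_i + h_i$ by the bookkeeping; summing gives an upper bound matching a term in the minimisation. For $\vartheta_i[\mathcal{C}^i] \ge \operatorname{Opt}(\mathcal{C}^i)$, I take the minimising $\mathcal{C}^{i-1} \prec \mathcal{C}^i$ together with an optimal solution witnessing $\vartheta_{i-1}[\mathcal{C}^{i-1}] = \operatorname{Opt}(\mathcal{C}^{i-1})$ (by induction), and extend it by adding $v_i$ to the appropriate cliques and applying exactly the edits counted by $f_i, g_i, h_i$; I must check this yields a genuine $(\Delta_1,\Delta_2)$-cluster temporal graph on $\mathcal{P}_i$ realising a collection whose $v_i$-containing part is $\mathcal{C}^i$, so its cost bounds $\operatorname{Opt}(\mathcal{C}^i)$ from above. \textbf{The main obstacle} I anticipate is showing that the projection and extension operations are genuine inverses at the level of \emph{cost accounting} — specifically, that no modification is double-counted between $\vartheta_{i-1}$ and the step functions, and that the $\Delta_2$-independence of the extended collection is preserved (which follows because extending cliques by the single new vertex $v_i$ along the path cannot shrink the pairwise time-gaps, but needs a careful check that no two templates sharing $v_i$ can be enlarged simultaneously without violating independence — this is where feasibility Property \eqref{itm:feas1}, forcing each template's vertex set to be a contiguous suffix ending at $v_i$, does the essential work).
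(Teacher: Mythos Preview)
Your proposal is correct and follows essentially the same approach as the paper: induction on $i$, a bookkeeping claim that $f_i + g_i + h_i$ exactly counts the modifications on time-edges incident to $v_i$ (the paper packages this as a separate Claim), and the two inequalities via projecting an optimal $\mathcal{Q}^i$ down using Lemma~\ref{extention} for $\vartheta_i \le \operatorname{Opt}$ and lifting an optimal $\mathcal{C}^{i-1}$-solution for $\vartheta_i \ge \operatorname{Opt}$. Your flagged ``main obstacle'' about preserving $\Delta_2$-independence under extension is a valid concern the paper glosses over, but it does not change the overall structure of the argument.
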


\begin{proof} We will prove this equality by induction on  $i\leq n$, the number of vertices in the induced path. The following claim will be used to prove the inductive step.
\begin{clm}\label{clm:viedges} Let $\mathcal{C}^{i-1}\in \Gamma^{i-1}$,  $\mathcal{C}^i\in \Gamma^i$ and $\mathcal{C}^{i-1}\prec  \mathcal{C}^i$. Suppose we have modified $\mathcal{P}^{i-1}$ into a  $(\Delta_1,\Delta_2)$-cluster temporal graph realising a collection $\mathcal{Q}^{i-1} $ such that $\{Q \in \mathcal \mathcal{Q}^{i-1}: v_{i-1} \in Q\} = \mathcal{C}^{i-1}$.  Then, the minimum number of additions and deletions of time-edges of the form $(v_{x}v_i,t)$, where $x<i$, required to modify $\mathcal{P}^i$ into a $(\Delta_1,\Delta_2)$-cluster temporal graph realising a collection $\mathcal{Q}^i $ such that $\{Q \in \mathcal \mathcal{Q}^i: v_i \in Q\} = \mathcal{C}^i$ is   \[ f_i(\mathcal{C}^i)+ g_i(\mathcal{C}^i)+h_i(\mathcal{C}^i).\] 
\end{clm}

\begin{poc}[ \ref{clm:viedges}]
Note that at step $i-1$ of this algorithm we already have selected a collection $\mathcal C^{i-1} \in \Gamma^{i-1}$ such that $\mathcal C^{i-1} \prec \mathcal C^{i} $. 
So, in order to extend $\mathcal C^{i-1}$ to $\mathcal C^i$ we need to count the total number of modifications that we need to apply to any edge of the form $v_xv_i$ with $x<i$ in such a way that it results to be $\Delta_1$-dense exactly in the lifetimes of each template in $\mathcal C^i$. We need to count: (i) the minimum {number of} time-edges from $v_i$ to $v_{j}$, where $j<i-1$, we need to add to make each template a $\Delta_1$-temporal clique, (ii) time-edge from $v_i$ to $v_{i-1}$ outside the lifetime of any template that we must remove, and finally (iii) the minimum time-edges from $v_i$ to $v_{i-1}$ we need to add to make each template a $\Delta_1$-temporal clique. Since our input temporal graph has a path as the underlying graph, this covers all possible time-edges of the form $(v_iv_x,t)$.

To begin we must count, for each $C^i_j \in \mathcal C^i$, the minimum number of time-edges $(v_zv_i,t)$ where $v_z \in   X^i_j \setminus \{v_{i-1},v_{i}\}$ and $t \in [a_j,b_j]$, that we need to add in order to link the new vertex $v_i$ to all the other vertices in $v \in X^i_j \setminus \{v_{i-1},v_{i}\}$, in such a way every edge $vv_i$ is $\Delta_1$-dense in $[a_j,b_j]$. This number is given by $\left(| X^{i}_j|-2\right)\cdot \max\left(\left\lfloor \frac{|b_j-a_j|}{\Delta_1} \right\rfloor, 1\right)$. Summing over $j \in \{1,\dots,|\mathcal C^i|\}$, the minimum number of edges of the kind $v_z v_i$,  where $z\leq i-2$, we need to add is given by

\[f_i(\mathcal C^{i})=\sum_{j=1}^{|\mathcal { C}^i|} \left(| X^{i}_j|-2\right)\cdot \max\left(\left\lfloor \frac{|b_j-a_j|}{\Delta_1} \right\rfloor, 1\right).\]

Then, one must consider the number appearances of the edge $v_{i-1}v_i$ which lie outside the lifetime of any template $C_{j}^i=(X_j,[a_j,b_j])$ contained in $\mathcal C^i$, and thus must be deleted. This value is accounted by the function

\[g_i( \mathcal C^i)=\left|\mathcal T(v_{i-1}v_i) \setminus \bigcup_{j=1}^{|\mathcal{C}^i|} [a_j,b_j]\right|.\] 

Finally we should calculate the minimum number of appearances of the edge $v_{i-1}v_i$ that should be added to $\mathcal T(v_{i-1}v_i)$ so that $v_{i-1}v_i$ results to be $\Delta_1$-dense within each template. To do so, let us consider any $C^i_k=(X_k,[a_k,b_k]) \in \mathcal {C}^i$ and let $\mathcal T_k^i=\mathcal T (v_{i-1}v_i) \cap [a_k,b_k]$ be the set of all the time appearances of the edge $v_{i-1}v_i$  in the interval $[a_k,b_k]$. Now, let $(t_1,{\dots},t_{|\mathcal T_k^i|})$ be the ordered list of all the time appearances of $v_{i-1}v_i$ in $[a_k,b_k]$, and denote $ t_0 = a_k$ and $t_{|\mathcal T_k^i|+1} = b_k$. Then, the sum \begin{equation}\label{fsum} \sum_{j=0}^{|\mathcal T_k^i|+1}   \left\lfloor \frac{t_{j+1}-t_j-1}{\Delta_1} \right\rfloor, \end{equation} 
counts exactly the minimum number of appearances of $v_{i-1}v_i$ that are missing in $[a_k,b_k]$ for any pair $(t_j,t_{j+1})$ of consecutive appearances of $v_{i-1}v_i$ such that $|t_{j+1}-t_j| > \Delta_1$. Summing over all the possible $C^i_k \in \mathcal {C}^i$ gives

\begin{equation*} h_i( \mathcal{C}^i) = \sum_{k=1}^{|\mathcal {C}^i|}  \sum_{j=0}^{|\mathcal T_k^i|+1}   \left\lfloor \frac{t_{j+1}-t_j-1}{\Delta_1} \right\rfloor.\end{equation*}

The claim now follows from summing $f_i(\mathcal{C}^i),g_i(\mathcal{C}^i)$ and $h_i(\mathcal{C}^i)$.\end{poc}

{Recall from the statement of Lemma \ref{main} that $\operatorname{Opt}(\mathcal{C}^i)$ is the minimum number of modifications required to transform $\mathcal P_i$ into a $(\Delta_1,\Delta_2)$-cluster temporal graph realising a collection  $\mathcal{Q}^i $ of pairwise $\Delta_2$-independent templates with $\{Q \in \mathcal \mathcal{Q}^i: v_i \in Q\}=\mathcal C^i$. We now prove $\vartheta_i[\mathcal C^i] = \operatorname{Opt}(\mathcal{C}^i)$ by induction, by showing non-strict inequalities hold in both directions for an arbitrary collection $\mathcal C^i$ of templates feasible for $v_i$. }

\medskip

We first prove $\vartheta_i[\mathcal C^i] \leq \operatorname{Opt}(\mathcal{C}^i)$\textbf{:} For the base case ($i=1$), by the definition of feasibility, the only collection $\mathcal{C}^1$ of templates feasible for $v_1$ is the empty one. As $\vartheta_{1} [\mathcal{C}^1]=0 $ and any temporal graph on one vertex is a $(\Delta_1,\Delta_2)$-cluster temporal graph the base case follows. Now, from  $\mathcal C^i$ we define the collection $\mathcal C^{i-1} = \left\{ (X\backslash \{v_i\}, [a,b]) : (X, [a,b])\in \mathcal{C}^i, \;|X|>2 \right\}$. Proposition \ref{extention} shows that $\mathcal{C}^{i-1}  \in \Gamma^{i-1}$, and $\mathcal{C}^{i-1}\prec \mathcal{C}^{i} $. {Let $S_{add}$ be any minimal set of time-edges which must be added to $\mathcal{P}_i$ in order for each template $C_i\in \mathcal{C}^i$ to be a $\Delta_1$-temporal clique, and $S_{rem}$ be any minimal set of time-edges which must be removed from $\mathcal{P}_i$ in order for each template $C_i\in \mathcal{C}^i$ to be $\Delta_2$-independent from all other edges of $\mathcal{P}_i$. The set $S_{add}\cup S_{rem}$ can be decomposed two disjoint sets: $S_1$ consisting of time-edges which have $v_i$ as an end-point, and  $S_0$ contains those that do not. Since $\mathcal C^{i-1} $ contains precisely the templates of $\mathcal C^{i}$ with three or more vertices, it follows that $\operatorname{Opt}(\mathcal{C}^i) = \operatorname{Opt}(\mathcal{C}^{i-1}) + |S_1| $. Now, by inductive hypothesis we have $\operatorname{Opt}(\mathcal{C}^{i-1}) \geq \vartheta_{i-1}[\mathcal C^{i-1}]$ and by Claim \ref{clm:viedges} we have $|S_1| =  f_i(\mathcal{C}^i)+ g_i(\mathcal{C}^i)+h_i(\mathcal{C}^i)$, hence $ \operatorname{Opt}(\mathcal{C}^i) \geq \vartheta_i[\mathcal C^i]$.} 
\medskip 

We now prove $\vartheta_i[\mathcal C^i] \geq \operatorname{Opt}(\mathcal{C}^i)$\textbf{:} {Since $\operatorname{Opt}(\mathcal{C}^i)$ considers all collections $\mathcal{Q}^i$ satisfying $\{Q \in \mathcal \mathcal{Q}^i: v_i \in Q\}=\mathcal C^i$, it remains to show that $\vartheta_i[\mathcal C^i] $ is a lower bound on the number of time-edge edits needed to transform $\mathcal{P}_i$ into a $(\Delta_1,\Delta_2)$-temporal cluster graph. For the base case ($i=1$), by the definition of feasibility, the only collection $\mathcal{C}^1$ of templates feasible for $v_1$ is the empty one, and $\vartheta_{1}(\mathcal{C}^{1})=0$. Assume by induction that for any $\mathcal{C}^{i-1} \in \Gamma^{i-1}$, the minimum number of time-edge edits required to make $\mathcal{P}_{i-1}$ realise a collection $\mathcal{Q}^{i-1}$ satisfying $\{Q \in \mathcal \mathcal{Q}^{i-1}: v_i \in Q\}=\mathcal C^{i-1}$ is bounded from below by $\vartheta_{i-1}(\mathcal{C}^{i-1})$. Since this bound holds for any $\mathcal{C}^{i-1}\prec \mathcal{C}^{i} $, the lower bound then follows by Claim \ref{clm:viedges} and the definition of $\vartheta_{i}(\mathcal{C}^{i})$.}  \end{proof}
We are now ready to prove Theorem \ref{thm:DPonPath}.

\begin{proof}[Proof of Theorem \ref{thm:DPonPath}] We will implement the function $\vartheta_i$, defined in \eqref{eq:pathrecursion}, as a dynamic program to solve the \ETC problem on $(\mathcal P_n, \Delta_1, \Delta_2,k)$. We start from $\vartheta_1[\mathcal C^1]=0$, where $\mathcal C^1=\emptyset$ by definition of feasibility, and tabulate bottom up until the value of $\vartheta_n[\mathcal C^n]$ is calculated for  any possible choice of $\mathcal C^n\in \Gamma^n$. The result then follows by Lemma \ref{explaincor} since we minimise $\vartheta_n[\mathcal C^n]$ over all choices of $\mathcal C^n\in \Gamma^n$.

 We assume that for any edge $e\in E(\mathcal P_i)$, the set $\mathcal T(e)$ of its time appearances is ordered. Since each edge appears at most $\sigma$ times this prepossessing step takes time at most  $\mathcal{O}(n\sigma\log \sigma)$ and so is negligible. We also assume that arithmetic operations on $\mathcal{O}(\log n)$-bit numbers can be performed in constant time. 
 
At each step $i$ the algorithm must scan over all possible choices for the collection $\mathcal C^i$ and, for each of them, calculate $\vartheta_i[\mathcal C^i]$. The following Claim bounds the number of possibilities for the choice of $\mathcal C^i$ from above. 
\begin{clm}\label{clm:countingfeasible}
For any $1\leq i\leq n$ we have $|\Gamma^i|\leq T^{2\sigma}n^\sigma$.
\end{clm}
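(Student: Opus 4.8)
The plan is to reduce the count to two independent estimates: an upper bound on the \emph{number} of templates that any collection feasible for $v_i$ can contain, and then a count of the number of possible choices for each single template. I would first dispose of the degenerate case $i=1$, for which $\Gamma^1=\{\emptyset\}$ by definition, so that $|\Gamma^1|=1\le T^{2\sigma}n^\sigma$; hence assume $i\ge 2$ throughout.

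The structural heart of the argument is the observation that every template in a feasible collection must ``use'' an appearance of the final edge $v_{i-1}v_i$. Indeed, by Property~\eqref{itm:feas1} each template $C_j$ has vertex set $X_j=\{v_{\ell_j},\dots,v_i\}$ with $\ell_j\le i-1$, so taking $k=i-1$ in Property~\eqref{itm:feas2} guarantees that the interval $[a_j,b_j]$ contains some $t\in\mathcal{T}(v_{i-1}v_i)$. At the same time, every template contains $v_i$, so any two templates have intersecting vertex sets; since the collection is pairwise $\Delta_2$-independent and $\Delta_2\ge 1$, this forces the time intervals of any two templates to be at distance at least $\Delta_2$, and in particular to be pairwise disjoint.

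Combining these two facts yields the crucial bound on the number of templates: the (pairwise disjoint) intervals $[a_j,b_j]$ each contain at least one of the at most $\sigma$ appearances of $v_{i-1}v_i$, and no single appearance can lie in two disjoint intervals, whence $|\mathcal{C}^i|\le |\mathcal{T}(v_{i-1}v_i)|\le\sigma$. I expect this to be the main obstacle of the proof: the remaining counting is routine, but it is precisely the bound $\sigma$ on the number of appearances of each edge, fed through Property~\eqref{itm:feas2}, together with the disjointness coming from $\Delta_2$-independence, that prevents the number of templates from scaling with $n$ or $T$.

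It then remains to count. A single template is determined by its left endpoint $\ell_j\in\{1,\dots,i-1\}$, giving at most $n$ choices, together with its interval $[a_j,b_j]$, determined by the pair $(a_j,b_j)\in[1,T]^2$, giving at most $T^2$ choices; hence there are at most $nT^2$ distinct admissible templates. Since a feasible collection is a set of at most $\sigma$ such templates, a crude encoding of each collection as a length-$\sigma$ list of templates (padding a collection of fewer than $\sigma$ members by repeating one of them) exhibits a surjection from $(nT^2)^\sigma$ many lists onto the feasible collections, so $|\Gamma^i|\le (nT^2)^\sigma=T^{2\sigma}n^\sigma$, as required.
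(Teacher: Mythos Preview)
Your proof is correct and follows essentially the same approach as the paper: bound the number of templates in a feasible collection by $\sigma$, then count the at most $nT^2$ choices per template and multiply. The only noteworthy difference is in how the bound $|\mathcal{C}^i|\le\sigma$ is justified: the paper appeals (somewhat loosely) to connectedness via Lemma~\ref{induced_connectivity}, whereas you argue directly from Property~\eqref{itm:feas2} together with the disjointness of intervals forced by $\Delta_2$-independence, which is arguably cleaner and entirely self-contained within the definition of feasibility.
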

\begin{poc}[\ref{clm:countingfeasible}]
Let $\mathcal C^i=\{C_1,\dots, C_r\}$ be any collection of  templates contained in $\Gamma^i$. By hypothesis, $|\mathcal T(v_{i-1}v_i)|\leq \sigma$, so, because $\mathcal G[C_j]$ with $j \in [r]$ is connected by Lemma \ref{induced_connectivity}, then the vertex $v_i$ can appear in at most $\sigma$ templates in $\mathcal C^i$, thus there are at most $\sigma$ templates in $\mathcal{C}^i$. From Lemma \ref{explaincor}, the vertex set of each template $C_j^i \in \mathcal{C}^i$ which contains at least two vertices, is of the form $X_j=\{v_l, \dots, v_i \}$ with $l \leq i$. Thus, each $X_j$ is completely specified by its left-hand end-vertex $v_\ell$.  There are at most $n^\sigma$ ways to choose the $\sigma$ left-hand endpoints. Finally, there are $\binom{T}{2}\leq T^{2}$ ways to choose the start and finish points of the lifetime interval $[a_j,b_j]$ of a given template $C_j$, thus there are at most $T^{2\sigma}$ ways to choose the time intervals. So, in total there are at most $T^{2\sigma}n^\sigma$ ways to choose the collection $\mathcal{C}^i$.
\end{poc}

Now, given a collection $\mathcal C^i$ of feasible templates for the vertex $v_i$ of the path, let us analyse the time needed to compute the functions $f_i$, $g_i$ and $h_i$. 

From the definition \eqref{eq:gfunction} of $f_i(\mathcal{C}^i)$ we can see $f_i$ is a sum of $|{\mathcal{C}}^i| $ items. Each of these items is of the form $(|X^i_j|-2)\cdot \max\left( \lfloor\frac{|b_j-a_j|}{\Delta_1}\rfloor , 1\right)$. Recall that $|\mathcal{C}^i| \leq \sigma$, so $f_i$ can be calculated in $\mathcal{O}(\sigma)$  time.

From \eqref{eq:hfunction}, {we can compute $g_i( \mathcal C^i)$ by checking} if each time appearance of $e_i$ lies within one of the intervals $[a_1, b_1]\dots, [a_{|\mathcal{C}^i|}, b_{|\mathcal{C}^i|}]$. This takes time at most $\mathcal{O}(\sigma^2)$ since each edge appears at most $\sigma$ times and $|\mathcal{C}^i|\leq \sigma$.

Finally, we see from \eqref{eq:ffunction} that $h_i( \mathcal{C}_i)$ is a double sum. The inner sum needs to compute the set $\mathcal{T}^i_k$ and order it increasingly. Since $\mathcal{T}(v_{i-1}v_i)$ is already sorted this takes time $\mathcal{O}(\sigma)$. Having computed this it performs $|\mathcal{T}^i_k|+2\leq \sigma+2$ many operations, each costing $\mathcal{O}(1)$ time. The outer sum has $|\mathcal{C}^i|\leq \sigma $ many items, and so we conclude that computing $h_i(\mathcal{C}_i)$ requires $\mathcal{O}(\sigma^2) $ time.

Since we are tabulating from the bottom up, the values $\vartheta_{i-1}[\mathcal C^{i-1}]$ for any $\mathcal C^{i-1}\in \Gamma^{i-1}$ are already known. So the computational cost required for looking up each such value is constant. For any two fixed collections $\mathcal{C}^{i-1}$ and $\mathcal{C}^{i}$, it takes time at most $\mathcal{O}(\sigma^2)$ to check whether $\mathcal{C}^{i-1}\prec\mathcal{C}^{i}$. This is since we can compare each template in $\mathcal{C}^{i-1}$ with each template in $\mathcal{C}^{i}$, there {are} $\mathcal{O}(\sigma^2)$ such pairs, and each of these comparison can be done in constant time. Furthermore, the $\min$ operator in the recursive definition \eqref{eq:pathrecursion} of $\vartheta_i[C^i]$ is taken over all the possible collections $\mathcal C^{i-1}$ such that $\mathcal C^{i-1} \prec \mathcal C^i$, so contributes time $\mathcal{O}(T^{2\sigma}n^\sigma\cdot \sigma^2)$ to the computational cost of \eqref{eq:pathrecursion}.

To conclude, computing $\vartheta_i[\mathcal C^i]$ for any fixed collection $\mathcal C^i$ at each step $i$ requires $\mathcal{O}(T^{2\sigma}n^\sigma \cdot \sigma^2 )$ time. So, doing this calculation for all feasible collections $\mathcal C^i\in \Gamma^i$ at a fixed step $i$ can be done in $\mathcal{O}(T^{4\sigma}n^{2\sigma} \cdot \sigma^2 )$ time. Furthermore, this calculation should be repeated $n$ times, one for each step of the algorithm. Thus, this dynamic programming algorithm runs in $\mathcal{O}(T^{4\sigma}n^{2\sigma +1} \cdot \sigma^2 )$ time.    \end{proof}

\section{Completion to Temporal Cliques}\label{sec:additiononly}
In this section we consider the following variant of the \EditTempClique problem where we are only allowed to add edges (i.e. the \textit{completion} of the temporal graph).

\begin{framed}
	\noindent
\CompTempClique (\CTC):\\ 
\textit{Input:} A temporal graph $\mathcal{G}=(G,\mathcal{T})$ and positive integers $k,\Delta_1,\Delta_2 \in \mathbb Z^{+}$.\\
\textit{Question:} Does there exist a set ${\Pi}$ of time-edge additions, of cardinality at most $k$, such that the modified temporal graph is a $(\Delta_1,\Delta_2)$-cluster temporal graph? 
\end{framed}

The main result of this section is to show that the above problem can be solved in time polynomial in the size of the input temporal graph.

\begin{theorem}\label{thm:polytimeadding}There is an algorithm solving \CompTempClique on any temporal graph $\mathcal{G}$ in time $\mathcal{O}\!\left(| \mathcal{E}|^3|V|\right)$. 
\end{theorem}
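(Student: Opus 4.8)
The plan is to reduce \CompTempClique to independently completing each $\Delta_2$-saturated set of $\mathcal{G}$ into a $\Delta_1$-temporal clique, to show that this candidate solution is both valid and optimal, and then to observe that the running time is dominated by computing the saturated decomposition via Lemma~\ref{lem:find-saturated}. The structural backbone comes from Lemma~\ref{sat_cliques}: a completion $\mathcal{G}'$ of $\mathcal{G}$ is a $(\Delta_1,\Delta_2)$-cluster temporal graph if and only if every $\Delta_2$-saturated set of $\mathcal{G}'$ forms a $\Delta_1$-temporal clique. The observation specific to the completion setting is that $\Delta_2$-indivisibility is an intrinsic property of a set of time-edges, so each $\Delta_2$-saturated set $S$ of $\mathcal{G}$ remains $\Delta_2$-indivisible when viewed inside $\mathcal{E}(\mathcal{G}') \supseteq \mathcal{E}(\mathcal{G})$; hence by Lemma~\ref{lem:indivisibility} every such $S$ is contained in a \emph{single} $\Delta_1$-temporal clique of any completion. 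Thus the saturated sets of $\mathcal{G}$ are atoms that completion can only group together, never split.

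Next I would define the candidate solution $\mathcal{G}_0$ obtained by completing each $\Delta_2$-saturated set $S$ into a minimum-size $\Delta_1$-dense clique $\mathrm{completed}(S)$ on the template $(V(S),L(S))$, adding, for each pair $x,y \in V(S)$, only the appearances of $xy$ needed to make it $\Delta_1$-dense in $L(S)$. Validity is immediate: completing $S$ within its own template leaves $(V(S),L(S))$ unchanged, and distinct saturated sets generate pairwise $\Delta_2$-independent templates, so $\mathcal{G}_0$ realises a collection of pairwise $\Delta_2$-independent cliques and is a cluster temporal graph. For the cost one needs that every $(xy,t)\in\mathcal{E}(\mathcal{G})$ with $x,y\in V(S)$ and $t\in L(S)$ already lies in $S$ (otherwise $S\cup\{(xy,t)\}$ would be $\Delta_2$-indivisible, contradicting saturation), so the additions can be counted per pair with full knowledge of the existing appearances.

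The crux is optimality. Writing the cliques of an arbitrary completion $\mathcal{G}'$ as $C_1,\dots,C_m$, every saturated set of $\mathcal{G}$ lies in exactly one $C_i$, and since $\lvert\mathrm{added}(\mathcal{G}')\rvert = \sum_i |C_i| - |\mathcal{E}(\mathcal{G})|$ while $\lvert\mathrm{added}(\mathcal{G}_0)\rvert = \sum_S |\mathrm{completed}(S)| - |\mathcal{E}(\mathcal{G})|$, it suffices to prove for each clique $C_i=(X_i,[a_i,b_i])$ that $|C_i| \ge \sum_{S\subseteq C_i} |\mathrm{completed}(S)|$. Reducing to a single pair $\{x,y\}\subseteq X_i$, the saturated sets $S\subseteq C_i$ containing both $x$ and $y$ have pairwise $\Delta_2$-independent templates yet share $x,y$, so their intervals $L(S)\subseteq[a_i,b_i]$ are pairwise separated by at least $\Delta_2$. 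The heart of the argument, and the main obstacle, is the inequality that the minimum number of $xy$-appearances needed for $\Delta_1$-density over all of $[a_i,b_i]$ is at least the sum of the minima over these separated subintervals. This is exactly where the standing assumption $\Delta_2 > \Delta_1$ from Remark~\ref{rem:deltas} is essential: because consecutive relevant subintervals are more than $\Delta_1$ apart, no single appearance of $xy$ can witness the $\Delta_1$-density requirement of two different subintervals, so the appearances of $C_i$ witnessing density on distinct subintervals are disjoint and number at least $\sum_S|\mathrm{completed}(S)|_{xy}$. Summing over all pairs and all cliques gives the required bound, so $\mathcal{G}_0$ is optimal.

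Finally, the algorithm computes the unique $\Delta_2$-saturated decomposition of $\mathcal{G}$ in time $\mathcal{O}(|\mathcal{E}|^3|V|)$ by Lemma~\ref{lem:find-saturated}, and then, for each saturated set $S$ and each pair $x,y\in V(S)$, counts the additions required for $\Delta_1$-density over $L(S)$ from the gaps between consecutive existing appearances of $xy$ (a floor-of-gap-over-$\Delta_1$ computation, handled in $\mathcal{O}(1)$ arithmetic per gap in the word-RAM model). This counting phase is dominated by the decomposition step, and we answer \textsf{yes} precisely when the resulting total is at most $k$, giving the claimed $\mathcal{O}(|\mathcal{E}|^3|V|)$ bound.
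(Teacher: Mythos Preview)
Your proposal is correct and follows the same strategy as the paper: decompose $\mathcal{E}(\mathcal{G})$ into $\Delta_2$-saturated sets via Lemma~\ref{lem:find-saturated}, complete each to a $\Delta_1$-temporal clique on its own template, and count the required additions pairwise via gap computations. You actually go further than the paper in justifying optimality---the paper simply asserts that it suffices to complete each saturated set independently, whereas you supply the per-pair disjoint-witness argument using $\Delta_2 > \Delta_1$; this extra care is warranted and your argument is sound.
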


As observed in \cite{CE_NP}, the cluster completion problem is also solvable in polynomial time on static graphs. In this case the optimum solution is obtained by transforming each connected component of the input graph into a complete graph. The situation is not quite so simple in temporal graphs, however a similar phenomenon holds with $\Delta_2$-saturated sets taking the place of connected components.

\bigskip

\begin{proof}[Proof of Theorem \ref{thm:polytimeadding}]
Recall from Lemma \ref{sat_cliques} that a temporal graph is a $(\Delta_1,\Delta_2)$-cluster temporal graph if and only if every $\Delta_2$-saturated set of time-edges forms a $\Delta_1$-temporal clique.  It therefore suffices to identify the (unique partition into) $\Delta_2$-saturated sets, and then for each one determine the minimum number of time-edge additions required to transform it into a $\Delta_1$-temporal clique.

Following this argument, our algorithm proceeds in two stages.  The first stage involves running the algorithm of Lemma \ref{lem:find-saturated} to find the unique partition of $\mathcal{E}$ into $\Delta_2$-saturated subsets; this algorithm is easily adapted to also output the corresponding templates.  This stage takes time $\mathcal{O}(|\mathcal{E}|^3|V|)$.

In the second stage we take each template $C_i=(X_i,[a_i,b_i])$ corresponding to a $\Delta_2$-saturated set and, for each pair $\{x,y\} \subseteq X_i $, we make an ordered list $\{t_1^i,{\dots}, t_{j_{xy}} \} $ of all the time appearances of the edge $xy$ in the interval $[a_i,b_i]$, and set $t_0^{i} =a_i $ and $t_{j_{xy}+1}^{i} =b_i $ (note that we can easily do this for all templates and all pairs in time $\mathcal{O}(|\mathcal{E}|\log|\mathcal{E}|)$). 

{Observe that if, for some $i$, $\{x,y\} \subseteq X_i $, and $j\in [j_{xy}+1]$, we have $t_{j}-t_{j-1}>\Delta_{1} $ then there is a violation to the $\Delta_1$-density condition for the edge $xy$ in the clique $C_i$ between times $t_{j}$ and $t_{j-1}$. The minimum number of time-edges one must add to fix this is clearly $ \lfloor \frac{ t_{j}-t_{j-1}}{\Delta_1}\rfloor $ since adding any fewer edges leaves a gap greater than $\Delta_1$ between appearances of $xy$ in $[a_i,b_i]$. Furthermore, adding $ \lfloor \frac{ t_{j}-t_{j-1}}{\Delta_1}\rfloor $ time appearances of $xy$ as even spaced as possible between  $t_{j}$ and $t_{j-1}$ satisfies the $\Delta_1$-density condition for $xy$ in the interval $[t_{j},t_{j-1}]$. By considering all templates representing $\Delta_2$ saturated sets and all pairs of vertices in these sets, it follows that the minimum number of edges additions needed to transform $\mathcal{G}$ into a cluster temporal graph is given by} \[S= \sum_{i\in \mathcal{I}}\sum_{\{x,y\}\subseteq X_i}\sum_{j\in [j_{xy}+1]}\left\lfloor \frac{ t_{j}-t_{j-1}}{\Delta_1}\right\rfloor . \]
The algorithm then outputs $\texttt{YES}$ if $S\leq k$, and $\texttt{NO}$ otherwise. Observe that for any pair $\{x,y\}\subseteq X_i$ such that there is no time appearance of $xy$ in $[a_i,b_i]$ the contribution to the sum is just $\lfloor \frac{ b_i-a_i}{\Delta_1}\rfloor $ and thus $S$ can be computed using $\mathcal{O}\left(|\mathcal{E}|\right)$ integer operations involving integers of size at most $T$.  It is therefore clear that the running time is dominated by that of stage one, and the claimed time bound follows.  
\end{proof}

\section{A Local Characterisation of Cluster Temporal Graphs}\label{sec:characterisation} In Section \ref{sec:charproof} we give a characterisation of cluster temporal graphs. We then use this characterisation in Section \ref{sec:charalg} to give a search-tree algorithm for \ETC. 
\subsection{The Five Vertex Characterisation}\label{sec:charproof}

Recall from Remark \ref{rem:deltas} that $(\Delta_1,\Delta_2)$-cluster temporal graphs are only defined for $\Delta_2>\Delta_1\geq 1$, since otherwise the decomposition of such a graph into $\Delta_1$-temporal cliques is not unique. We give the following complete characterization for identifying $(\Delta_1,\Delta_2)$-cluster temporal graphs by sub-graphs induced by small vertex sets. We also show this is best possible in terms of the number of vertices in the induced sub-graphs.


\begin{theorem} \label{Characterization} 
Let $\mathcal G$ be any temporal graph, and $\Delta_2>\Delta_1\geq 1$. Then, the following holds.  
\begin{itemize}
    \item If $\Delta_1=1$ then $\mathcal G$ is a $(1,\Delta_2)$-cluster temporal graph if and only if $\mathcal G[S]$ is a $(1,\Delta_2)$-cluster temporal graph for every set $S\subseteq V(\mathcal G)$ of at most $3$ vertices. 
    \item If $\Delta_1 >1$ then $\mathcal G$ is a $(\Delta_1,\Delta_2)$-cluster temporal graph if and only if $\mathcal G[S]$ is a $(\Delta_1,\Delta_2)$-cluster temporal graph for every set $S\subseteq V(\mathcal G)$ of at most $5$ vertices. 
\end{itemize}

\end{theorem}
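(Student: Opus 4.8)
The plan is to take the easy (``global implies local'') direction directly from Lemma~\ref{easychar}, and to attack the hard (``local implies global'') direction by contrapositive, phrased as a statement about minimal obstructions: let $\mathcal{G}$ be a temporal graph that is \emph{not} a $(\Delta_1,\Delta_2)$-cluster temporal graph but in which every \emph{proper} induced temporal subgraph \emph{is} one; I would then show $|V(\mathcal{G})|\le 3$ when $\Delta_1=1$ and $|V(\mathcal{G})|\le 5$ when $\Delta_1>1$. By Lemma~\ref{sat_cliques} some $\Delta_2$-saturated set $S\subseteq\mathcal{E}(\mathcal{G})$ fails to be a $\Delta_1$-temporal clique, so there is a ``bad pair'' $u,v\in V(S)$ and a window $[\tau,\tau+\Delta_1-1]\subseteq L(S)=[s,t]$ in which $uv$ never appears. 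Minimality forces $V(\mathcal{G})=V(S)$: indeed $S$ survives intact in $\mathcal{G}[V(S)]$, where it still lies in an indivisible set whose lifetime contains the missing window, so $\mathcal{G}[V(S)]$ is already a non-cluster and cannot be a proper subgraph. The whole task is then to bound the number of vertices that are \emph{essential}, in the sense that deleting any one of them must (by minimality) restore $\Delta_1$-density of $uv$ -- which, since the appearances of $uv$ are unaffected by deleting other vertices, can only happen by shrinking the lifetime of the saturated set containing $u$ and $v$ so that it no longer covers the window.

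The organising combinatorial tool I would introduce is the auxiliary \emph{link graph} on the time-edges of $S$, in which $(e,t)$ and $(e',t')$ are adjacent exactly when they are \emph{not} $\Delta_2$-independent, i.e.\ $e\cap e'\neq\emptyset$ and $|t-t'|<\Delta_2$. A direct argument from the definition of $\Delta_2$-independence shows that link-connectivity implies $\Delta_2$-indivisibility (adjacent time-edges cannot lie in different parts of a pairwise-independent partition). The converse fails, and pinning down exactly how it fails is the heart of the proof: the extra mechanism is \emph{span-trapping}, where two link-components that share a vertex only at widely separated times are nonetheless forced together because the \emph{lifetimes} they generate lie within $\Delta_2$ of one another. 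I expect to prove that in a minimal obstruction the indivisibility holding $u$ and $v$ together across the window is witnessed either inside a single link-path or by a single span-trapping configuration through one shared ``bridge'' vertex.

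For $\Delta_1=1$ the $\Delta_1$-temporal cliques are rigid -- every pair of the vertex set must appear at \emph{every} time of the lifetime -- so the conflict is always visible locally in time. Here I would show that the witness is always a pair of time-edges $(uw,t_1)$ and $(wv,t_2)$ sharing a vertex $w$ with $|t_1-t_2|<\Delta_2$ (hence in a common indivisible set spanning $[t_1,t_2]$) whose at most three endpoints cannot form a $1$-temporal clique; chasing the link- or trapping-structure from the missing window out to the endpoints of $L(S)$ produces such a pair. When $w$-edge and $v$-edge coincide on a single edge with an internal gap this degenerates to a $2$-vertex witness, and otherwise it is a conflict-triple on $3$ vertices, so minimality gives $|V(\mathcal{G})|\le 3$.

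The case $\Delta_1>1$ is the main technical contribution and where I expect the real difficulty. Now density tolerates gaps of length up to $\Delta_1-1$, so a single missing length-$\Delta_1$ window must be \emph{anchored} on both its left and its right by time-edges of $S$ whose indivisible connection to $u,v$ drags the lifetime past $\tau$ and past $\tau+\Delta_1-1$, and this connection may be of the span-trapping type rather than a simple link-path. The plan is to classify, separately on each side of the window, how that anchoring is realised -- by a link-neighbour of a $uv$-appearance, or by a trapping component meeting $\{u,v\}$ through one shared bridge vertex -- and to argue that each side contributes at most one essential vertex beyond $u,v$, with at most one further vertex needed to bridge, for a total of at most five. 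The hardest step is bounding the span-trapping contribution: a priori the chain realising the extended lifetime could pass through arbitrarily many distinct vertices, so the key lemma is a \emph{short-circuiting} statement that any such chain can be replaced by a single shared bridge vertex without destroying indivisibility or shortening the relevant lifetime, thereby collapsing long anchoring structures down to a bounded witness. I would finally confirm tightness -- that four vertices do not suffice -- by exhibiting, as the paper promises, a five-vertex temporal graph realising exactly one span-trapping obstruction every four-vertex induced subgraph of which is a $(\Delta_1,\Delta_2)$-cluster temporal graph.
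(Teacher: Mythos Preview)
Your approach is genuinely different from the paper's.  The paper does \emph{not} argue by minimal obstruction; instead it works forward: assuming the five-vertex (respectively three-vertex) condition, it fixes a $\Delta_2$-saturated set $S$, considers its decomposition into \emph{maximal} $\Delta_1$-temporal cliques, takes any two that are not $\Delta_2$-independent, and shows they can be merged into a larger clique inside $S$, contradicting maximality.  The five-vertex condition is invoked \emph{repeatedly} on many different carefully chosen quintuples $\{u,x,c,z,v\}$ (where $c$ is a shared vertex and $x,z$ are specific neighbours of $c$ in the two cliques) to propagate $\Delta_1$-density of \emph{every} edge $uv$ across the union of the lifetimes.  So the paper's proof is a construction that consumes the local hypothesis at many places, not a bound on a single witnessing obstruction.

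Your minimal-obstruction reformulation is of course logically equivalent, and the opening moves (reducing to $V(\mathcal{G})=V(S)$, observing that deleting any $w\notin\{u,v\}$ must shrink or split the saturated set so that the part containing $u,v$ no longer spans the window) are sound.  The link graph and the identification of span-trapping as the failure mode of ``link-connected $\Leftrightarrow$ indivisible'' are correct and illuminating.  The gap is in the counting step.  You assert that each side of the window contributes at most one essential vertex and that a single bridge suffices, and you name the ``short-circuiting lemma'' as the key, but you neither state it precisely nor indicate why long anchoring chains can always be collapsed to a single bridge vertex.  A priori the indivisibility of $S$ across the window could be realised by an arbitrarily long alternating sequence of link-paths and span-traps through many distinct intermediate vertices, and minimality only tells you that deleting any \emph{one} of them breaks it --- it does not by itself bound their number.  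The paper's proof sidesteps exactly this difficulty by never analysing the global connectivity structure of $S$: it instead leverages the local hypothesis at each pair $u,v$ separately to force density directly.  Without an explicit mechanism replacing that step (a precise short-circuiting statement with proof), your plan does not yet close, and I suspect that making it close would require an argument of comparable weight to Claims~\ref{W}--\ref{clm:cliqueextend} in the paper.
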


  For the special case $\Delta_1=1$ it is clear that one must consider sets of at least three vertices as this is the case for static cluster graphs. The following lemma shows through a counterexample that for $\Delta_1>1$ a characterization as above cannot be obtained by considering only subsets of at most four vertices.

\begin{lemma}\label{counter} For any $\Delta_2>\Delta_1> 1$, there exists a temporal graph on $5$ vertices which is not a $(\Delta_1,\Delta_2)$-cluster temporal graph but every induced temporal sub-graph on at most $4$ vertices is a $(\Delta_1,\Delta_2)$-cluster temporal graph.  
	\end{lemma}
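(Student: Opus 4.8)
The plan is to exhibit, for each pair $\Delta_2>\Delta_1>1$, one explicit temporal graph $\mathcal G$ on $\{v_1,\dots,v_5\}$ and to verify two things: that $\mathcal G$ is \emph{not} a $(\Delta_1,\Delta_2)$-cluster temporal graph, and that $\mathcal G[S]$ \emph{is} one for every $S$ of at most four vertices. For the second part I would first invoke Lemma~\ref{easychar}: since the cluster property is hereditary, it suffices to check only the five induced subgraphs on four vertices, each obtained by deleting a single vertex. For both parts the right tool is Lemma~\ref{sat_cliques}: a temporal graph is a cluster temporal graph if and only if every $\Delta_2$-saturated set of its time-edges is a $\Delta_1$-temporal clique. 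So I would arrange that the \emph{whole} time-edge set of $\mathcal G$ is a single $\Delta_2$-saturated (equivalently, $\Delta_2$-indivisible) set that fails to be a $\Delta_1$-temporal clique, while each single-vertex deletion makes the surviving time-edges break up into pairwise $\Delta_2$-independent $\Delta_1$-temporal cliques (the saturated decomposition can be read off via the algorithm of Lemma~\ref{lem:find-saturated}).

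The key mechanism, and the reason $\Delta_1>1$ is essential, is that $\Delta_2$-independence is \emph{non-local}: it depends on the full time-interval spanned by a set, not on individual time-edges. I would exploit this by \emph{interleaving} the appearances of several vertex-pairs along a long time-axis. Concretely, on three vertices $\{x,y,z\}$ the time-edges $(xy,1),(yz,3),(xz,5),(xy,7),(yz,9)$ (with, e.g., $\Delta_1=2,\Delta_2=3$) already illustrate the idea: consecutive-in-time appearances share a vertex and lie within $\Delta_2$, so the conflict graph is connected and the set is $\Delta_2$-indivisible; yet the span $[1,9]$ is so long that $xz$ (appearing only at time $5$) is not $\Delta_1$-dense, so the set is not a clique. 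Deleting any one vertex leaves two appearances of a single pair that are now $\Delta_2$ apart in time, hence $\Delta_2$-\emph{independent}, so the remainder splits into single-edge cliques. This gadget is itself only a three-vertex obstruction, so it is caught by a four-vertex test; the real work is to \emph{spread} the same phenomenon across all five vertices so that the spanning indivisible non-clique exists only when all five are present.

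The approach for the five-vertex instance is therefore to lay out the appearances with enough temporal (and ideally cyclic) symmetry that every vertex sits on the unique ``time-bridge'' joining an earlier cluster of appearances to a later one, so that deleting that vertex simultaneously removes the bridging appearances and leaves the survivors partitioned by a time-gap of length at least $\Delta_2$ into independent temporal cliques; the five bridges must be placed so that each of the five vertices is critical and no four-vertex subgraph still contains a spanning indivisible non-clique. I expect the main obstacle to be exactly this simultaneity: because $\Delta_2$-independence is dictated by whole intervals, a deletion can only repair the obstruction if the deleted vertex's appearances were precisely those bridging an otherwise-separable gap (or were an endpoint of the non-dense pair), and there is a genuine tension between keeping the graph indivisible — which requires temporally close, vertex-sharing appearances chaining everything together — and making each deletion clean — which requires the surviving appearances to be vertex-disjoint or $\Delta_2$-separated. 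Reconciling these for all five deletions at once is the crux, and is precisely what makes four vertices insufficient while five suffice; once a candidate timing is fixed, the verification reduces to the finite, routine check of the whole graph and its five four-vertex subgraphs against Lemma~\ref{sat_cliques}.
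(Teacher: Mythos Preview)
Your toolkit is right --- Lemma~\ref{sat_cliques} for the verification and Lemma~\ref{easychar} to reduce to the five four-vertex subgraphs --- but the construction you sketch is both incomplete and harder than necessary. You never actually exhibit a five-vertex temporal graph; you only describe a design principle (interleaved appearances, each vertex a ``time-bridge'') and then correctly identify that making all five deletions work simultaneously under that principle is delicate. That delicacy is real, and you have not resolved it, so as it stands this is a plan rather than a proof.

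The paper's construction sidesteps the difficulty entirely with a much simpler idea. Take two triangles $abc$ and $cde$ sharing only the vertex $c$, with time-edges $(ac,1),(bc,1),(ab,2)$ and $(cd,\Delta_2{+}2),(ce,\Delta_2{+}2),(de,\Delta_2{+}1)$. Each triangle is itself a $\Delta_1$-temporal clique (this is where $\Delta_1>1$ is used: each triangle spans an interval of length two), and together they are not $\Delta_2$-independent because they share $c$ while their lifetimes $[1,2]$ and $[\Delta_2{+}1,\Delta_2{+}2]$ are only $\Delta_2{-}1$ apart. The whole graph is not a single clique since $ad$ and $be$ are absent. The crucial asymmetry is in the timing: the edges incident to $c$ sit at the \emph{outer} ends of the two lifetimes (times $1$ and $\Delta_2{+}2$), while the \emph{inner} endpoints --- the ones that make the intervals too close --- are carried by $(ab,2)$ and $(de,\Delta_2{+}1)$, which do not touch $c$. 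Consequently, deleting $c$ leaves two vertex-disjoint edges; deleting any other vertex, say $a$, collapses one triangle to the single time-edge $(bc,1)$, whose lifetime $\{1\}$ is now exactly $\Delta_2$ away from $[\Delta_2{+}1,\Delta_2{+}2]$ and hence $\Delta_2$-independent of the intact triangle $cde$. No interleaving, no cyclic symmetry, and no need for every vertex to bridge a time-gap: one shared vertex, with the ``close'' edges placed away from it, does all the work.
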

\begin{proof}
Let us consider the temporal graph $\mathcal G=(G, \mathcal T)$ shown in Figure \ref{4vertexcounter} and fix any $\Delta_2>\Delta_1>1$. We claim that $\mathcal G$ is not a $(\Delta_1,\Delta_2)$-cluster temporal graph. First note that $\mathcal G$ cannot be a $\Delta_1$-temporal clique on $5$ vertices because at least the two static edges $ad$ and $be$ are missing in the underlying graph. Secondly, as we now explain, $\mathcal G$ is not a union of disjoint temporal cliques. By $\Delta_1$-density, the three time-edges $(ab,2),(bc,1)$ and $(ca,1)$ must be in the same $\Delta_1$-temporal cliques, and similarly for $(cd,\Delta_2+2),(de,\Delta_2+1)$ and $(ce,\Delta_2+2)$. However, these two cliques are not $\Delta_2$-independent because they are adjacent in the underlying graph, sharing the vertex $c$, and the time-edges $(ab,2)$ and $(de,\Delta_2+1)$ are less than $\Delta_2$ time steps apart, thus the two $\Delta_1$-temporal cliques are not $\Delta_2$-independent. Consider now any $4$-vertex subset $S$:

\textit{Case (i):} $c \in S$. In this case, for any choice of the other vertices in $S$ the underlying graph of $\mathcal G[S]$  is a triangle which includes $c$ plus an edge $cx$ where $x \in \{a,b,d,e\}$ is not in the triangle.  If $x \in \{a,b\}$ then $(cx,1)$ is $\Delta_2$-independent with respect to the triangle $cde$. On the other hand, if $x \in \{d,e\}$ then $(cx,\Delta_2+2)$ is $\Delta_2$-independent with respect to the triangle $abc$. Thus, in any case, we would obtain two $\Delta_2$-independent $\Delta_1$-temporal cliques.

\textit{Case (ii):} $c \not \in S$. In this case $\mathcal G[S]$ consists only of two disjoint time-edges; so clearly it is a $(\Delta_1,\Delta_2)$-cluster temporal graph.  
\end{proof}

\begin{figure}[h]
	\begin{center}
		\begin{tikzpicture}[xscale=.4,yscale=.4,knoten/.style={thick,circle,draw=black,fill=white},edge/.style={very thick}]
 
		\node[knoten] (a) at (-5.196,3) {a};
		\node[knoten] (b) at (-5.196,-3) {b};
		\node[knoten] (c) at (0,0) {c};
		\node[knoten] (d) at (5.196,3) {d};
		\node[knoten] (e) at (5.196,-3) {e};

		\draw[edge] (a) to (c);
		\draw[edge] (c) to (b);
		\draw[edge] (b) to (a);
		\draw[edge] (c) to (d);
		\draw[edge] (d) to (e);
		\draw[edge] (c) to (e);
		
		\node (ac) at (-2.6,2.2) { $1$};
 		\node (bc) at (-2.6,-2.3) {$1$};
		\node (dc) at (2,2.2) { $\Delta_2+2$};
		\node (dc) at (2,-2.3) {  $\Delta_2+2$};
		\node (bc) at (-5.8,0) {  $2$};
		\node (de) at (6.5,0) { $\Delta_2+1$};
		
		\end{tikzpicture}\end{center}
	\caption{The counter-example from Lemma \ref{counter}. This temporal graph shows that $4$-vertex sub-graphs are not sufficient to characterise $(\Delta_1,\Delta_2)$-cluster temporal graphs.}\label{4vertexcounter}
\end{figure}
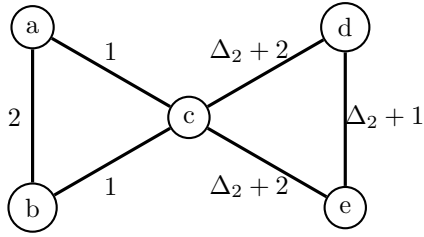

In the following we will say that a temporal graph $\mathcal{G}$ satisfies the \textit{five vertex condition} if every temporal sub-graph of $\mathcal G$ induced by (at most) five vertices is a $(\Delta_1,\Delta_2)$-cluster temporal graph.

The `only if' direction of Theorem \ref{Characterization}, which says that any $(\Delta_1,\Delta_2)$-cluster temporal graph respects the five vertex condition, is an immediate consequence of Lemma \ref{easychar}. The other `if' direction will be established in Lemmas \ref{Thm:CharhardDirection} and  \ref{lem:charDel1} and requires considerably more work. The following lemma is quite simple but illustrates a key idea in the proof of the `if' direction of Theorem \ref{Characterization}: the five vertex condition allows us to `grow' certain sets of time-edges. 

\begin{lemma} \label{lem:mallet} Let $\mathcal G$ be any temporal graph satisfying the property that $\mathcal G[S]$ is a $(\Delta_1,\Delta_2)$-cluster temporal graph for every set $S\subseteq V(\mathcal G)$ of at most five vertices. Let $\mathcal{H}'$ be a $\Delta_1$-temporal clique realising the template $(H,[c,d])$, and  $x,y\in H$. Suppose that $xy$ is $\Delta_1$-dense in the set $[a,b]\supseteq [c,d]$ and let $r_1 = \min\left(\mathcal{T}(xy)\cap [a,b]\right)$ and  $r_2 = \max\left(\mathcal{T}(xy)\cap [a,b]\right)$. Then there exists a $\Delta_1$-temporal clique $\mathcal{H}\supseteq \mathcal{H}'$ which $\Delta_1$-realises the template $(H,[r_1,r_2])$ where $[r_1,r_2] \supseteq [a+\Delta_1-1,b-\Delta_1+1]$.
\end{lemma}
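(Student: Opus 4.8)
The plan is to build $\mathcal H$ by collecting all time-edges of $\mathcal G$ spanned by $H$ inside the window $[r_1,r_2]$, and to invoke the hypothesis (in fact only its four-vertex instances) to certify that this forms a $\Delta_1$-temporal clique. Before the main work I would dispose of the easy containment $[r_1,r_2]\supseteq[a+\Delta_1-1,b-\Delta_1+1]$: applying $\Delta_1$-density of $xy$ in $[a,b]$ to the window starting at $\tau=a$ yields an appearance of $xy$ no later than $a+\Delta_1-1$, so $r_1\le a+\Delta_1-1$, and symmetrically the window ending at $b$ forces $r_2\ge b-\Delta_1+1$.

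The heart of the argument is the claim that \emph{every} pair $u,v\in H$ has $uv$ being $\Delta_1$-dense on the whole of $[r_1,r_2]$, not merely on $[c,d]$. To prove this I would fix such a pair, set $S=\{x,y,u,v\}$ (a set of at most four vertices), and use the hypothesis to say that $\mathcal G[S]$ is a $(\Delta_1,\Delta_2)$-cluster temporal graph. The restriction $\mathcal H'[S]$ is a $\Delta_1$-temporal clique on $S$, since every pair inside $S\subseteq H$ inherits its $\Delta_1$-density on $[c,d]$ from $\mathcal H'$; in particular it is $\Delta_2$-indivisible. Separately, the set $\mathcal X=\{(xy,t):t\in\mathcal T(xy)\cap[a,b]\}$ is also $\Delta_2$-indivisible, because $\Delta_1$-density of $xy$ in $[a,b]$ forces consecutive appearances of $xy$ to differ by at most $\Delta_1<\Delta_2$, so no splitting of $\mathcal X$ into two parts can be $\Delta_2$-independent. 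As $\mathcal H'[S]$ and $\mathcal X$ share the appearances of $xy$ in $[c,d]\subseteq[a,b]$, Lemma~\ref{lem:IndivisibleIntersection} gives that $\mathcal H'[S]\cup\mathcal X$ is $\Delta_2$-indivisible, and then Lemma~\ref{lem:indivisibility}, applied inside the cluster temporal graph $\mathcal G[S]$, places this set in a single $\Delta_1$-temporal clique $\mathcal K$. Now $\mathcal K\supseteq\mathcal H'[S]$ forces $u,v\in V(\mathcal K)$, while $(xy,r_1),(xy,r_2)\in\mathcal X\subseteq\mathcal K$ forces $L(\mathcal K)\supseteq[r_1,r_2]$; since $\mathcal K$ is a clique, $uv$ is $\Delta_1$-dense on $L(\mathcal K)$ and hence on the sub-interval $[r_1,r_2]$, which is exactly the claim.

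With the claim established I would finish by taking $\mathcal H=\mathcal H'\cup\mathcal G[H]|_{[r_1,r_2]}$. By construction $\mathcal H\supseteq\mathcal H'$; its vertex set is $H$, because each pair of $H$ is $\Delta_1$-dense in the nonempty interval $[r_1,r_2]$ and therefore appears at least once; the extreme appearances $(xy,r_1)$ and $(xy,r_2)$ witness that the relevant lifetime is $[r_1,r_2]$; and by the claim every pair of $H$ is $\Delta_1$-dense there, so $\mathcal H$ is a $\Delta_1$-temporal clique realising $(H,[r_1,r_2])$. The only step requiring care is the density-propagation claim, and this is where I expect the main difficulty to lie: the content is precisely the observation that the chain $\mathcal X$ of appearances of $xy$ and the original clique $\mathcal H'$ cannot be separated in the local decomposition of $\mathcal G[S]$, so they are absorbed into one clique whose lifetime stretches across all of $[r_1,r_2]$ — a conclusion that the four-vertex instance of the hypothesis (ensuring $\mathcal G[S]$ decomposes into cliques at all) together with the two indivisibility lemmas delivers cleanly, whereas the surrounding bookkeeping on $V(\mathcal H)$ and the lifetime is routine.
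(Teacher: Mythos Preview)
Your proposal is correct and follows essentially the same argument as the paper: fix an arbitrary pair $u,v\in H$, work inside the at-most-four-vertex set $\{x,y,u,v\}$, observe that $\mathcal H'[\{x,y,u,v\}]$ and the set of appearances of $xy$ in $[a,b]$ are each $\Delta_2$-indivisible with nonempty intersection, combine them via Lemma~\ref{lem:IndivisibleIntersection}, and then invoke Lemma~\ref{lem:indivisibility} together with the hypothesis to place everything in a single $\Delta_1$-temporal clique whose lifetime contains $[r_1,r_2]$. You are in fact somewhat more explicit than the paper about the final assembly of $\mathcal H$ and about the bound $[r_1,r_2]\supseteq[a+\Delta_1-1,b-\Delta_1+1]$, and you correctly note that only the four-vertex instances of the hypothesis are needed here.
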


\begin{proof} Fix an arbitrary pair $w,z\in H$. Because $\mathcal H'$ is a $\Delta_1$-temporal clique, the edge $wz$ is $\Delta_1$-dense in $[c,d]$. We now consider the set $A=\{x,y,w,z\}\subseteq H$. Observe that $\mathcal{H}'[A]$ is a $\Delta_1$-temporal clique and thus an indivisible set. By hypothesis $xy$ is $\Delta_1$-dense in $[a,b]$ and so if {$A_{xy}=\{(xy,t) : t\in [a,b]\}$} then this set is $\Delta_2$-indivisible. Since $x,y \in A$ and $[a,b]\supseteq [c,d] $ we have $\mathcal{H}'[A] \cap A_{xy} \neq \emptyset $, thus $\mathcal{H}'[A] \cup A_{xy}$ is $\Delta_2$-indivisible by Lemma \ref{lem:IndivisibleIntersection}. By five vertex condition we know that $\GG[A]$ is a $(\Delta_1,\Delta_2)$-cluster temporal graph, therefore by Lemma \ref{lem:indivisibility} the set $\mathcal{H}'[A] \cup A_{xy} $ must be contained within one $\Delta_1$-temporal clique $\mathcal{H}$ with lifetime containing $[r_1,r_2]$. So, since the choice of $w,z\in H $ was arbitrary, the result holds. Since $A_{xy}\subseteq \mathcal{H}$, where $A_{xy}$ is a set of edges which are $\Delta_1$-dense on $[a,b]$, it follows that $r_1\leq a+\Delta_1-1$ and $r_2 \geq b-\Delta_1+1$.
  \end{proof}

We are now ready to state and prove the `if' direction of Theorem \ref{Characterization} in the case $\Delta_1>1$. {The main idea of the proof is as follows, suppose there are at least two distinct $\Delta_1$-temporal cliques contained in a $\Delta_2$-saturated set that are not $\Delta_2$-independent. We first show that we can find a small set of time-edges that is not $\Delta_2$-independent of either clique. We then use the five vertex condition to grow this set of time-edges until it is a $\Delta_1$-temporal clique that contains both of the original $\Delta_1$-temporal cliques, a contradiction.}

\begin{lemma}\label{Thm:CharhardDirection}
Let $\mathcal{G}$ be a temporal graph and suppose that, for every set $S$ of at most five vertices in $\mathcal{G}$, the induced temporal subgraph $\mathcal{G}[S]$ is a $(\Delta_1,\Delta_2)$-cluster temporal graph.  Then $\mathcal{G}$ is a $(\Delta_1,\Delta_2)$-cluster temporal graph.
\end{lemma}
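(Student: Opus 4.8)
By Lemma~\ref{sat_cliques}, the temporal graph $\mathcal{G}$ is a $(\Delta_1,\Delta_2)$-cluster temporal graph if and only if every $\Delta_2$-saturated set of time-edges forms a $\Delta_1$-temporal clique, so the plan is to fix an arbitrary $\Delta_2$-saturated set $S$, write $L(S)=[s,t]$, and prove that every pair $x,y\in V(S)$ is $\Delta_1$-dense in $[s,t]$. The strategy is to exhibit a single $\Delta_1$-temporal clique that exhausts $S$: I would build a candidate clique inside $S$ and repeatedly enlarge it, using the fact that $S$ cannot be split into $\Delta_2$-independent pieces as the lever that always provides material for one more enlargement, until the clique covers all of $V(S)$ over the whole interval $[s,t]$.

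Concretely, among all $\Delta_1$-temporal cliques contained in $S$ I would choose $\mathcal{K}$ with template $(H,[c,d])$ that is maximal, first maximising $|H|$ and then, via the growing Lemma~\ref{lem:mallet}, extending the lifetime $[c,d]$ so that each edge of $H$ is $\Delta_1$-dense on the largest interval available to it within $S$. By Lemma~\ref{sat_set} any $\Delta_1$-temporal clique meeting $S$ is contained in $S$, so this choice is well defined and $\mathcal{K}\subseteq S$. If $H=V(S)$ and all pairs are dense on $[s,t]$ we are done, so suppose not. Because $S$ is $\Delta_2$-indivisible, the pair $(\mathcal{K},\,S\setminus\mathcal{K})$ is not a partition of $S$ into $\Delta_2$-independent sets; hence $\mathcal{K}$ and $S\setminus\mathcal{K}$ fail to be $\Delta_2$-independent, which forces the existence of a time-edge $(e^{*},t^{*})\in S\setminus\mathcal{K}$ that shares a vertex with $H$ and has $t^{*}$ within $\Delta_2$ of $[c,d]$. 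This $(e^{*},t^{*})$ is the handle used to contradict maximality.

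I would then run a case analysis on $(e^{*},t^{*})=(wz,t^{*})$ according to whether its endpoints lie in $H$ and where $t^{*}$ sits relative to $[c,d]$. In each case I select a set $A$ of at most five vertices — typically the shared vertex, one or two further vertices of $H$ that witness the local density structure, and the endpoint(s) of $e^{*}$ — and apply the five-vertex hypothesis to conclude that $\mathcal{G}[A]$ is a $(\Delta_1,\Delta_2)$-cluster temporal graph. Since $\mathcal{K}[A]$ and the relevant appearances of $e^{*}$ share a vertex, Lemma~\ref{lem:IndivisibleIntersection} makes their union $\Delta_2$-indivisible, so Lemma~\ref{lem:indivisibility} places that union inside a single $\Delta_1$-temporal clique on $A$; this yields $\Delta_1$-density of the cross-edges joining $w$ to $H$. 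Running this argument for an arbitrary pair (exactly as in the proof of Lemma~\ref{lem:mallet}) and then applying the growing lemma to reconcile the merged time interval upgrades these local densities into a genuine $\Delta_1$-temporal clique on $H\cup\{w\}$ (or on $H$ over a strictly larger interval), contradicting the maximality of $\mathcal{K}$.

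The main obstacle is precisely the configuration realised by the counterexample of Lemma~\ref{counter}: when $w\notin H$ and $t^{*}$ lies just outside $[c,d]$ but within $\Delta_2$, two triangle-like cliques meet at a single shared vertex and remain locally consistent on every four-vertex subset, so nothing is forced until all five vertices are examined at once. Verifying that the five-vertex hypothesis then compels \emph{all} cross-edges between $w$ and $H$ to be $\Delta_1$-dense over the merged interval — rather than merely on a sub-interval that would leave the enlarged vertex set divisible — is the delicate point, and the bookkeeping needed to guarantee that the old pairs of $H$ stay $\Delta_1$-dense on the extended interval (handled by pre-maximising $[c,d]$ with Lemma~\ref{lem:mallet}) is where most of the technical effort lies.
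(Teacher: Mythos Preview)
Your high-level strategy matches the paper's: reduce via Lemma~\ref{sat_cliques} to showing each $\Delta_2$-saturated set $S$ is a $\Delta_1$-temporal clique, then argue by contradiction through maximal cliques inside $S$. However, the step in which you extract the handle edge $(e^*,t^*)$ does not go through.

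You assert that non-independence of $\mathcal{K}$ and $S\setminus\mathcal{K}$ forces a single time-edge $(e^*,t^*)\in S\setminus\mathcal{K}$ sharing a vertex with $H$ \emph{and} lying within $\Delta_2$ of $[c,d]$. But non-independence of two edge-sets is a statement about their \emph{templates}: it says $V(S\setminus\mathcal{K})\cap H\neq\emptyset$ and the lifetimes are close, yet these two facts may be witnessed by different edges. Concretely, take $\Delta_2=5$, $\mathcal{K}=\{(ab,1)\}$, and $S\setminus\mathcal{K}=\{(ac,10),(cd,6),(de,2)\}$. One checks directly that $S$ is $\Delta_2$-indivisible and that $\mathcal{K}$ is a maximal $\Delta_1$-clique in $S$ of maximum vertex count (every maximal clique here is a single edge). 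Nevertheless each edge of $S\setminus\mathcal{K}$ is individually $\Delta_2$-independent from $\mathcal{K}$: the edge $(ac,10)$ meets $H$ but is nine timesteps away, while $(cd,6)$ and $(de,2)$ share no vertex with $\{a,b\}$. Your handle does not exist, and the case analysis never starts. The five-vertex hypothesis is not invoked in this step, so it cannot rescue it.

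The paper circumvents this by comparing not $\mathcal{K}$ against the unstructured remainder $S\setminus\mathcal{K}$, but two \emph{maximal cliques} $S_i,S_j$ from a covering family $\{S_1,\dots,S_m\}$ of $S$; indivisibility of $S$ guarantees some such pair is not $\Delta_2$-independent. Because both $S_i$ and $S_j$ are cliques, every pair of their vertices is joined by a $\Delta_1$-dense edge, and this structure is what allows one to manufacture a five-vertex witness set $W$ containing a common vertex $c$ together with edges $(xc,r_i)\in S_i$ and $(zc,r_j)\in S_j$ (Claim~\ref{W}). The subsequent merging of $S_i$ and $S_j$ into a single clique (Claims~\ref{lem:weakened}--\ref{clm:cliqueextend}) is also considerably more delicate than your sketch indicates --- in particular, when the bridging interval has length below $\Delta_1$ a separate argument is needed --- but the decisive missing idea is that the comparison must be between two cliques, not one clique against an amorphous complement.
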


\begin{proof}
Let $\mathcal P_{\mathcal G}$ be the unique partition of $\mathcal E(\mathcal G)$ into $\Delta_2$-saturated subsets guaranteed by Lemma \ref{saturated}. Let us select any subset $S \in \mathcal P_{\mathcal G}$ and suppose that $L(S)=[s,t]$. We want to show that $S$ forms a $\Delta_1$-temporal clique in $\mathcal G$.

To prove this, we introduce a collection $\varkappa_S=\{S_1,\dots,S_m\}$ of subsets of $S$,  such that each $S_i$ is a $\Delta_1$-temporal clique for any $i \in \{1,\dots,m\}$, $S=\bigcup_{i=1}^m S_i$ and for every $S_i \in S$ there does not exist any other $\Delta_1$-temporal clique $K \subseteq  S$ such that $S_i \subsetneq K$. Note that the subsets $S_1,\dots, S_m$ in this collection are not required to be pairwise disjoint. We also observe that this collection exists: since a singleton set consisting of any time-edge is a $\Delta_1$-temporal clique, it follows that every time-edge in $S$ belongs to at least one $\Delta_1$-temporal clique. Then, we will say that each $S_i$ is a {\it maximal $\Delta_1$-temporal clique within $S$}.

We will assume for a contradiction that $m\geq 2$. Because $S$ is $\Delta_2$-saturated, it is not possible that all the $\Delta_1$-temporal cliques contained in $\varkappa_S$ are pairwise $\Delta_2$-independent, since this would imply that $S$ is not $\Delta_2$-indivisible.  Thus, as we assume $m\geq 2$, let us consider any distinct  $S_i,S_j \in \varkappa_S$ that are not $\Delta_2$-independent. We shall then show that they must both be contained within a larger $\Delta_1$-temporal clique, which itself is contained in $S$, contradicting maximality. It will then follow that $m=1$ and thus $S$ is itself a $\Delta_1$-temporal clique, which establishes the lemma. 

The next claim shows that if two maximal $\Delta_1$-temporal cliques in $S$ are not $\Delta_2$-independent, then there is a small sub-graph witnessing this non-independence.

\begin{clm} \label{W}
Let $S_i,S_j \in \varkappa_S$ be any pair of $\Delta_1$-temporal cliques which are not $\Delta_2$-independent. Then, there exists a set $W \subseteq V(S_i) \cup V(S_j)$ with $|W| \leq 5$ such that $W$ contains a vertex $c\in V(S_i)\cap V(S_j)$.{ Additionally,} $(S_i \cup S_j)[W]$ is $\Delta_2$-indivisible and contains the time-edges $(xc,r_i)\in S_i$ and $(zc,r_j) \in S_j$, for some $x,z\in V$. 
\end{clm}

\begin{poc}[\ref{W}] As $ S_i$ and $S_j$ are $\Delta_1$-temporal cliques, it follows that $ S_i[W]$ and $S_j[W]$ are $\Delta_1$-temporal cliques for any $W\subseteq V$ by Lemma \ref{easychar}. Since  $S_i$ and $S_j$ are chosen so that they are not $\Delta_2$-independent, they must share at least one vertex $c$ and the following cases can occur:

{\it Case 1:} there exist respectively two time-edges $(xc,r_i) \in S_i$ and $(zc,r_j) \in S_j$ such that $|r_i-r_j| < \Delta_2$. For this case set $W=\{c,x,z \}$. The set $I=\{(cx,r_i),(cz,r_j)\}$ is $\Delta_2$-indivisible and has non-empty intersection with both $S_i[W]$ and $S_j[W]$. Thus by Lemma \ref{lem:IndivisibleIntersection}, $I\cup  S_i[W]$ and $I\cup  S_j[W]$ are both $\Delta_2$-indivisible. Applying Lemma \ref{lem:IndivisibleIntersection} again gives that $S_i[W]\cup S_j[W] = (S_i \cup S_j)[W]$ is $\Delta_2$-indivisible.

{\it Case 2:} there exist respectively two time-edges $(xy,t) \in S_i$ and $(zw,t') \in S_j$ such that $|t-t'| < \Delta_2$, (incidentally, Figure \ref{4vertexcounter} is an example of this kind of structure). Let us note that, because $c \in V(S_i) \cap V(S_j)$, then $x,y,z$ and $w$ must all be adjacent to the vertex $c$ in the underlying graph. Set $W=\{c,x,y,z,w\} \subset V(S_i) \cup V(S_j)$. Let $A_c$ be the set of appearances of the edges $cx,cy,cz$ and $cw$ in $L(S_i) \cup L(S_j)$, thus $A_c \cup \{(xy,t),(zw,t')\} \subseteq (S_i \cup S_j)[W]$ is $\Delta_2$-indivisible and intersects $S_i[W]$ and $S_j[W]$. So, as before, applying Lemma \ref{lem:IndivisibleIntersection} gives that $S_i[W]\cup S_j[W] = (S_i \cup S_j)[W]$ is $\Delta_2$-indivisible. Furthermore, since $x,y,c \in V(S_i)$ and $S_i$ is a $\Delta_1$-temporal clique, there must exist at least one time-edge $(xc,r_i)\in S_i$ that is at distance at most $\Delta_1<\Delta_2$ from $(xy,t)$, thus $(xc,r_i)\in (S_i \cup S_j)[W] $ by Lemma \ref{lem:IndivisibleIntersection}. The same applies for some $(cz,r_j)\in S_j$.\end{poc}

Recall that $S_i$ and $S_j$ are both $\Delta_2$-indivisible as they are $\Delta_1$-temporal cliques, and that by Claim \ref{W} there exists a set $W\subseteq V$ such that $(S_i \cup S_j)[W]$ is $\Delta_2$-indivisible and has non-empty intersection with $S_i$ and $S_j$. It follows from Lemma \ref{lem:IndivisibleIntersection} that both $S_i[W]\cup S_j$ and $S_j[W]\cup S_i$ are $\Delta_2$-indivisible. As their intersection is $(S_i \cup S_j)[W]\neq \emptyset$, invoking Lemma \ref{lem:IndivisibleIntersection} once again gives that $S_i\cup S_j$ is $\Delta_2$-indivisible. Recall that $L(S)=[s,t]$.  
 
\begin{clm} \label{lem:weakened}
Let $S_i$ and $S_j$ {be as in Claim \ref{W}}, where $L(S_i)=[s_i,t_i]$, and $L(S_j)=[s_j,t_j]$. Then, there exists some $K\subseteq V$ and $s',t'\in \mathbb{Z}^+$ such that $\mathcal G$ contains a $\Delta_1$-temporal clique $\mathcal K $ which generates the template $(K,[s',t'])$ where $s' \in [s, \min \{s_i,s_j\}+\Delta_1-1]$, $t' \in [\max \{t_i,t_j\}-\Delta_1+1,t]$. Furthermore, $(xc,r_i)  \in S_i\cap \mathcal{K}$, and $(cz,r_j)  \in S_j\cap \mathcal{K}$, where the time-edges $(xc,r_i)$ and $(cz,r_j)$ are specified by Claim \ref{W}.
\end{clm}

 \begin{poc}[\ref{lem:weakened}]By Claim \ref{W}, there exists a set $W\subseteq V$ of at most five vertices and two time-edges $(xc,r_i) \in S_i$ and $(cz,r_j) \in S_j$ such that $(S_i \cup S_j)[W] $ is $\Delta_2$-indivisible and $(xc,r_i),(cz,r_j)\in (S_i \cup S_j)[W]$. By the five vertex condition $\mathcal{G}[W]$ is $(\Delta_1,\Delta_2)$-cluster temporal graph. Thus, we can apply Lemma \ref{lem:indivisibility} to see that  $(S_i \cup S_j)[W] $ is contained within a single $\Delta_1$-temporal clique $\mathcal{K}'$. Since $(xc,r_i), (cz,r_j)\in (S_i \cup S_j)[W] $, we see that the lifetime $L(\mathcal{K}')=[k_1,k_2]$ of this $\Delta_1$-temporal clique intersects both $L(S_i)$ and $L(S_j)$. 
 
 The edge $xc$ is $\Delta_1$-dense in both $S_i$ and $\mathcal{K}'$. Since $(xc,r_i) \in S_i\cap \mathcal{K}'$, we can further deduce that the edge $xc$ is $\Delta_1$-dense in $L(S_i) \cup L(\mathcal{K}') = \left[\min\{s_i,k_1 \}, \max\{t_i,k_2\} \right] $. Applying  Lemma \ref{lem:mallet} to $\mathcal{K}'$ and $xc$, where we take $[a,b] = L(S_i) \cup L(\mathcal{K}') $ and $[c,d]= L(\mathcal{K}') $, we can extend $L(\mathcal{K}')$ to an interval $[r_1^{xc},r_2^{xc}] $ satisfying
\begin{equation}\label{eq:xyextend}[r_1^{xc},r_2^{xc}] \supseteq \left[\min\{s_i,k_1 \} + \Delta_1 -1 , \max\{t_i,k_2\} -\Delta_1 +1 \right].\end{equation}

 Similarly, $cz$ is $\Delta_1$-dense in $L(S_j) \cup L(\mathcal{K}') = \left[\min\{s_j,k_1 \}, \max\{t_j,k_2\} \right] $. Applying Lemma \ref{lem:mallet} to $\mathcal{K}'$ and $cz$ with $[a,b] = L(S_j) \cup L(\mathcal{K}') $ and $[c,d]= L(\mathcal{K}') $ extends the lifetime of $\mathcal{K}'$ to an interval $[r_1^{cz},r_2^{cz}] $ satisfying
\begin{equation}\label{eq:wzextend}[r_1^{cz},r_2^{cz}] \supseteq \left[\min\{s_j,k_1 \} + \Delta_1 -1 , \max\{t_j,k_2\} -\Delta_1 +1 \right].\end{equation}

It follows that there exists some $\Delta_1$-temporal clique $\mathcal{K}$ in $\mathcal{G}$ with vertex set $K$ and lifetime $L(\mathcal{K})$ satisfying $L(\mathcal{K})\supseteq [\min\{r_1^{xc},r_1^{cz}\}, \max\{r_2^{xc},r_2^{cz}\}  ] $. Observe that from \eqref{eq:xyextend} and \eqref{eq:wzextend} we have  
\begin{align*} \min\{r_1^{xc},r_1^{cz}\} & = \min\left\{ \min\{s_i,k_1 \} + \Delta_1 -1 , \min\{s_j,k_1 \} + \Delta_1 -1 \right\}\\
& =\min\left\{ \min\{s_i,k_1 \}   , \min\{s_j,k_1 \}  \right\}  + \Delta_1 -1 \\
&\leq \min\left\{ s_i    , s_j  \right\}  + \Delta_1 -1 ,\end{align*} and similarly $\max\{r_2^{xc},r_2^{cz}\} \geq \max\left\{ t_i    , t_j  \right\}  - \Delta_1 +1 $.  Finally, since $S$ is $\Delta_2$-saturated, it follows by Lemma \ref{sat_set} that $L(\mathcal K) \subseteq L(S) = [s,t]$. \end{poc}

Let us now consider {$\mathcal K=(K,[s',t'])$}, the $\Delta_1$-temporal clique of Claim \ref{lem:weakened}. From this we want to extend $S_i$ and $S_j$ to a $\Delta_1$-temporal clique with vertex set $V(S_i) \cup V(S_j)$ and lifetime at least $L(\mathcal K) \cup L(S_i) \cup L(S_j)=[\min \{s_i,s_j,s'\},\max \{t_i,t_j,t'\}]$. We achieve this over the following two claims.

\begin{clm} \label{cl:hclique}Let $\mathcal K=(K,[s',t'])$ be given by Claim \ref{lem:weakened}. Then there exist $h_1 \leq h_2$ satisfying \[[h_1,h_2] \supseteq [\min\{s_i,s_j,s'\} + \Delta_1-1,\max\{t_i,t_j,t'\} - \Delta_1+1  ],\] and a $\Delta_1$-temporal clique in $\mathcal{G}$ generating the template $(V(S_i) \cup V(S_j),[h_1,h_2])$. Furthermore, this  $\Delta_1$-temporal clique contains the two edges  $(xc,r_i)\in S_i\cap \mathcal{K} $ and $(cz,r_j)\in S_j\cap \mathcal{K} $, specified by Claim \ref{W}. 
\end{clm}

\begin{poc}[\ref{cl:hclique}] By Claim \ref{lem:weakened}  the $\Delta_1$-temporal clique $\mathcal{K}$ contains the time-edges $(xc,r_i) \in S_i$ and  $(cz, r_j) \in S_j$. Let $A_{xc}$ and  $A_{cz}$ be the set of appearances of $xc$ in $L(S_i) \cup L(\mathcal{K})$ and $cz$ in $L(S_j) \cup L(\mathcal{K})$ respectively. The edge $xc$ is $\Delta_1$-dense in both $L(S_i)$ and $L(\mathcal{K})$, thus since $(xc,r_i)  \in S_i\cap \mathcal{K}$, it follows that $xc$ is $\Delta_1$-dense in $L(S_i) \cup L(\mathcal K)$. Similarly, $cz$ is $\Delta_1$-dense in $L(S_j) \cup L(\mathcal K)$. Since both $A_{xc}$ and $ A_{cz}$ contain a set of edges that is dense in $L(\mathcal{K})$, we see that $A_{xc} \cup A_{cz}$ must be $\Delta_2$-indivisible. Let $r_1$ and $r_2$ respectively be the earliest and latest times in $L(S_i) \cup L(S_j) \cup L(\mathcal{K})$ at which either $xc$ or $cz$ appear. Note that as $xc$ is $\Delta_1$-dense in $L(S_i) \cup L(\mathcal{K})$ we must have $r_1\leq \min\{s_i,s'\}+\Delta_1-1$. Applying the same reasoning to $cz$ we have $r_1\leq \min\{s_j,s'\}+\Delta_1-1$, thus $r_1\leq \min\{s_i,s_j,s'\}+\Delta_1-1$. By a symmetric argument we also see that $r_2\geq \max\{t_i,t_j,t'\}-\Delta_1+1$.

Let us fix an arbitrary pair of vertices $u,v \in V(S_i) \cup V(S_j)$.  We want to show that $uv$ is $\Delta_1$-dense in an interval containing $[r_1,r_2]$.  Suppose first that $u \in V(S_i)$ and $v \in V(S_j)$.  Since $S_i$ and $S_j$ are $\Delta_1$-temporal cliques, we know that the edge $ux$ exists in the underlying graph of $S_i$ and is $\Delta_1$-dense in $L(S_i)$; similarly $zv$ is present in the underlying graph of $S_j$ and is $\Delta_1$-dense in $L(S_j)$.  Let $A_{ux}$ be the set of appearances of $ux$ in $L(S_i)$ and $A_{zv}$ the set of appearances of $zv$ in $L(S_j)$.  Note that, because both the $\Delta_2$-indivisible sets $A_{ux}$ and $A_{xc}$ are $\Delta_1$-dense in $L(S_i)$, then $A_{ux} \cup A_{xc}$ is $\Delta_2$-indivisible; similarly $A_{zv} \cup A_{cz}$ is $\Delta_2$-indivisible too. Recall that $ A_{xc} \cup A_{cz}$ is also $\Delta_2$-indivisible.  It follows that $A_{ux} \cup A_{xc} \cup A_{cz} \cup A_{zv}$ is an $\Delta_2$-indivisible set by two applications of Lemma \ref{lem:IndivisibleIntersection}. Let us consider the temporal graph $\mathcal G[N]$ induced by the set $N=\{u,x,c,z,v\}$.  By the five vertex condition, we know that this must form a $(\Delta_1,\Delta_2)$-cluster temporal graph and hence, since the set $A_{ux} \cup A_{xc} \cup A_{cz} \cup A_{zv}$ is $\Delta_2$-indivisible, the restriction of $\mathcal{G}[N]$ to the interval $L(S_i) \cup L(S_j) \cup L(\mathcal{K})$ must form a single $\Delta_1$-temporal clique by Lemma \ref{lem:indivisibility}.  The lifetime $L_{u,v}$ of this $\Delta_1$-temporal clique contains the interval $[r_1,r_2]$, and every edge appears at least once within $\Delta_1$ timesteps of $r_1$ and once within $\Delta_1$ timesteps of $r_2$. 
For the remaining cases where $u,v \in V(S_i)$ or $u,v \in V(S_j)$ we can use a  similar argument. However if $u,v \in V(S_i)$ it is enough to consider indivisibility of the set $A_{xc} \cup A_{ux}\cup A_{xv}$, and if $u,v \in V(S_i)$ one considers $A_{cz} \cup A_{uz}\cup A_{zv}$, so the arguments are simpler. 

Up to this point we have shown that for any $u,v\in V(S_i)\cup V(S_j)$ the edge $uv$ is $\Delta_1$-dense in an interval $L_{u,v}$ containing $[r_1,r_2]$. We now argue that there is an interval $[h_1,h_2]$ such that, for every pair $u,v \in V(S_i) \cup V(S_j)$, the edge $uv$ is $\Delta_1$-dense in $[h_1,h_2]$.  If every such edge has at least one appearance in the interval $[r_1,r_2]$, then we can set $[h_1,h_2] = [r_1,r_2]$ and we are done. In particular, we note that this condition is certainly respected if $|r_2-r_1| \geq \Delta_1-1$. Otherwise, it could happen that there exists another pair of vertices $u',v' \in V(S_i) \cup V(S_j)$ such that $|L_{u,v} \cap L_{u',v'}|< \Delta_1-1$, in which case there might be no interval in which both the edges $uv$ and $u'v'$ are $\Delta_1$-dense.

Let us assume for a contradiction that there is no interval containing $[r_1,r_2]$ in which every edge with both endpoints in $V(S_i) \cup V(S_j)$ is $\Delta_1$-dense. Recall that this can only happen if $r_2 < r_1 + \Delta_1 - 1$. Moreover we can assume there does not exist an interval $[r_1',r_2'] \supseteq [r_1,r_2]$ with $r_2' \le r_1' + \Delta_1 - 1$ such that every edge $u'v'$, where $u',v' \in V(S_i)\cup V(S_j)$, has at least one appearance in $[r_1',r_2']$.  We now fix two edges $uv$ and $u'v'$, with $u,v,u',v' \in V(S_i) \cup V(S_j)$, such that there is no interval containing $[r_1,r_2]$ in which both $uv$ and $u'v'$ are $\Delta_1$-dense. Recall from earlier reasoning that any edge $e$ with both endpoints in $ V(S_i)\cup V(S_j)$ is $\Delta_1$-dense in the interval $L_{e}$ which contains $[r_1,r_2]$ and moreover, that  there are appearances of $e$ within $L_{e}$ which are also within $\Delta_1$ time steps of $r_1$ and $r_2$ respectively. We claim that each such $e$ appears at least once in the interval $I=[r_2 - \Delta_1+1, r_1 + \Delta_1-1]$. Observe that this interval contains at least $\Delta_1 +1$ and at most $2\Delta_1-1$ time steps by the condition $r_2<r_1 +\Delta_1 -1 $. Thus the interval must contain at least one time appearance of $e$ since there are appearances of $e$ in $L_{e}$ at or before time $r_1 + \Delta_1 - 1$ and at or after time $r_2 - \Delta_1 + 1$. For some edge $e$ let $I_e$ be the set of appearances of $e$ in $I =[r_2 - \Delta_1+1, r_1 + \Delta_1-1]$. To find a contradiction we prove that $uv$ and $u'v'$ belong to a $\Delta_1$-temporal clique whose lifetime contains $[r_1,r_2]$.

  We will begin by showing that $I_{uv}\cup I_{uc}\cup I_{vc} $ and  $I_{u'v'}\cup I_{u'c}\cup I_{v'c}$ are both $\Delta_2$-indivisible. Recall that  $u,v,c$ belong to a $\Delta_1$-temporal clique whose lifetime contains the interval $[r_1,r_2]$, and that each edge in this clique must appear at least once in the time interval $I$. Recall also that $cx$ appears at some time  $r_i\in [r_1,r_2]$ and thus $cx$ is $\Delta_1$-dense in $I$. Thus, no appearance of $uc$ and $vc$ in $I$ can be $\Delta_2$-independent of $I_{cx}$ and so $I_{uc}\cup I_{vc} \cup I_{cx} $ is $\Delta_2$-indivisible. Hence, the set $I_{uc}\cup I_{vc} \cup I_{cx}$ is contained in a single $\Delta_1$-temporal clique $\mathcal{C}$ of the $(\Delta_1,\Delta_2)$-cluster temporal graph $\mathcal{G}[\{u,v,c,x\}]$  by Lemma \ref{lem:indivisibility} together with the five vertex condition. Any appearance of $uv$ in $I$ must also be contained in $\mathcal{C}$, since it cannot be $\Delta_2$-independent of $\mathcal{C}$, and $\mathcal{G}[\{u,v,c,x\}]$ is a $(\Delta_1,\Delta_2)$-cluster temporal graph. It follows that $I_{uv}\cup I_{uc}\cup I_{vc} $ is $\Delta_2$-indivisible since all these edges are contained within the same clique $\mathcal{C}$ in $\mathcal{G}$. A symmetric argument shows that $I_{u'v'}\cup I_{u'c}\cup I_{v'c}$ is $\Delta_2$-indivisible.

 We now show  and $I_{cu}\cup I_{cu'}$ is $\Delta_2$-indivisible.  As before, $u,u',c$ belong to a $\Delta_1$-temporal clique whose lifetime contains the interval $[r_1,r_2]$, and each edge in this clique appears at least once in the time interval $I$. As $cx$ appears at some time  $r_i\in [r_1,r_2]$, no appearance of $uc$ and $u'c$ in $I$ can be $\Delta_2$-independent of $I_{cx}$, thus $I_{uc}\cup I_{u'c} \cup I_{cx} $ is $\Delta_2$-indivisible. Hence $I_{uc}\cup I_{u'c} \cup I_{cx} $ is contained within a single  $\Delta_1$-temporal clique in $\mathcal{G}[\{c,u,u'\}]$ by Lemma \ref{lem:indivisibility} and the five vertex condition. It follows that $I_{cu}\cup I_{cu'}$ is $\Delta_2$-indivisible.

We have shown that $I_{uv}\cup I_{uc}\cup I_{vc} $, $I_{u'v'}\cup I_{u'c}\cup I_{v'c}$ and $I_{cu}\cup I_{cu'}$  are $\Delta_2$-indivisible, so two applications of Lemma \ref{lem:IndivisibleIntersection} give us that $M= I_{uv}\cup I_{uc}\cup I_{vc}\cup I_{u'v'}\cup I_{u'c}\cup I_{v'c}$ is $\Delta_2$-indivisible. It then follows from Lemma \ref{lem:indivisibility} that all edges of $M$ are contained in the same $\Delta_1$-temporal clique in $\mathcal{G}[\{u,v,u',v',c\}]$. Since this clique contains the non-empty sets $I_{uv}$ and $I_{u'v'}$ it follows that there exists an interval contained in $I$ in which both $uv$ and $u'v'$ are $\Delta_1$-dense. Since both endpoints of $I$ are at distance at most $\Delta_1$ from $r_1$ and $r_2$ there must be an interval containing $[r_1,r_2]$ in which both $uv$ and $u'v'$ are $\Delta_1$-dense, a contradiction. 

We can therefore conclude that there is some interval \[[h_1,h_2] \supseteq [r_1,r_2] \supseteq [\min\{s_i,s_j,s'\} + \Delta_1-1,\max\{t_i,t_j,t'\} - \Delta_1+1  ],\] such that every edge with both endpoints in $V(S_i) \cup V(S_j)$ is $\Delta_1$-dense in $[h_1,h_2]$, as required.\end{poc}

\begin{clm}\label{clm:cliqueextend}There is a $\Delta_1$-temporal clique in $\mathcal{G}$ generating 
$(V(S_i) \cup V(S_j), [\min\{s_i,s_j,h_1\},\max\{t_i,t_j,h_2\}])$.
\end{clm}

\begin{poc}[\ref{clm:cliqueextend}] Let $u,v\in V(S_i)\cup V(S_j)$; it suffices to demonstrate that $uv$ is $\Delta_1$-dense in the interval $[\min\{s_i,s_j\},\max\{t_i,t_j\}]$, as each edge $uv$ is $\Delta_1$-dense in $[h_1,h_2]$ by Claim \ref{cl:hclique}. Indeed,  by Claim \ref{cl:hclique}, if $h_1\leq \min\{s_i,s_j\} $ and $h_2\geq \max\{t_i,t_j\} $ then the result holds. We thus assume that at least one of these conditions does not hold and begin by considering the lower end of the interval $[\min\{s_i,s_j,h_1\},\max\{t_i,t_j,h_2\}]$.

	Assume that $h_1> \min\{s_i,s_j\} $, then by the definition of $s_i,s_j$ there exists at least one edge $ab$ with $a,b \in V(S_i) \cup V(S_j)$ that appears at time $\min\{s_i,s_j\}$. We can assume that without loss of generality that $s_i\leq s_j$, thus $s_i= \min\{s_i,s_j\}$, and also that $(ab, s_i) \in L(S_i) $. 	
	
 For the first step we recall that the edge $cx$ is $\Delta_1$-dense in both $L(S_i)$ and $[h_1,h_2]$ and, by Claim \ref{cl:hclique},  $(cx,r_i)  \in S_i$ and $r_i\in [h_1,h_2]$. Thus $cx$ is $\Delta_1$-dense in $L(S_i) \cup [h_1,h_2]$.

For the second step, recall that $ab$ is $\Delta_1$-dense in $L(S_i)$ and $[h_1,h_2]$, and appears at time $s_i$. Observe that $ax,bx,ac$ and $bc$ must all have time appearances in $[h_1,h_2]$ by Claim \ref{cl:hclique}. Thus the set of time-edges in $L(S_i)\cup [h_1,h_2]$ with endpoints in $\{a,b,c,x\}$ forms a $\Delta_2$-indivisible set and hence, by Lemma \ref{lem:indivisibility} and the five vertex condition, all these edges must be part of the same $\Delta_1$-temporal clique in $\mathcal{G}[\{a,b,c,x\}]$. It follows that $ab$ is $\Delta_1$-dense in $L(S_i) \cup [h_1,h_2]=[s_i, h_2]$. 

For the third step we consider the set $A=\{a,b,u,v\}$. By Claim \ref{cl:hclique} the set of time-edges with both endpoints in $A$ which occur in $[h_1,h_2]$ forms a $\Delta_1$-temporal clique, and thus is  $\Delta_2$-indivisible. Since $ab$ is $\Delta_1$-dense in $[s_i, h_2]$, it follows that the set of time-edges appearing in $[s_i, h_2] $ with both endpoints in $A$ is also $\Delta_2$-indivisible. By the five vertex condition and Lemma \ref{lem:indivisibility}, all these edges must belong to the same clique in $\mathcal{G}[A]$. Thus, since $(ab,s_i)\in S_i $ and $uw$ appears in $[h_1,h_2]$, it follows that $uw$ is $\Delta_1$-dense in $[s_i, h_2]$.

	Similarly, assuming $h_2<\max\{t_i,t_j\} $, there exists at least one edge $a'b'$ with $a',b' \in V(S_i) \cup V(S_j)$ that appears at time $\max\{t_i,t_j\}$. We now follow the same steps as before. For the first step, if $t_i\geq t_j$, then we take $cx$ as this is $\Delta_1$-dense in $L(S_i) \cup [h_1,h_2]$ and appears in $L(S_i)$. Otherwise, if $t_i<t_j$, we take $cz$ which is $\Delta_1$-dense in $L(S_j) \cup [h_1,h_2]$ and appears in $L(S_j)$. Once we have completed the first step to find some edge $fg$ (which is either $cx$ or $cz$ depending on whether $s_i$ or $s_j$ is larger) that is dense in a suitable interval, we complete the second step by using  $\mathcal{G}[\{a',b',f,g\}]$ to show that $a'b'$ is $\Delta_1$-dense in $[ h_1, \max\{t_i,t_j \}]$. We can then apply the third step to the set $A'=\{a',b',u,v\}$ to show that $uv$ is $\Delta_1$-dense in $[h_1, \max\{t_i,t_j\}]$. 
	
Finally, since $uv$ appears in $[h_1,h_2]$ it follows that $uv$ is $\Delta_1$-dense in  $ [\min\{s_i,s_j,h_1\},\max\{t_i,t_j,h_2\}]$.  \end{poc}

Observe that Claim \ref{clm:cliqueextend} contradicts the initial assumption that $S_i$ and $S_j$ were maximal in $S$. Hence the assumption that $m\geq 2$ must be incorrect and thus $S$ consists of a single $\Delta_1$-temporal clique. Because $S$ was a generic set of the partition $\mathcal P_{\mathcal G}$ of the given temporal graph $\mathcal G$ into $\Delta_2$-saturated subsets, then $\mathcal{G}$ must be a $(\Delta_1,\Delta_2)$-cluster temporal graph by Lemma \ref{sat_cliques}, so the statement of the lemma holds.   \end{proof}

It remains to prove that, in the special case where $\Delta_1=1$, three vertices are enough to detect whether $\mathcal{G}$ is a $(1,\Delta_2)$-cluster temporal graph. The proof will follow the same high level idea as the proof of Lemma \ref{Thm:CharhardDirection}, however it will be significantly simpler.  

\begin{lemma}\label{lem:charDel1}
Let $\mathcal{G}$ be a temporal graph and suppose that, for every set $S$ of at most three vertices in $\mathcal{G}$, the induced temporal subgraph $\mathcal{G}[S]$ is a $(1,\Delta_2)$-cluster temporal graph.  Then $\mathcal{G}$ is a $(1,\Delta_2)$-cluster temporal graph.
\end{lemma}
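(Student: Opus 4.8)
The plan is to mirror the skeleton of the proof of Lemma~\ref{Thm:CharhardDirection}, but to exploit the fact that $\Delta_1=1$ to strip away all of the delicate boundary arithmetic. First I would invoke Lemma~\ref{saturated} to write $\mathcal E(\mathcal G)$ as its unique partition $\mathcal P_{\mathcal G}$ into $\Delta_2$-saturated sets; by Lemma~\ref{sat_cliques} it suffices to show that an arbitrary $S\in\mathcal P_{\mathcal G}$ is a single $1$-temporal clique. As in the main proof, I would let $\varkappa_S=\{S_1,\dots,S_m\}$ be the maximal $1$-temporal cliques contained in $S$, assume for contradiction that $m\ge 2$, and (since $S$ is $\Delta_2$-indivisible) pick two of them, $S_i,S_j$, that are not $\Delta_2$-independent. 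Writing $L(S_i)=[s_i,t_i]$, $L(S_j)=[s_j,t_j]$, these must share a vertex $c$ and have $\min_{p\in[s_i,t_i],\,q\in[s_j,t_j]}|p-q|<\Delta_2$. The goal is to produce a $1$-temporal clique on $V(S_i)\cup V(S_j)$ with lifetime $I:=[\min\{s_i,s_j\},\max\{t_i,t_j\}]$; since this clique lies in $S$ by Lemma~\ref{sat_set} and strictly contains both $S_i$ and $S_j$, it contradicts their maximality, forcing $m=1$.

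The key simplification is that, for $\Delta_1=1$, being $1$-dense in $[a,b]$ means appearing at \emph{every} timestep of $[a,b]$, so a $1$-temporal clique is simply a static clique present at every snapshot of its interval. This makes the ``growing'' step (the analogue of Lemma~\ref{lem:mallet}) almost mechanical, via one repeated mechanism: whenever two $1$-dense edge-sets share a vertex and have appearances less than $\Delta_2$ apart, a single two-element ``bridge'' of time-edges together with Lemma~\ref{lem:IndivisibleIntersection} shows their union is $\Delta_2$-indivisible; then, because the relevant induced subgraph on at most three vertices is a $(1,\Delta_2)$-cluster temporal graph by hypothesis, Lemma~\ref{lem:indivisibility} forces this union into one $1$-temporal clique. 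Crucially, since $1$-density demands an appearance at \emph{every} time in the spanned interval, this conclusion literally supplies the missing time-edges across any gap between the two sets.

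Concretely, I would use $c$ as a universal anchor and bootstrap in two stages. To seed the argument, choose $x_0\in V(S_i)\setminus\{c\}$ and $z_0\in V(S_j)\setminus\{c\}$; the appearances of $cx_0$ on $[s_i,t_i]$ and of $cz_0$ on $[s_j,t_j]$ are bridged by two time-edges at distance $<\Delta_2$, so applying the above mechanism inside $\mathcal G[\{c,x_0,z_0\}]$ (or inside $\mathcal G[\{c,x_0\}]$ if $x_0=z_0$, which is why the statement must include vertex-sets of size $\le 2$) shows $cx_0$ and $cz_0$ are $1$-dense on $I$. Stage A then extends every edge at $c$: for each $u\in(V(S_i)\cup V(S_j))\setminus\{c\}$, the edge $cu$ is $1$-dense on $[s_i,t_i]$ or $[s_j,t_j]$, and pairing it with the anchor $cx_0$ or $cz_0$ (which is already dense on $I$) inside $\mathcal G[\{c,x_0,u\}]$ or $\mathcal G[\{c,z_0,u\}]$ forces $cu$ to be $1$-dense on $I$. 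Stage B then handles all remaining pairs: for $u,v\neq c$, the two edges $cu,cv$ are now dense on $I$ and share $c$, so $\mathcal G[\{c,u,v\}]$ forces $uv$ dense on $I$. Hence every pair in $V(S_i)\cup V(S_j)$ is $1$-dense on $I$, giving the desired larger clique and the contradiction; by Lemma~\ref{sat_cliques}, $\mathcal G$ is a $(1,\Delta_2)$-cluster temporal graph. (The ``only if'' direction is immediate from Lemma~\ref{easychar}.)

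I expect the main obstacle to be conceptual rather than computational: because $1$-density is unforgiving, the proof cannot merely reason about intervals overlapping — it must actually certify that the edge appearances exist at every intermediate timestep. The resolution is precisely that this certification is what the small-subgraph hypothesis provides through Lemma~\ref{lem:indivisibility}: an indivisible set inside a cluster temporal graph is confined to one clique, and a $1$-temporal clique has all its edges present throughout its lifetime. The only genuinely fiddly points are (i) verifying that the ``bridge'' is $\Delta_2$-indivisible and meets both dense sets so Lemma~\ref{lem:IndivisibleIntersection} applies, and (ii) correctly dispatching degenerate cases such as $V(S_i)=V(S_j)=\{c,x_0\}$, where the two-vertex instance of the hypothesis is what closes the gap. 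Everything else is strictly lighter than the corresponding steps of Lemma~\ref{Thm:CharhardDirection}, since no $\Delta_1-1$ slack terms or interval-intersection arguments (Claims~\ref{cl:hclique} and~\ref{clm:cliqueextend}) are needed when $\Delta_1=1$.
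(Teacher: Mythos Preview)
Your proposal is correct and follows essentially the same route as the paper's proof: reduce to a saturated set $S$, take two non-$\Delta_2$-independent maximal $1$-temporal cliques $S_i,S_j$ sharing a vertex $c$, show every edge at $c$ is $1$-dense on $I=[\min\{s_i,s_j\},\max\{t_i,t_j\}]$, and then push this to all pairs using the three-vertex hypothesis and Lemma~\ref{lem:indivisibility}. The only cosmetic difference is that you fix anchors $x_0,z_0$ and bootstrap from them, whereas the paper argues directly that for any $x\in V(S_i)$, $y\in V(S_j)$ the set $A_{cx}\cup A_{cy}$ is $\Delta_2$-indivisible, which yields $1$-density of every $c$-edge on $I$ in a single stroke without a separate seed step; your Stage~A is therefore slightly redundant but not wrong.
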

\begin{proof}
We shall follow the same structure as the proof of Lemma \ref{Thm:CharhardDirection}. In particular, we shall assume that within the unique decomposition of $\mathcal{E}(\mathcal{G})$ into $\Delta_2$-saturated subsets, there exists a saturated set $S$ that is not a $\Delta_1$-temporal clique. We then have a decomposition of $S$ into maximal $\Delta_1$-temporal cliques $\{S_1, \dots, S_m\}$. It suffices to demonstrate that, if there exist two $1$-temporal cliques $S_i$ and $S_j$ in $\mathcal{G}$ that are not $\Delta_2$-independent, then there must be some clique $S'$ with $S_i, S_j \subseteq S'$, since this contradicts the maximality of $S_i$ and $S_j$. As in the proof of Lemma \ref{Thm:CharhardDirection}, we suppose that $L(S_i) = [s_i,t_i]$ and $L(S_j) = [s_j,t_j]$, and note that, by definition of $1$-density, every edge in $S_i$ (respectively $S_j$) appears at every time in $[s_i,t_i]$ (respectively $[s_j,t_j]$).

Since $S_i$ and $S_j$ are not $\Delta_2$-independent, they must share a common vertex $c$.  We first argue that every edge of the form $cz$, where  $z\in (V(S_i) \cup V(S_j))\backslash \{c\}$, must be $1$-dense in $L := [\min\{s_i,s_j\},\max\{s_j,t_j\}]$.  For any edge $e$ let $A_{e}$ be the sets of appearances of $e$ in $L$. We now claim that for any vertices $x \in V(S_i)$ and $y \in V(S_j)$, the set $A_{cx} \cup A_{cy}$ is $\Delta_2$-indivisible. This follows since $cx$ and $cy$ appear at all times in $[s_i,t_i]$ and $[s_j,t_j]$ respectively, and there exists $t\in [s_i,t_i]$ and $t'\in [s_j,t_j]$ such that $|t-t'|<\Delta_2$. By hypothesis $\mathcal{G}[\{c,x,y\}]$ is a $(1,\Delta_2)$-cluster temporal graph and thus, by Lemma \ref{lem:indivisibility}, the set $A_{cx} \cup A_{cy}$ of time-edges must contained within a single $1$-temporal clique, with lifetime which contains the interval
\begin{align*}\left[\min\{t: (cx,t) \in A_{cx} \text{ or } (cy,t) \in A_{cy}\},\, \max\{t: (cx,t) \in A_{cx} \text{ or } (cy,t) \in A_{cy}\}\right] &= [\min\{s_i,s_j\},\,\max\{s_j,t_j\}] \\&= L. \end{align*}
Since $x\in V(S_i)$ and $y\in V(S_j)$ were arbitrary, the edge $cz$ is $1$-dense in $L$ for any $z\in V(S_i)\cup V(S_j)$.

Now fix two arbitrary vertices $u,v \in V(S_i) \cup V(S_j) \setminus \{c\}$; we wish to demonstrate that $uv$ is also $1$-dense in $L$, which will complete the proof.  Recall that we have just shown that $uc$ and $vc$ are $1$-dense in $L$. It follows immediately that $A_{uv} \cup A_{vc}$ is $\Delta_2$-indivisible. Thus, by the three-vertex condition, there must be a single $1$-temporal clique containing $u$, $v$ and $c$, whose lifetime contains the non-empty interval $L$.  

It follows that every edge $uv$, where $u,v\in V(S_i)\cup V(S_j)$ is $1$-dense in an interval containing $L$ and, since $uv$ therefore appears at every time during this interval. Thus there exists a $\Delta_1$-temporal clique $S'$ in $\mathcal{G}$ generating the template $(V(S_i)\cup V(S_j), L) $. This contradicts the assumption that $S_i$ and $S_j$ were maximal. 
\end{proof}

\subsection{A Search-Tree Algorithm}\label{sec:charalg}

Using the characterisation from the previous section, we are now able to prove the following result.

\begin{theorem} \ETC can be solved in time $(10 T)^k \cdot T^3 |V|^5$.
\end{theorem}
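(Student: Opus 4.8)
The plan is to build a bounded search tree, mirroring the classical induced-$P_3$ branching algorithm for static \textsc{Cluster Editing} but using the five-vertex characterisation (Theorem~\ref{Characterization}) in place of the forbidden-$P_3$ condition. The algorithm carries a current temporal graph together with a remaining budget. At each node it looks for a \emph{bad set}: a vertex set $S$ with $|S|\le 5$ for which $\mathcal{G}[S]$ is not a $(\Delta_1,\Delta_2)$-cluster temporal graph. If no bad set exists then, by Theorem~\ref{Characterization}, the current graph is already a $(\Delta_1,\Delta_2)$-cluster temporal graph and we accept (provided the budget has not been exhausted); otherwise we branch on the ways of repairing $S$.

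The first thing I would establish is that this branching is both complete and of bounded width. By Lemma~\ref{prop:within-lifetime} it suffices to search for a solution all of whose modifications occur at times in $[1,T]$. Now if $S$ is a bad set and $\Pi$ is any such modification set turning $\mathcal{G}$ into a cluster temporal graph, then $\Pi$ must contain at least one time-edge $(e,t)$ with both endpoints of $e$ lying in $S$: otherwise $\Pi$ would leave $\mathcal{G}[S]$ untouched, so the induced subgraph on $S$ of the modified graph would equal $\mathcal{G}[S]$, contradicting Lemma~\ref{easychar} (induced subgraphs of cluster temporal graphs are themselves cluster temporal graphs). There are at most $\binom{5}{2}\cdot T = 10T$ candidate slots $(e,t)$ with $e\in\binom{S}{2}$ and $t\in[1,T]$, each carrying a unique modification (add it if absent, delete it if present). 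Branching over all $\le 10T$ of these single modifications, decrementing the budget and recursing, is therefore exhaustive: every size-$\le k$ solution contains one such candidate, and deleting it from the solution leaves a size-$\le k-1$ solution for the once-modified graph. A routine induction on the budget yields correctness, and since the tree has depth $\le k$ and branching factor $\le 10T$, it has at most $(10T)^k$ nodes.

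Next I would bound the per-node cost, which is dominated by locating a bad set. By Lemma~\ref{easychar} a bad set on fewer than five vertices is contained in a bad set on five vertices, so it suffices to examine the $O(|V|^5)$ vertex subsets of size at most five. For each such $S$, the induced temporal graph $\mathcal{G}[S]$ has at most five vertices and at most $10T$ time-edges, so by Lemma~\ref{prop:verify-poly} we can decide whether it is a cluster temporal graph in time $O\!\left((10T)^3\cdot 5\right)=O(T^3)$. Hence each node costs $O(|V|^5 T^3)$, and multiplying by the $(10T)^k$ nodes gives the claimed bound $(10T)^k\cdot T^3 |V|^5$.

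The conceptual heavy lifting has already been done in Theorem~\ref{Characterization}; the only delicate points remaining are (i) invoking Lemma~\ref{prop:within-lifetime} to confine all candidate modifications to $[1,T]$, without which the branching factor would be unbounded, and (ii) the observation that a single bad five-set certifies that \emph{every} solution must edit a pair inside it, which is exactly what makes the branching simultaneously sound and complete. Everything else is routine running-time accounting.
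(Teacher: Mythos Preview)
Your proposal is correct and follows essentially the same approach as the paper's own proof: a bounded-depth search tree that repeatedly locates a bad five-vertex set via Theorem~\ref{Characterization} and Lemma~\ref{prop:verify-poly}, branches on the at most $10T$ possible single-edit repairs inside it (using Lemma~\ref{prop:within-lifetime} to cap the time coordinate), and recurses with decremented budget. If anything, you are slightly more explicit than the paper in justifying completeness of the branching, namely by invoking Lemma~\ref{easychar} to argue that any solution must touch a pair inside the bad set.
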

\begin{proof}
We describe a simple search tree strategy; the depth of the search tree is bounded by $k$, and the degree is bounded by $10 T$.  Since the approach is very standard, we only give an outline of the proof here.

By Theorem \ref{Characterization}, we know that a temporal graph is a $(\Delta_1,\Delta_2)$-cluster temporal graph if and only if every subset of five vertices induces a $(\Delta_1,\Delta_2)$-cluster temporal graph.  By Lemma \ref{prop:verify-poly}, given a single set of five vertices, we can verify in time $\mathcal{O}(T^3)$ whether it induces a cluster temporal graph, and the number of possible five vertex subsets is $\mathcal{O}(|V|^5)$.  Thus, in time $\mathcal{O}(T^3|V|^5)$, we can either determine that $\mathcal{G}$ is already a $(\Delta_1,\Delta_2)$-cluster temporal graph, or else identify a set $W$ of five vertices such that $\mathcal{G}[W]$ is not a $(\Delta_1,\Delta_2)$-cluster temporal graph.  In order to modify $\mathcal{G}$ to be a $(\Delta_1,\Delta_2)$-cluster temporal graph, at least one time-edge with both endpoints in $W$ must be added or deleted; as there are at most {$\binom{5}{2}=10$} possibilities for the choice of static edge and at most $T$ possibilities for the time at which the modified edge occurs (by Lemma \ref{prop:within-lifetime}), the number of possible modifications to consider is at most $10 T$.  This bounds the degree of the search tree; since the total number of modifications cannot be more than $k$, the depth is bounded by $k$.  It follows that the total number of nodes in the search tree is at most $(10 T)^k$.
  \end{proof}

\section{Conclusions and Open Problems}
In this paper we introduced a new temporal variant of the cluster editing problem, \ETC, based on a natural interpretation of what it means for a temporal graph to be divisible into `clusters'. We showed hardness of this problem even in the presence of strong restrictions on the input, but identified two special cases in which polynomial-time algorithms exist: firstly, if {the} underlying graph is a path and the number of appearances of each edge is bounded by a constant, and secondly if we are only allowed to add (but not delete) time-edges.  One natural open question arising from the first of these positive results is whether bounding the number of appearances of each edge can lead to tractability in a wider range of settings: we conjecture that the techniques used here can be generalised to obtain a polynomial-time algorithm when the underlying graph has bounded pathwidth, and it may be that they can be extended even further.   

Our main technical contribution was Theorem \ref{Characterization}, which gives a characterisation of $(\Delta_1,\Delta_2)$-cluster temporal graphs in terms of five vertex subsets.  In addition to providing substantial insight into the structure of $(\Delta_1,\Delta_2)$-cluster temporal graphs, Theorem~\ref{Characterization} also gives rise to a simple search tree algorithm, which is an \FPT algorithm parameterised simultaneously by the number $k$ of permitted modifications and the lifetime of the input temporal graph.  An interesting direction for further research would be to investigate whether this result can be strengthened: does there exist a polynomial {problem} kernel with respect to this {combined} parameterisation, and is \ETC in \FPT parameterised by the number of permitted modifications alone?

\section*{Acknowledgements}
Meeks and Sylvester gratefully acknowledge funding from the Engineering and Physical Sciences Research Council (ESPRC) grant number EP/T004878/1 for this work, while Meeks was also supported by a Royal Society of Edinburgh Personal Research Fellowship (funded by the Scottish Government).

\bibliography{ref}

\begin{thebibliography}{10}

\bibitem{large_clusters}
Faisal~N. Abu{-}Khzam.
\newblock The multi-parameterized cluster editing problem.
\newblock In Peter Widmayer, Yinfeng Xu, and Binhai Zhu, editors, {\em
  Combinatorial Optimization and Applications - 7th International Conference,
  {COCOA} 2013}, volume 8287 of {\em Lecture Notes in Computer Science}, pages
  284--294. Springer, 2013.

\bibitem{Abu-Khzam17}
Faisal~N. Abu{-}Khzam.
\newblock On the complexity of multi-parameterized cluster editing.
\newblock {\em J. Discrete Algorithms}, 45:26--34, 2017.

\bibitem{AilonCN08}
Nir Ailon, Moses Charikar, and Alantha Newman.
\newblock Aggregating inconsistent information: Ranking and clustering.
\newblock {\em J. {ACM}}, 55(5):23:1--23:27, 2008.

\bibitem{AkridaCGKS19}
Eleni~C. Akrida, Jurek Czyzowicz, Leszek Gasieniec, Lukasz Kuszner, and Paul~G.
  Spirakis.
\newblock Temporal flows in temporal networks.
\newblock {\em J. Comput. Syst. Sci.}, 103:46--60, 2019.

\bibitem{AkridaMSR21}
Eleni~C. Akrida, George~B. Mertzios, Paul~G. Spirakis, and Christoforos~L.
  Raptopoulos.
\newblock The temporal explorer who returns to the base.
\newblock {\em J. Comput. Syst. Sci.}, 120:179--193, 2021.

\bibitem{AkridaMSZ20}
Eleni~C. Akrida, George~B. Mertzios, Paul~G. Spirakis, and Viktor Zamaraev.
\newblock Temporal vertex cover with a sliding time window.
\newblock {\em J. Comput. Syst. Sci.}, 107:108--123, 2020.

\bibitem{CE_NP2}
Nikhil Bansal, Avrim Blum, and Shuchi Chawla.
\newblock Correlation clustering.
\newblock {\em Mach. Learn.}, 56(1-3):89--113, 2004.

\bibitem{Ben-DorSY99}
Amir Ben{-}Dor, Ron Shamir, and Zohar Yakhini.
\newblock Clustering gene expression patterns.
\newblock {\em J. Comput. Biol.}, 6(3/4):281--297, 1999.

\bibitem{BlasiusFGHHSWW22}
Thomas Bl{\"{a}}sius, Philipp Fischbeck, Lars Gottesb{\"{u}}ren, Michael
  Hamann, Tobias Heuer, Jonas Spinner, Christopher Weyand, and Marcus Wilhelm.
\newblock A branch-and-bound algorithm for cluster editing.
\newblock In Christian Schulz and Bora U{\c{c}}ar, editors, {\em 20th
  International Symposium on Experimental Algorithms, {SEA} 2022}, volume 233
  of {\em LIPIcs}, pages 13:1--13:19. Schloss Dagstuhl - Leibniz-Zentrum
  f{\"{u}}r Informatik, 2022.

\bibitem{BocciCMS22}
Cristiano Bocci, Chiara Capresi, Kitty Meeks, and John Sylvester.
\newblock A new temporal interpretation of cluster editing.
\newblock In Cristina Bazgan and Henning Fernau, editors, {\em Combinatorial
  Algorithms - 33rd International Workshop, {IWOCA} 2022}, volume 13270 of {\em
  Lecture Notes in Computer Science}, pages 214--227. Springer, 2022.

\bibitem{CE_survey}
Sebastian B{\"{o}}cker and Jan Baumbach.
\newblock Cluster editing.
\newblock In Paola Bonizzoni, Vasco Brattka, and Benedikt L{\"{o}}we, editors,
  {\em The Nature of Computation. Logic, Algorithms, Applications - 9th
  Conference on Computability in Europe, CiE 2013}, volume 7921 of {\em Lecture
  Notes in Computer Science}, pages 33--44. Springer, 2013.

\bibitem{golden_ratio}
Sebastian Böcker.
\newblock A golden ratio parameterized algorithm for cluster editing.
\newblock {\em Journal of Discrete Algorithms}, 16:79--89, 2012.
\newblock Selected papers from the 22nd International Workshop on Combinatorial
  Algorithms (IWOCA 2011).

\bibitem{cao_kernel}
Yixin Cao and Jianer Chen.
\newblock Cluster editing: Kernelization based on edge cuts.
\newblock {\em Algorithmica}, 64(1):152--169, 2012.

\bibitem{CasteigtsPS19}
Arnaud Casteigts, Joseph~G. Peters, and Jason Schoeters.
\newblock Temporal cliques admit sparse spanners.
\newblock In Christel Baier, Ioannis Chatzigiannakis, Paola Flocchini, and
  Stefano Leonardi, editors, {\em 46th International Colloquium on Automata,
  Languages, and Programming, {ICALP} 2019}, volume 132 of {\em LIPIcs}, pages
  134:1--134:14. Schloss Dagstuhl - Leibniz-Zentrum f{\"{u}}r Informatik, 2019.

\bibitem{CasteigtsPS21}
Arnaud Casteigts, Joseph~G. Peters, and Jason Schoeters.
\newblock Temporal cliques admit sparse spanners.
\newblock {\em J. Comput. Syst. Sci.}, 121:1--17, 2021.

\bibitem{ChawlaMSY15}
Shuchi Chawla, Konstantin Makarychev, Tselil Schramm, and Grigory Yaroslavtsev.
\newblock Near optimal {LP} rounding algorithm for correlationclustering on
  complete and complete k-partite graphs.
\newblock In Rocco~A. Servedio and Ronitt Rubinfeld, editors, {\em Proceedings
  of the Forty-Seventh Annual {ACM} on Symposium on Theory of Computing, {STOC}
  2015}, pages 219--228. {ACM}, 2015.

\bibitem{ChenM12}
Jianer Chen and Jie Meng.
\newblock A 2k kernel for the cluster editing problem.
\newblock {\em J. Comput. Syst. Sci.}, 78(1):211--220, 2012.

\bibitem{multistage}
Jiehua Chen, Hendrik Molter, Manuel Sorge, and Ond{\v{r}}ej Such{\'{y}}.
\newblock Cluster editing in multi-layer and temporal graphs.
\newblock In Wen{-}Lian Hsu, Der{-}Tsai Lee, and Chung{-}Shou Liao, editors,
  {\em 29th International Symposium on Algorithms and Computation, {ISAAC}
  2018}, volume 123 of {\em LIPIcs}, pages 24:1--24:13. Schloss Dagstuhl -
  Leibniz-Zentrum f{\"{u}}r Informatik, 2018.

\bibitem{Cohen-AddadLN22}
Vincent Cohen{-}Addad, Euiwoong Lee, and Alantha Newman.
\newblock Correlation clustering with sherali-adams.
\newblock In {\em 63rd {IEEE} Annual Symposium on Foundations of Computer
  Science, {FOCS} 2022}, pages 651--661. {IEEE}, 2022.

\bibitem{CE_NP3}
Steven Delvaux and Leon Horsten.
\newblock On best transitive approximations to simple graphs.
\newblock {\em Acta Informatica}, 40(9):637--655, 2004.

\bibitem{Erlebach0K21}
Thomas Erlebach, Michael Hoffmann, and Frank Kammer.
\newblock On temporal graph exploration.
\newblock {\em J. Comput. Syst. Sci.}, 119:1--18, 2021.

\bibitem{few_clusters}
Fedor~V. Fomin, Stefan Kratsch, Marcin Pilipczuk, Michal Pilipczuk, and Yngve
  Villanger.
\newblock Tight bounds for parameterized complexity of cluster editing with a
  small number of clusters.
\newblock {\em J. Comput. Syst. Sci.}, 80(7):1430--1447, 2014.

\bibitem{FroeseKN22}
Vincent Froese, Leon Kellerhals, and Rolf Niedermeier.
\newblock Modification-fair cluster editing.
\newblock In {\em Thirty-Sixth {AAAI} Conference on Artificial Intelligence,
  {AAAI} 2022, Thirty-Fourth Conference on Innovative Applications of
  Artificial Intelligence, {IAAI} 2022, The Twelveth Symposium on Educational
  Advances in Artificial Intelligence, {EAAI} 2022}, pages 6631--6638. {AAAI}
  Press, 2022.

\bibitem{HaagMNR22}
Roman Haag, Hendrik Molter, Rolf Niedermeier, and Malte Renken.
\newblock Feedback edge sets in temporal graphs.
\newblock {\em Discret. Appl. Math.}, 307:65--78, 2022.

\bibitem{B-K_algorithm}
Anne{-}Sophie Himmel, Hendrik Molter, Rolf Niedermeier, and Manuel Sorge.
\newblock Adapting the bron-kerbosch algorithm for enumerating maximal cliques
  in temporal graphs.
\newblock {\em Soc. Netw. Anal. Min.}, 7(1):35:1--35:16, 2017.

\bibitem{Pace}
Leon Kellerhals, Tomohiro Koana, Andr{\'{e}} Nichterlein, and Philipp Zschoche.
\newblock The {PACE} 2021 parameterized algorithms and computational
  experiments challenge: Cluster editing.
\newblock In Petr~A. Golovach and Meirav Zehavi, editors, {\em 16th
  International Symposium on Parameterized and Exact Computation, {IPEC} 2021},
  volume 214 of {\em LIPIcs}, pages 26:1--26:18. Schloss Dagstuhl -
  Leibniz-Zentrum f{\"{u}}r Informatik, 2021.

\bibitem{loc_bound}
Christian Komusiewicz and Johannes Uhlmann.
\newblock Cluster editing with locally bounded modifications.
\newblock {\em Discret. Appl. Math.}, 160(15):2259--2270, 2012.

\bibitem{CE_NP_krivanek}
Mirko Kriv{\'{a}}nek and Jaroslav Mor{\'{a}}vek.
\newblock {NP}-hard problems in hierarchical-tree clustering.
\newblock {\em Acta Informatica}, 23(3):311--323, 1986.

\bibitem{above_modification}
Shaohua Li, Marcin Pilipczuk, and Manuel Sorge.
\newblock Cluster editing parameterized above modification-disjoint
  {$P_3$}-packings.
\newblock In Markus Bl{\"{a}}ser and Benjamin Monmege, editors, {\em 38th
  International Symposium on Theoretical Aspects of Computer Science, {STACS}
  2021}, volume 187 of {\em LIPIcs}, pages 49:1--49:16. Schloss Dagstuhl -
  Leibniz-Zentrum f{\"{u}}r Informatik, 2021.

\bibitem{dynamic}
Junjie Luo, Hendrik Molter, Andr{\'{e}} Nichterlein, and Rolf Niedermeier.
\newblock Parameterized dynamic cluster editing.
\newblock {\em Algorithmica}, 83(1):1--44, 2021.

\bibitem{unit_int}
Bassel Mannaa.
\newblock Cluster editing problem for points on the real line: {A} polynomial
  time algorithm.
\newblock {\em Inf. Process. Lett.}, 110(21):961--965, 2010.

\bibitem{MarinoS22}
Andrea Marino and Ana Silva.
\newblock Coloring temporal graphs.
\newblock {\em J. Comput. Syst. Sci.}, 123:171--185, 2022.

\bibitem{MertziosMCS13}
George~B. Mertzios, Othon Michail, Ioannis Chatzigiannakis, and Paul~G.
  Spirakis.
\newblock Temporal network optimization subject to connectivity constraints.
\newblock In Fedor~V. Fomin, Rusins Freivalds, Marta~Z. Kwiatkowska, and David
  Peleg, editors, {\em Automata, Languages, and Programming - 40th
  International Colloquium, {ICALP} 2013}, volume 7966 of {\em Lecture Notes in
  Computer Science}, pages 657--668. Springer, 2013.

\bibitem{maximum_matchings}
George~B. Mertzios, Hendrik Molter, Rolf Niedermeier, Viktor Zamaraev, and
  Philipp Zschoche.
\newblock Computing maximum matchings in temporal graphs.
\newblock In Christophe Paul and Markus Bl{\"{a}}ser, editors, {\em 37th
  International Symposium on Theoretical Aspects of Computer Science, {STACS}
  2020}, volume 154 of {\em LIPIcs}, pages 27:1--27:14. Schloss Dagstuhl -
  Leibniz-Zentrum f{\"{u}}r Informatik, 2020.

\bibitem{MertziosMZ21}
George~B. Mertzios, Hendrik Molter, and Viktor Zamaraev.
\newblock Sliding window temporal graph coloring.
\newblock {\em J. Comput. Syst. Sci.}, 120:97--115, 2021.

\bibitem{CE_NP}
Ron Shamir, Roded Sharan, and Dekel Tsur.
\newblock Cluster graph modification problems.
\newblock {\em Discret. Appl. Math.}, 144(1-2):173--182, 2004.

\bibitem{maximal_cliques}
Tiphaine Viard, Matthieu Latapy, and Cl{\'{e}}mence Magnien.
\newblock Computing maximal cliques in link streams.
\newblock {\em Theor. Comput. Sci.}, 609:245--252, 2016.

\bibitem{ZschocheFMN20}
Philipp Zschoche, Till Fluschnik, Hendrik Molter, and Rolf Niedermeier.
\newblock The complexity of finding small separators in temporal graphs.
\newblock {\em J. Comput. Syst. Sci.}, 107:72--92, 2020.

\end{thebibliography}
\bibliographystyle{plain}

\end{document}